\documentclass[sigconf,nonacm]{acmart}
\AtBeginDocument{%
  }

\copyrightyear{2026}
\acmYear{2026}
\setcopyright{cc}
\setcctype{by}
\acmConference[CIKM '26]{Proceedings of the 35th ACM International Conference on Information and Knowledge Management}{November 07--11, 2026}{Rome, Italy}
\acmBooktitle{Proceedings of the 35th ACM International Conference on Information and Knowledge Management (CIKM '26), November 07--11, 2026, Rome, Italy}
\acmDOI{10.1145/3799682.3841030}
\acmISBN{979-8-4007-2539-5/2026/11}

\usepackage{booktabs}
\usepackage{apxproof}
\usepackage{multirow}
\usepackage{xcolor-solarized}
\usepackage{cleveref}
\usepackage{paralist}
\usepackage{stmaryrd}
\usepackage{todonotes}
\usepackage[clock]{ifsym}
\usepackage[propagate-math-font=true]{siunitx}
\usepackage{multirow}
\usepackage{mathtools}
\usepackage[ruled,linesnumbered,noend]{algorithm2e}
\usepackage{tikz}
\usetikzlibrary{automata, positioning, arrows.meta}

\usepackage{pgfplots}
\usepackage{pgfplotstable}
\pgfplotsset{compat=1.18}
\usepackage{subcaption} % subfigures

\newtheoremrep{theorem}{Theorem}
\newtheoremrep{lemma}{Lemma}
\newtheoremrep{proposition}{Proposition}
\newtheoremrep{fact}{Fact}
\newtheoremrep{claim}{Claim}

\crefname{algocf}{alg.}{algs.}
\Crefname{algocf}{Algorithm}{Algorithms}

\crefname{definition}{definition}{definitions}
\Crefname{definition}{Definition}{Definitions}

\crefname{example}{example}{examples}
\Crefname{example}{Example}{Examples}

\newcommand{\mo}[1]{{#1}}
\newcommand{\bl}[1]{{#1}}
\newcommand{\co}[1]{{#1}}
\newcommand{\na}[1]{{#1}}

\newcommand{\CR}[1]{{#1}}
\newcommand{\algoName}{ARCO\xspace}
\newcommand{\algoNameLong}{Automata-based Rewriting of CN2RPQs under Ontologies\xspace}

\newcommand{\nop}[1]{}

\newcommand{\nesting}[1]{\langle #1\rangle}            % tuple

\newcommand{\A}{\mathcal{A}}

\newcommand{\I}{\mathcal{I}}
\newcommand{\J}{\mathcal{J}}

\newcommand{\G}{\mathcal{G}}

\newcommand{\T}{\mathcal{T}}
\newcommand{\node}[1]{[#1]}
\newcommand{\transclosure}[2]{#1 \sqsubseteq^*_\T #2}

\newcommand{\ISA}{\sqsubseteq}
\newcommand{\AND}{\sqcap}

\newcommand{\propnames}{\mathbf{K}}
\newcommand{\rolenames}{\mathbf{R}}
\newcommand{\conceptnames}{\mathbf{C}}
\newcommand{\indivs}{\mathbf{N}}
\newcommand{\allroles}{\overline{\rolenames}}
\newcommand{\datatests}{\mathbf{T^D}}
\newcommand{\nNFA}[1]{\alpha_{#1}}

\newcommand{\nnfa}{n-NFA}

\newcommand{\roleMath}[1]{{\color{solarized-violet}\mathit{#1}}}
\newcommand{\role}[1]{{\color{solarized-violet}\textit{#1}}}
\newcommand{\concept}[1]{{\color{solarized-green}\mathsf{#1}}}
\newcommand{\property}[1]{{\color{solarized-cyan}\textnormal{\textsc{#1}}}}

\newcommand{\ontoLang}{\textit{DL}\text{-}\textit{Lite}_\mathit{PG}}

\newcommand{\0}{\phantom{0}}
\newcommand{\skipping}[1]{\mathsf{skipCore}^\T(#1)}
\newcommand{\skippingAnon}[1]{\mathsf{skipAnon}^\T(#1)}
\newcommand{\algoMath}[1]{\mathsf{\MakeLowercase{\algoName}}^\T(#1)}
\newcommand{\nexist}{\langle\exists\rangle}

\begin{document}

%%
%% The "title" command has an optional parameter,
%% allowing the author to define a "short title" to be used in page headers.
\title{Rewriting Ontology-Mediated Property Graph Queries into GQL}

%%
%% The "author" command and its associated commands are used to define
%% the authors and their affiliations.
%% Of note is the shared affiliation of the first two authors, and the
%% "authornote" and "authornotemark" commands
%% used to denote shared contribution to the research.
\author{Bianca Löhnert}
% \authornote{Both authors contributed equally to this research.}
\email{bianca.loehnert@plus.ac.at}
\orcid{0009-0002-0297-5971}
% \author{G.K.M. Tobin}
% \authornotemark[1]
% \email{webmaster@marysville-ohio.com}
\affiliation{%
  \institution{University of Salzburg}
  %\city{Salzburg}
  \country{Austria}
}

\author{Nikolaus Augsten}
\email{nikolaus.augsten@plus.ac.at}
\orcid{0000-0002-3036-6201}
\affiliation{%
  \institution{University of Salzburg }
  %\city{Salzburg}
  \country{Austria}}

\author{Cem Okulmus}
\email{cem.okulmus@upb.de}
\orcid{0000-0002-7742-0439}
\affiliation{%
  \institution{Paderborn University }
  %\city{Paderborn}
  \country{Germany}
}

\author{Magdalena Ortiz}
\email{magdalena.ortiz@tuwien.ac.at}
\orcid{0000-0002-2344-9658}
\affiliation{%
 \institution{TU Wien }
 %\city{Vienna}
 \country{Austria}
}

%%
%% By default, the full list of authors will be used in the page
%% headers. Often, this list is too long, and will overlap
%% other information printed in the page headers. This command allows
%% the author to define a more concise list
%% of authors' names for this purpose.
% \renewcommand{\shortauthors}{Loehnert et al.}
\renewcommand{\shortauthors}{Bianca Löhnert, Nikolaus Augsten, Cem Okulmus, and Magdalena Ortiz}
%% No italics, no superscripts, not anonymous
%% Use footnote or author note to identify equal contribution, shared contribution, and/or contact author info

%%
%% The abstract is a short summary of the work to be presented in the
%% article.
\begin{abstract}
Ontology-based data access is intended for graph data, but practical support remains limited to SQL-like languages lacking the navigational and path-matching features fundamental to graph querying. Theoretical algorithms for navigational queries have long been available. Still, they have never been implemented, largely because practical graph query languages fell short of their theoretical counterparts and lacked the expressive power needed to support rewriting common ontology languages.

The recent standardisation efforts around GQL and SQL/PGQ finally allow us to overcome this barrier and present a practical query rewriting technique for ontology-mediated navigational graph queries. Ontologies are written in a DL-Lite variant tuned to property graphs, and the queries in a GQL fragment with  nested two-way regular path queries and which can be evaluated in Cypher. Preliminary experiments with our proof-of-concept prototype suggest that querying graph data with ontological knowledge may finally be within reach. 
\end{abstract}

%%
%% The code below is generated by the tool at http://dl.acm.org/ccs.cfm.
%% Please copy and paste the code instead of the example below.
%%
\begin{CCSXML}
<ccs2012>
   <concept>
       <concept_id>10003752.10003790.10003797</concept_id>
       <concept_desc>Theory of computation~Description logics</concept_desc>
       <concept_significance>500</concept_significance>
       </concept>
   <concept>
       <concept_id>10002951.10002952.10003197</concept_id>
       <concept_desc>Information systems~Query languages</concept_desc>
       <concept_significance>500</concept_significance>
       </concept>
   <concept>
       <concept_id>10002951.10002952.10002953.10010146</concept_id>
       <concept_desc>Information systems~Graph-based database models</concept_desc>
       <concept_significance>300</concept_significance>
       </concept>
 </ccs2012>
\end{CCSXML}

\ccsdesc[500]{Theory of computation~Description logics}
\ccsdesc[500]{Information systems~Query languages}
\ccsdesc[300]{Information systems~Graph-based database models}

%%
%% Keywords. The author(s) should pick words that accurately describe
%% the work being presented. Separate the keywords with commas.
\keywords{ontology-based data access, graph query languages, 
  ontology-meditated query answering, description logics}

% \received{23 May 2026}
% \received[revised]{7 August 2026}
% \received[accepted]{20 August 2026}

%%
%% This command processes the author and affiliation and title
%% information and builds the first part of the formatted document.
\maketitle

\section{Introduction}

A core aim of the \emph{ontology-based data access} (OBDA) framework, in addition to the virtual integration of data sources, is to extend existing data with domain-specific knowledge without the need to materialise all implied facts; the framework still enables access to all consequences when answering user queries.
The key question at the core of OBDA is the following: how to capture all the semantic consequences of a query in the presence of an ontology, where the query may be specified over a possibly \emph{extended} signature with terms from both the data and the ontology. Such a pair of query and ontology is often called an \emph{ontology-mediated query (OMQ)} and the task of producing all consequences is called OMQ answering (OMQA).
% given domain-specific knowledge in the form of an ontology, and a query specified over a possibly \emph{extended} signature, with terms from both the data and the ontology---such a pair is often called an \emph{ontology-mediated query (OMQ)}---the goal is to evaluate the query capturing all semantic consequences that follow in the presence of the ontology.
The main technique for OMQA is called \emph{query rewriting}: the query is rewritten into a new query that uses only the \emph{reduced} signature present in the data, incorporating the knowledge in the ontology into the query itself.   The rewritten query can then be evaluated using a conventional database system that does not support ontologies, while still making effective use of the ontology to retrieve implicit answers.

% One of the key techniques to realise OBDA is \emph{query rewriting}: given domain-specific knowledge in the form of an ontology, and a query specified over a possibly \emph{extended} signature (with terms from both the data and the ontology), the query is rewritten into a new query that uses only the \emph{reduced} signature present in the data, while still capturing all semantic consequences that follow in the presence of the ontology.
% The rewritten query can be evaluated using a conventional database system that does not support ontologies, while still making effective use of the ontology.
% This key techniques underlying many OBDA systems is known as \emph{ontology-mediated query answering} (OMQA).

While commercial OBDA systems already exist for relational databases~\cite{DBLP:conf/ijcai/XiaoCKLPRZ18}, corresponding solutions for graph databases---the focus of this paper---are still missing. Instead of the relational model, graph databases focus on the labelled property graph (LPG) model, which allows for nodes and binary, directed edges between nodes. Both nodes and edges can be assigned multiple \emph{labels} to classify them, and they can be associated with data values via so-called \emph{properties}.
%The OBDA framework is already seeing a number of commercial applications, albeit only in the setting of relational databases. 
%While it is unlikely that relational databases are being replaced anytime soon, we none-the-less see new forms of storing and querying data. Graph databases are one such example, where we already have a large number of commercial systems. Instead of the relational model that underpins relational database systems, these instead focus on the labelled property graph (LPG) model. It allows for nodes and binary  edges over nodes, each of which can be mapped to a number of labels and associated with data values via so-called \emph{properties}.
%
The increasing popularity of graph databases has recently culminated in the ISO standardisation of two new graph query languages: GQL and SQL/PGQ.
The key distinguishing feature of these query languages is their \emph{navigational} capability, which allows them to match arbitrarily long paths in the data.
Both GQL and SQL/PGQ share the same navigational core \cite{Deutsch2022}, which includes the standard navigational query languages: (two-way) regular path queries ((2)RPQs), nested 2RPQs (N2RPQs), and their conjunctive variants (C2RPQs resp.\ CN2RPQs).

The extensive study of navigational queries in the context of OMQA over the past decade has led to tight complexity results for most description logics, ranging from lightweight to highly expressive, as well as for all standard navigational query languages (2RPQs, N2RPQs, C2RPQs, and CN2RPQs) \cite{Bienvenu2015a,Bienvenu2014,DBLP:conf/ijcai/CalvaneseEO09, Ortiz2011, DBLP:conf/kr/Ostropolski-Nalewaja24, DBLP:journals/jair/StefanoniMKR14, DBLP:journals/jair/DimartinoWCP25}. 
% \todo{add query answering in horn fragments with sebastian rudolph, ijcai11, iirc. So far only self  cites. 
% There is at leat one by sebastian on undecidability of navigation queries in the finite, and there should be papers from oxford for EL variants; giorgo stefanoni afair, and maybe hector perez urbina}
%The study of navigational queries in the context of OMQA has developed is far from novel. For more than a decade we have had tight complexity results for most description logics, ranging from lightweight to highly expressive, as well as for all standard navigational languages (2RPQs, N2RPQs, C2RPQs, and CN2RPQs), e.g.,   \cite{Bienvenu2015a,Bienvenu2014,DBLP:conf/ijcai/CalvaneseEO09}.
 However, these studies focused on the boundaries of decidability and computational complexity. The proposed algorithms are so unamenable % not a typo, don't "fix"
to implementation that, more than a decade later, not a single practical implementation has been proposed.
\mo{We differ from these early \co{theoretical} works, by looking to develop \co{practical} techniques for pure rewritings of navigational queries (i.e., the ontology reasoning is compiled entirely into the rewritten query, which can be evaluated with standard query engines without any data preprocessing) that can pave the way towards practical implementations.}

There were two significant roadblocks that discouraged the development of practical techniques for OMQA over graph databases.
First, practical languages imposed ad-hoc restrictions on navigational languages that rendered them unsuitable for this setting, for example, by providing only partial support for RPQs and by failing to support the basic homomorphism semantics that is natural in the OMQA context.
%First, practical languages imposed ad-hoc restrictions on navigational languages that made them inadequate. Cypher, the query language of the most widely used graph database system Neo4j, did not support RPQs in full, and did not enable the basic homomorphism semantics that is natural in the OMQA context. 
Second, it has been shown that even plain RPQs in the presence of the simplest DL-Lite ontologies cannot be rewritten into C2RPQs \cite{DBLP:conf/esws/LöhnertAOO25}. Instead, such rewritings require \emph{nested} RPQs, which existing technologies did not support.
The arrival of GQL and SQL/PGQ has removed both obstacles at once: the standards include CN2RPQs and basic homomorphism semantics \cite{Francis2022,Francis2023}, and practical graph query languages are now rapidly being updated to support these features \cite{neo4j-gql-conformance}. This has finally made practical OMQA over graph databases feasible.

%\todo[inline]{\Cref{fig:overview_rewriting_algo} provides an overview of our rewriting algorithm, which takes a join-on-free conjunctive nested two-way regular path query as input. The first step $remove\exists Var$ removes all the unbound variables and rewrites each atom $\alpha(x,y)$ in the query into a N2RPQ $\nesting{\alpha}(x)$. This is only possible because of the join-on-free restriction. After this step, we can rewrite each of the resulting N2RPQ atoms using $skipAnonymous$ and $skipCore$. Both functions require an N2RPQ and a $\ontoLang$ ontology as input. The final output is again a join-on-free CN2RPQ that can be translated into GQL syntax and evaluated by an arbitrary GQL database system.}

\begin{figure}[h]
    \centering
    \includegraphics[width=\linewidth]{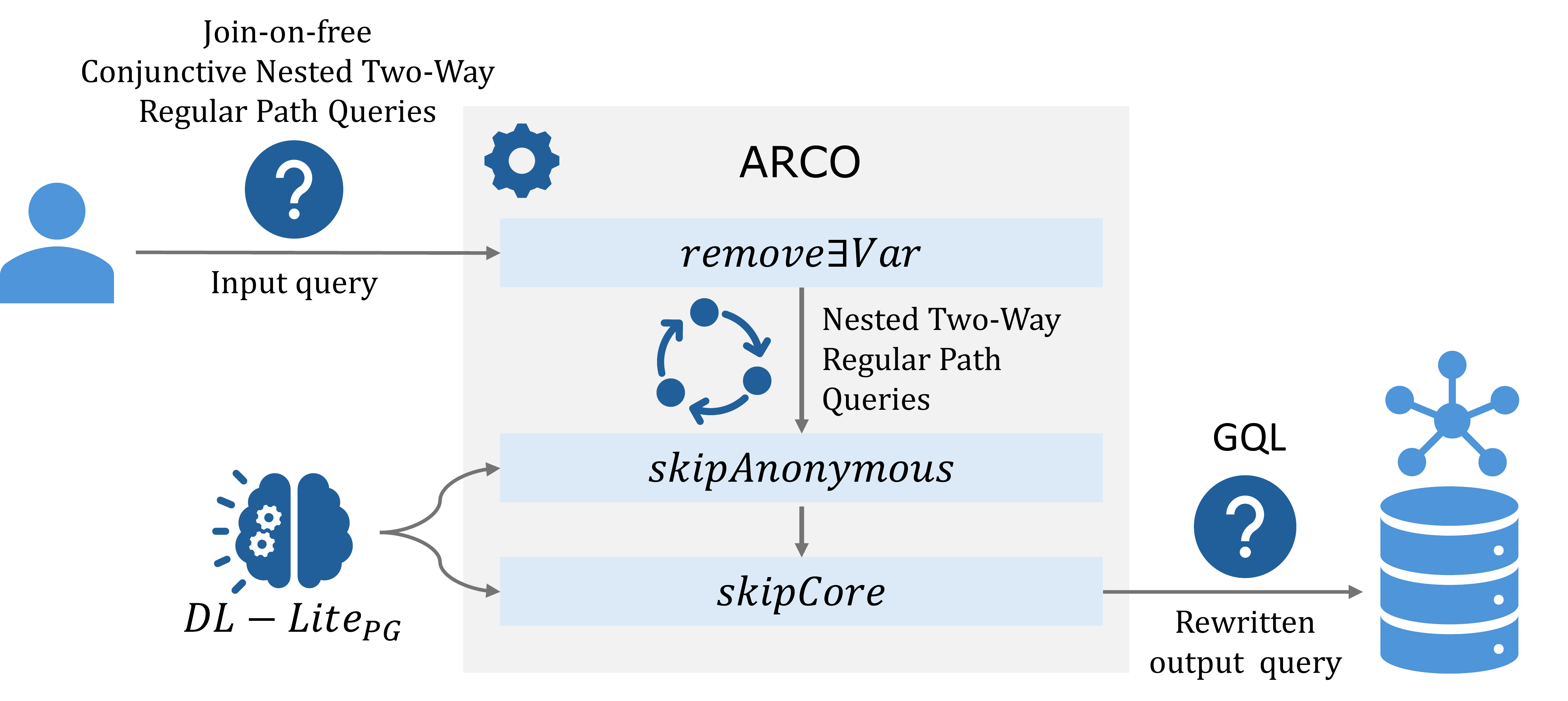}
    \caption{High-level overview of the rewriting algorithm.}
    \label{fig:overview_rewriting_algo}
    \Description{An abstract diagram that explains the structure of the rewriting procedure we present in this paper.}
\end{figure}

Seizing this opportunity, we propose the first practical query rewriting technique for a navigational query language that contains RPQs. 
%% the so-called \emph{join-on-free} CN2RPQs 
in the presence of DL-Lite ontologies. \Cref{fig:overview_rewriting_algo} provides an overview of the overall architecture, which comprises a DL-Lite ontology expressed, for example, in the Web Ontology Language (OWL)~\cite{owl2}; our rewriting algorithm \co{``\algoNameLong'' or simply \algoName for short}; and a property graph database with support for GQL, such as Neo4j,\footnote{\url{https://neo4j.com/}} Google Spanner,\footnote{\url{https://cloud.google.com/spanner}} or Oracle Database (via SQL/PGQ).\footnote{\url{https://www.oracle.com/database/}} The user poses a query over the extended signature, i.e., the query may use terms from both the ontology and the underlying database. Our rewriting algorithm compiles the user query into a semantically equivalent rewritten query that incorporates the consequences of ontology reasoning directly into the query expression. The rewritten query is subsequently evaluated by the graph database system. Although the database remains agnostic to the ontology, the answer to the rewritten query is  complete with respect to the ontology semantics.

% \todo[inline]{The three steps of the rewriting algorithm in \Cref{fig:overview_rewriting_algo} are not addressed so far.}
% \todo[inline]{Algorithm name in the figure needs to be updated!}

In terms of query language, our approach fully covers 2RPQs and N2RPQs in the presence of DL-Lite ontologies. We also consider conjunctive N2RPQs; to ensure the algorithm remains practical and useful, we impose a restriction called \emph{join-on-free}, whereby non-answer variables cannot be shared by multiple atoms. \CR{While techniques for rewriting full C2RPQs are known~\cite{Eiter2012}, this fragment serves as a useful starting point for our endeavour.}
We also support property data values, a central feature of the labelled property graph model that is often disregarded in the OMQA literature.
Property tests are also allowed along navigational paths.

On the ontology side, we define a variant of DL-Lite that can use property value tests on the left-hand-side of axioms, which enables the creation of concepts and roles based on these values.

In summary, we make the following contributions: 

    \begin{enumerate}
        \item \co{We present the first practical algorithm to rewrite navigational ontology-mediated queries with unrestricted regular path expressions, and with DL-Lite ontologies, into graph queries that can be evaluated with off-the-shelf  query engines for graph data adopting the property graph data model. 
        The queries can access the data values of properties,  also while navigating paths.}
        %%, over a DL-Lite ontology  into CN2RPQs join-on-free CN2RPQs over an.}
           % in the output \todo{what do you mean by ïn the output?} 
  \item \co{We leverage the power of graph queries with \emph{nested path navigation} to support the interaction between the regular path expressions in the given queries on the one hand, and the ontology axioms on the other.
  %%: the latter may use existential quantification to imply the existence of unnamed objects, which may participate in the navigation required by the path expressions.   
  To our knowledge, this is the first algorithm for 
a  navigational query language that fully supports property paths, which produces a   
  \emph{pure rewriting} that can be evaluated directly over the data, with the ontology consequences entirely rewritten into the query.}
    % \item This is the first work on OMQA over the LPG model that makes use of properties by supporting data tests over property values in both the ontology and the query language.
    \item We demonstrate the practical utility of our technique by implementing a prototype. The system takes as input an ontology in OWL format together with navigational queries, rewrites them into the graph query language Cypher, and evaluates them on a database in Neo4j.
    %We show-case the practical utility of the algorithm by providing a proof-of-concept implementation that takes as input an ontology in OWL format and a bespoke navigational query language 
    % real-world query language  
    %Cypher
    %and rewrites into Cypher.

    \end{enumerate}

%% COMMENTED OUT FOR SUBMISSION, SINCE IT BREAKS ANONYMITY 
% \co{This paper is a significantly extended version of a previous workshop paper~\cite{DBLP:conf/dlog/LohnertAOO25}, with extensive changes to the rewriting algorithm to address issues with respect to completeness.}

This paper is structured as follows. In \Cref{sec:preliminaries} we present the key terminology of our setting and also introduce our novel description logic to express tests over property values.
In \Cref{sec:rewrite_cof_N2RPQ} we define procedures to rewrite N2RPEs in the presence of $\ontoLang$ ontologies.
Then, in \Cref{sec:rewrite_cof_CN2RPQ} we use these procedures to present our rewriting algorithm on join-on-free CN2RPQs. In \Cref{sec:implementation} we present our proof-of-concept implementation and an experimental evaluation that aims to show its performance. In \Cref{sec:conclusion} we summarise our results and  highlight future work.

% \begin{itemize}
%     \item Motivation for OBDA in PG setting: great success story in the relational world; we see graph databases are gaining popularity, and there there is a lot of (theoretical work) to show how to rewrite graph queries -> the time has come to put it to use

%     \item need to differentiate us from last years papers: what is still missing and why is it important:

%     \item Contributions:

\paragraph{Related Work.} \footnote{\CR{This work is a corrected version of the preliminary algorithm presented in \cite{DBLP:conf/dlog/LohnertAOO25}.}}
%We note that a few recent works have begun the effort of closing this gap towards the practical OMQA algorithms in the graph database setting. Already 
Early work by Di Martino et al.~\cite{Dimartino2016,DBLP:journals/jair/DimartinoWCP25} leverages the recursion in regular path queries to rewrite a fragment of $\mathcal{EL}$, the lightweight description logic underlying the OWL~2 EL profile. They use navigational queries as target language for query rewriting, but their source languages are only instance queries and conjunctive queries; we support navigational source queries. 
The work is primarily theoretical and not aimed at implementation. It is therefore more closely related to the previously discussed theoretical studies of navigational queries in the context of OMQA~\cite{Bienvenu2015a,Bienvenu2014,DBLP:conf/ijcai/CalvaneseEO09} than to this paper.
%The work is also primarily theoretical in nature and
%not aimed at implementation, and thus more in line with the previously mentioned theoretical studies of %navigational queries in the context of OMQA~\cite{Bienvenu2015a,Bienvenu2014} than with this paper.
Aiming at practical implementation, Dragovic  et al.~\cite{DBLP:conf/dlog/DragovicO023} considered \textit{DL-Lite} ontologies and a restricted fragment of C2RPQs that can be rewritten into unions of C2RPQs. 
%Aiming for practical implementation, Dragovic  et al.~\cite{DBLP:conf/dlog/DragovicO023} considered a restricted fragment of C2RPQs that can be rewritten into {U}C2RPQs, and \textit{DL-Lite} ontologies. 
However, the approach does not cover full RPQs.
%%and it provides only limited support for data tests. 
Like Di Martino et al.~\cite{Dimartino2016}, Löhnert et al.~\cite{DBLP:conf/esws/LöhnertAOO25} support a fragment of $\mathcal{EL}$ as an ontology; their regular path expressions are restricted to ensure rewritability into C2RPQs, i.e., to avoid the need for nested queries.
\co{Query rewriting in the presence of data tests has been explored previously,  such as the work by Savkovi{\'c} and Calvanese~\cite{DBLP:conf/ecai/SavkovicC12}, which considered the problem of OMQ answering over an extension of DL-lite with expressions over concrete domains called datatype expressions, and the ability to express hierarchies over them. Baader et al.~\cite{DBLP:conf/ijcai/BaaderBL17} extended this work by an even more expressive language, where the datatype expressions over the concrete domain can use $n$-ary predicates instead of just unary ones. Our restricted use of data values does not require reasoning over the data domains, and we can simply rely on the query engine to evaluate the data atoms.}
\CR{All proofs can be found in the extended version of this paper~\cite{arxivVersion}.}

%A next step came in the work from~\cite{DBLP:conf/esws/LöhnertAOO25}, where the ontology language is also a fragment of  $\mathcal{EL}$ very similar to the one of Di Martino et al.~\cite{Dimartino2016}. There are also restrictions on the regular path expressions to ensure rewritability into C2RPQs. 
% \end{itemize}

\tikzset{%
    pics/myN/.style n args={4}{code={%  
        % Changed font to \scriptsize and rule width to 1.8cm
        \node (#1) at (0,0) [draw,text=black,font=\scriptsize, align=center,thick, rounded corners,]{ \scriptsize \textnormal{\{}  $\concept{#2}$ \textnormal{\}} \\  \rule[8pt]{1.8cm}{0.5pt} \vspace{-3mm} \\  #3 \\ \rule[8pt]{1.8cm}{0.5pt} \vspace{-3mm} \\  #4 };
    }},
   pics/myE/.style n args={6}{code={% 
       % Changed font to \scriptsize and rule width to 1.8cm
       \draw[->] (#1) to[#6]     node[midway, black, draw=black,fill=white, rounded corners, align=center,thick, font=\scriptsize, label={  \scriptsize } ] { \textnormal{\{} $\role{#3}$  \textnormal{\}} \\  \rule[8pt]{1.8cm}{0.5pt} \vspace{-3mm} \\ #4 \\ \rule[8pt]{1.8cm}{0.5pt} \vspace{-3mm} \\  \scriptsize      \textnormal{ #5 }    }  (#2); 
    }},
}

\begin{figure*}
    \centering % Centers the larger figure on the page
    \begin{tikzpicture}
    % Increased row sep from 0.5cm to 1.2cm, and column sep from 3cm to 3.8cm
    \matrix[row sep=1cm, column sep=3.8cm] {  
       \pic {myN={softEng}{Job}{Software Engineer}{$\property{starts} = $ 21.09.2025}}; &
       \pic {myN={smartB}{TechCompany}{SmartBees}{$\property{revenue} = $ 500k \\ $\property{founded}=$ 2012}}; \\
       \pic {myN={opole}{Opole}{City17}{
       $\property{population} = $ 2, \\ 
       $\property{area} = $ 3 }       
       }; &  
       \pic {myN={newComp}{Company}{nuCompany}{$\property{revenue} = $ 50k \\ $\property{founded} =$  2015}}; \\        
       \pic {myN={alice}{User}{Alice}{$\property{born} = $ 2000 }}; &
       \pic {myN={bob}{User}{Bob}{$\property{born} = $ 1980 }};\\        
    };    
    \pic {myE={smartB.south west}{opole.east}{locatedIn}{e1}{$\property{since} = $ 2012 }{}}; 
    \pic {myE={smartB.west}{softEng.east}{announce}{e2}{$\property{on} = $ 14.03.2025 }{}}; 
    \pic {myE={smartB.east}{bob.east}{employs}{e3}{$\property{since} = $ 2010 }{bend left=95, looseness=3}}; 
    \pic {myE={bob.west}{alice.east}{friendsWith}{e4}{$\property{since} = $ 2012 }{}}; 
    \pic {myE={alice.west}{softEng.south west}{viewed}{e5}{$\property{time} = $ 24.04.2025 }{bend left=80, looseness=1.7}}; 
    \pic {myE={smartB.east}{newComp.east}{owns}{e6}{$\property{since} = $ 2025 }{bend  left=90,looseness=2.2}};
    \end{tikzpicture}
    \caption{Property Graph of \Cref{ex:propertyGraph}}
    \label{fig:propertyGraphExample}
        \Description[A visualisation of a property graph.]{The plot shows various angular nodes, connected by edges that in turn have angular nodes in their middle. Each of these angular nodes is a small table, representing either a node or an edge in the property graph, with the table indicating first the labels of the node or the label of the edge, then in the next line the identifier of the object, and lastly a number of equations, were on the left-hand-side we have a property key and on the right-hand-side its respective value.}
\end{figure*}

\section{Ontology Mediated Querying of Property Graphs}
 \label{sec:preliminaries}

\textbf{Data model.}
We are interested in querying data
    %%In this paper the data is 
    given as 
    %%that is, the ABox, takes the form of a 
    finite \emph{property graphs}~\cite{DBLP:conf/amw/Angles18, DBLP:series/synthesis/2018Bonifati, DBLP:conf/sigmod/FrancisGGLLMPRS18}. 
We assume disjoint, countably infinite sets $\conceptnames$, $\rolenames$,  $\propnames$ and $\indivs$  of $\concept{\text{\emph{concept names}}}$,  $\role{\text{\emph{role names}}}$,  $\property{\text{\emph{property} keys}}$ and \emph{individuals}, respectively.  \co{In examples, we colour code the first three, as shown.}
We also assume a \emph{concrete domain} $(\mathbf{D}, \mathbf{P^D})$, with $\mathbf{D}$ a set of \emph{values} and $\mathbf{P^D}$ a set of binary predicates over $\mathbf{D}$. For example, $(\mathbf{D}, \mathbf{P^D})$ could be the integers with the usual $=$, $\leq$, $\geq$ predicates.

\begin{table*}[h]
    \centering
\setlength{\tabcolsep}{1pt}

\caption{Semantics of $\ontoLang$ concepts, roles and data tests given $\mathcal{A} = (N,E,\mathsf{label}, \mathsf{prop})$. \vspace{-4mm}}
    \label{tab:dlpg}
\begin{tabular}{ccc}
    \begin{tabular}[t]{ll}
        \toprule
        $C$ 		& $C^{\mathcal{A}} \subseteq N $  \\
        \midrule
        $\top$		& $N$ \\
        $A$ 	&  $\{ n \in N \mid A \in \mathsf{label}(n)  \}$ \\ 
        $\exists r.\top$ & $\{n \mid r(n,m) \in E \}$ \\
         $\lnot C$ 	& $N \setminus {C}^\mathcal{A} $        
        % property test       & $\delta(p)$     & $ \{ a \mid (a,v) \in p^\mathcal{I} \} \cup  \{ (a,a') \mid ((a,a'),v) \in p^\mathcal{I} \} $ \\
        %conjunction         & $C\sqcap D$   & $C^\I\cap D^\I$ \\
        % concept inclusion 	& $C \ISA D$	& $C^\mathcal{I} \subseteq D^\mathcal{I}$ & 
        %
        
        %
         % $p \odot v?$  & 
        % $\{ a \mid (a,v') \in p^\mathcal{I} \mbox{~for some $v'$ with~}v' \odot v  \} \ \cup $ & set of data tests   & $\{T_1,\dots,T_n\}$ & {$T_1^\I\cap \dots\cap T_n^\I$} 
          \\
        % & & $\{ (a,a') \mid ((a,a'),v') \in p^\mathcal{I} \mbox{~for some $v'$ with~} v' \odot v \}$ &  \\ 
        \bottomrule
    \end{tabular} & 
           \begin{tabular}[t]{ll}
        \toprule
         $F$ & $F^{\mathcal{A}} \subseteq N \times N $ \\
        \midrule
        $r$ &  $\{(n,m) \mid  r(n,m) \in E \}$	  \\
        $r^-$			&  $\{(m,n) \mid  r(n,m) \in E \}$ \\ 
        \\

        % property test       & $\delta(p)$     & $ \{ a \mid (a,v) \in p^\mathcal{I} \} \cup  \{ (a,a') \mid ((a,a'),v) \in p^\mathcal{I} \} $ \\
        %conjunction         & $C\sqcap D$   & $C^\I\cap D^\I$ \\
        % concept inclusion 	& $C \ISA D$	& $C^\mathcal{I} \subseteq D^\mathcal{I}$ & 
        %
        
        %
         % $p \odot v?$  & 
        % $\{ a \mid (a,v') \in p^\mathcal{I} \mbox{~for some $v'$ with~}v' \odot v  \} \ \cup $ & set of data tests   & $\{T_1,\dots,T_n\}$ & {$T_1^\I\cap \dots\cap T_n^\I$} 
          \\
        % & & $\{ (a,a') \mid ((a,a'),v') \in p^\mathcal{I} \mbox{~for some $v'$ with~} v' \odot v \}$ &  \\ 
        \bottomrule
    \end{tabular} &     
           \begin{tabular}[t]{ll}
        \toprule
         $T$ & $T^{\mathcal{A}} \subseteq (N \times N) \cup N $ \\
        \midrule
         $p \odot v?$  & $ \{ n \in N \mid v' \odot v \text{ where } \mathsf{prop}(n,p) = v' \} \ \cup $  \\ 
         & $\{ (n,m) \mid   v' \odot v \text{ where } \exists r \in \rolenames . \big (  \mathsf{prop}(r(n,m),p) = v' \text{ and } r(n,m) \in E \big ) \}$ \\ 
         $\{T_1,\dots,T_n\}$ & $\{T_1^\mathcal{A}\cap \dots\cap T_n^\mathcal{A} \}$ \\
        % $\{ a \mid (a,v') \in p^\mathcal{I} \mbox{~for some $v'$ with~}v' \odot v  \} \ \cup $ & set of data tests   & $\{T_1,\dots,T_n\}$ & {$T_1^\I\cap \dots\cap T_n^\I$} 
          \\
        % & & $\{ (a,a') \mid ((a,a'),v') \in p^\mathcal{I} \mbox{~for some $v'$ with~} v' \odot v \}$ &  \\ 
        \bottomrule
    \end{tabular}
    
\end{tabular}
\end{table*}

%%We introduce the query and ontology languages we focus on, as well as the \emph{property graphs}.  
% $\ontoLang$, that can be rewritten into the query language introduced later in this section. Further, we define knowledge bases that are a pair of a TBox and an ABox. In this work we consider $\ontoLang$ TBoxes and ABoxes in form of property graphs. We do not include qualified inverses, since Calvanese et al. show in \cite{Calvanese2013} that conjunction on the left-hand side of a concept inclusion can be encoded if the ontology language allows inverse roles and existential restriction on both sides of concept inclusions (see Example \ref{ex:hiddenConjunction}). Therefore, by adding qualified inverses to the language of $\ontoLang$ the data complexity increases from \NL to \PTime and can not be fully rewritten into existing graph query languages anymore. 

	\begin{definition}[Property Graphs]
		A \emph{property graph} (PG) $\mathcal{A}$ has the form $(N,E,\mathsf{label},\mathsf{prop})$, where: 
		\begin{compactitem}    
			\item $N$ is a non-empty set of \emph{nodes}; %a subset of $N$ is defined as the \emph{individuals};
			\item $E$ is the set of \emph{edges}, where each edge is a triple $(r,n,n')$ with $r\in \rolenames$ and $n,n' \in N$; we may write such an edge in the form $r(n,n')$ and call it an \emph{$r$-edge}; 
            %%We with $n, n' \in N$ and ;
			\item $\mathsf{label}$ is a total function $N \rightarrow 2^{\conceptnames}$; 
            %% with $2^{\conceptnames}$ the power set over the set of concepts $\conceptnames$; and
			\item $\mathsf{prop}$ is a partial function $(N\cup E) \times \propnames \rightarrow \mathbf{D}$ mapping pairs $(u,p)$ with $u \in (N\cup E)$ and $p\in \propnames$ to a value in
            $\mathbf{D}$.
		\end{compactitem}
    If %%$\A$ is finite and 
    $N \subseteq \indivs$ and it is finite, %%all nodes are individuals, 
    we may call $\A$ an \emph{ABox}. 
    %% A pair of a TBox $\T$ and an ABox $\A$ is called a \emph{knowledge base}. 
    We say that $\A' = (N', E',\mathsf{label}',\mathsf{prop}')$
    is a \emph{subgraph} of $\A$
        % if  $N' \subseteq N$, $E' \subseteq E$ and $\mathsf{label}$ (resp. $\mathsf{prop}$) agrees with all assignments of $\mathsf{label}'$ (resp. $\mathsf{prop}'$).}
    if $N' \subseteq N$, $E' \subseteq E$,  $ \mathsf{label}'(n) = \mathsf{label}(n) $ for all $n \in N'$, and $\mathsf{prop}'(u,p) = \mathsf{prop}(u,p) $ for all $u \in N' \cup E'$, $p \in \propnames$.

	\end{definition}
    % \todo[author=\bf R3,color=green!20]{K and D are not defined before being used in the definition of prop.}
    % \todo[author=\textbf{R4},color=green!20]{ $E$ is stated to be a multiset, however it is used as a function mapping roles to binary relations. Set $K$ is undefined. The datatype domain $\mathbb{D}$ is defined locally, but also used later-on. } 

Note that
% that we allow 
% property-value pairs only in the ABox, and that
%%in our definition 
we use the same name of property keys %%is used 
for both nodes and edges, as usually done in PGs~\cite{DBLP:series/synthesis/2018Bonifati, DBLP:conf/sigmod/FrancisGGLLMPRS18}. We differ slightly from the usual definition of PGs, by allowing only a single edge for each role between pairs of nodes.
\CR{This puts our data model in between the usual DL setting and what real-world PG database systems like Neo4j~\cite{DBLP:conf/sigmod/FrancisGGLLMPRS18} support, where multiple edges with the same role between the same pair of nodes are allowed.}
% This restriction matches the data  model of Neo4j, the most popular PG database system \cite{DBLP:conf/sigmod/FrancisGGLLMPRS18}.}

% \todo[inline]{R1.7. I am not sure about the data model but I am pretty sure Neo4J allows to have more that one edge of the same type between the same nodes. So I do not understand "by allowing only a single edge for each role between pairs of nodes. This restriction matches the data model of Neo4j, the most popular PG database system" \\
% R3.3. In Section 2, the data model's "single edge for each role" restriction is explicitly justified by Neo4j compatibility, but the paper does not discuss what is lost in the multigraph case (e.g., two distinct "employs" relationships between the same company and person with different time periods), which is a realistic property-graph scenario.}

% of PGs allows only a single edge for each role between each pair of nodes. Also,  the same name of property keys is used for both nodes and edges, as usually done in PGs~\cite{DBLP:series/synthesis/2018Bonifati, DBLP:conf/sigmod/FrancisGGLLMPRS18}.  

\begin{example}\label[example]{ex:propertyGraph}
\Cref{fig:propertyGraphExample} shows profile information from a social network (such as LinkedIn), represented as a PG. 

\end{example}

\textbf{Ontology language.}
In ontology mediated query answering, an ontology  captures domain knowledge and it allows us to infer implicit facts from the given data at query time. 
Here, the ontology language is called $\ontoLang$. It extends the well-known DL-Lite~\cite{Calvanese2007} with tests over property values, which can be used in the left-hand-side both concept and role inclusions. 

% We recall the definition of $\ontoLang$, a well-known light-weight description logic.
% subsuming the popular lightweight languages DL-Lite$_\R$ and $\mathcal{ELH}$;
% \footnote{For space reasons we omit disjointness axioms, but they can be easily incorporated.} 
% in the next sections we restrict our attention to a fragment of it. For space reasons we omit disjointness axioms, but they can be easily incorporated.
% We also recall the usual \emph{normal form} for $\mathcal{ELHI}$ TBoxes; the proof that every TBox can be normalized (in linear time) while preserving the semantics is standard.
%%reasoning services considered in this paper. }

% extend it with some of the expressiveness of $\E\L$, while keeping data complexity %% of query answering 
% in \NL.

%, or the strings with operators like $mathit{prefix}$ and $\mathit{substring}$.}
\begin{definition}[$\ontoLang$] \label[definition]{def:ontoLang} 

 The set of \emph{roles} is defined as $\allroles=\rolenames \cup \{r^-\mid r\in \rolenames \}$, \co{ where $r^-$ is called the \emph{inverse of $r$}} and the set of \emph{data tests}
%over a key w.r.t. to a  data value and predicate we denote as
as $\datatests = \{ p \odot v \ ? \mid p \in \propnames, \odot \in \mathbf{P^D}, v \in \mathbf{D} \}$.  

    %%To define concept and role inclusions, 
    We use the following syntax,    where $A\in \conceptnames$, $r\in \allroles$ and $T \subseteq \datatests$. 
    \begin{align*}
         B &:= \top \mid A \mid \exists r.\top \mid T & \text{ (basic concepts, data-test concepts)} \\
         E &:= r \mid r^-  & \text{ (basic roles)} \\ 
         C &:= B\mid \neg B & \text{ (general concept)} \\
         F &:= E \mid T & \text{ (roles, data-test roles)} \\
         R &:= E \mid \neg E & \text{ (general role)} 
    \end{align*}
   \co{The symbol $\top$, called `top concept', refers to the set of all nodes of the domain in question.}
    A \emph{concept inclusion} (CI) has the form $B \ISA C$ and  a \emph{role inclusion} (RI) the form $F\ISA R$.  
    % , and a \emph{property inclusion} (PI) of the form $P \ISA V $. %%\in \rolenames$, 
       A TBox is a finite set of CIs and RIs. 
%           %or $r\ISA \lnot s$ 
    % %A \emph{transitivity axiom} is an expression $\textsf{trans}(r)$, where $r$ is a role.
%       % \todo[inline]{Does this normal form still make sense?}\vspace{-8mm}
%       \begin{align*}	
%            \ A &\ISA  B 
%              &\ \exists r.\top &\ISA B
%              &\ p \odot v &\ISA B 
%              &\ p \odot v &\ISA r 
%             &\  r &\ISA s   \\  
%            \ A &\ISA \lnot B 
%              &\ \exists r.\top &\ISA \neg B  
%              &\ p \odot v &\ISA \neg B        
%              &\ p \odot v &\ISA \neg r       
%             &\  r &\ISA \neg s   \\  
%            A &\ISA \exists r.X 
%              &\hphantom{\ \exists r.\top} &\hphantom{\ISA \exists r.X }
%              &\ p \odot v &\ISA \exists r.X  
    % \end{align*}    
%       where $X \in \{\top\} \cup \{B\}$, $A,B \in \conceptnames$, $r,s \in \allroles$ and $p,p' \in \propnames$.
We use $\transclosure{}{}$ to denote the reflexive transitive closure of $\{(r,t)\mid r\ISA t\in \T\}$ and call $r$ a \emph{subrole of $t$ (in $\T$)} if $\transclosure{r}{t}$.
\end{definition}

% in the input graph can be done using the standard DL-Lite algorithms~\cite{Calvanese2007}

%% All the algorithms given in this paper assume that we work over consistent ontologies.

%% We say $\I$ is a \emph{model} of $\mathcal{T}$ if $\I$ satisfies every axiom in $\T$.
%%\todo{define notion of entailment and complex concept}

% \todo[inline]{consistency checks (equivalent to \cite{Calvanese2007}?)}

\co{The semantics of concepts, roles and data sets is defined with respect to a given PG $\mathcal{A}$. In the description logic (DL) literature, this is usually defined by a so-called interpretation of the form $\I=(\varDelta^\I,\cdot^\I)$, with $\varDelta^\I$ a non-empty set called the \emph{abstract domain} and $\cdot^\I$ as the interpretation function. In this paper, we consider a given PG to be an interpretation in the DL sense. Hence, given a PG $\mathcal{A}$, by slight abuse of notation, we may also use it as an interpretation function that maps concepts, roles and data sets to, resp., sets of nodes, pairs of nodes or a union of the two. This function is given in \Cref{tab:dlpg}. 
In the sequel, when the PG $\mathcal{A} = (N,E,\mathsf{label}, \mathsf{prop})$ is clear from context, we shall refer to it as ``interpretation'' and, for example, refer to its set of nodes $N$ as its domain $\varDelta^\mathcal{A}$.
% We do this to better connect this work to the wider literature where the term ``interpretation' is more common. 
% It should be clear that the underlying object is just a property graph.
} 
% The semantics is usually defined via \emph{interpretations} of the form $\I=(\varDelta^\I,\cdot^\I)$, with $\varDelta^\I$ a non-empty set called the \emph{abstract domain}. $\cdot ^\I$ is the \emph{interpretation function}, which assigns to every $A\in \conceptnames$ a set $A^\I\subseteq \varDelta^\I$, to every $r\in \rolenames$ a relation $r^\I\subseteq \varDelta^\I\times\varDelta^\I$ and to every $p \in \propnames$ a relation $p^\I \subseteq (\varDelta^\I \cup ( \varDelta^\I \times \varDelta^\I  )  ) \times \mathbf{D}$.

% As usual, the semantics is given in terms of \emph{interpretations} $\I=(\varDelta^\I,\cdot^\I)$ with $\varDelta^\I$ a non-empty set called the \emph{domain} and $\cdot ^\I$ is the \emph{interpretation function}, which assigns to every $A\in \conceptnames$ a set $A^\I\subseteq \varDelta^\I$, and to every $r\in \rolenames$ a relation $r^\I\subseteq \varDelta\times\varDelta$. 
% It is extended to concepts, roles and data tests as in \Cref{tab:dlpg}. 
\co{As for the rest of the semantics, the }\mo{satisfaction of CIs and RIs is as expected: 
$\I \models C \sqsubseteq D$ iff $C^\mathcal{I} \subseteq D^\mathcal{I}$, and $\I \models E \sqsubseteq F$ iff $E^\mathcal{I} \subseteq E^\mathcal{I}$.} 
\co{We write $\I \models \T$ iff $\I \models C \sqsubseteq D$ for every $ C \sqsubseteq D \in \T$.  We write $\T\! \models  C \sqsubseteq D $ iff for every PG $\I$ with $\I \models \T\!$ we also have $\I \models  C \sqsubseteq D$.}

\co{
   For readabilty, in the examples we use CIs of the form
   $\concept{A} \ISA\exists \role{r}.\concept{B}$,
   %%$\concept{TechCompany}\ISA \exists\role{emp}.\concept{Engineer}$,
   which are not strictly in $\ontoLang$. These are just syntactic sugar: 
   we can write $\concept{A} \ISA\exists \role{r}.\concept{B} $ in $\ontoLang$
   %%%cannot directly express CIs like $\concept{A} \ISA\exists \role{r}.\concept{B} $, but it can be expressed by 
    using a fresh role name $\roleMath{r_B}$ and three axioms: 
    $\concept{A} \ISA\exists \roleMath{r_B}.\top $, 
    $\exists \roleMath{r_B}^-.\top \ISA B$, 
    $\roleMath{r_B} \ISA \roleMath{r}$.}

\begin{example} \label[example]{ex:soc_net}
% \todo{MO: I moved the example after the semantics}  
The following $\ontoLang$ TBox provides knowledge that can be applied to the social network graph of \Cref{ex:propertyGraph}: 
% \todo{explain axioms. Format in a more readable way? }
{ \small
\begin{align*}
    \T = \   
    \left \{ 
    \begin{aligned}
        & \{\property{born}\ge \text{1997} \ ?, \property{born}\le \text{2012} \ ? \} \ISA\concept{GenZ}, \\
        &  \exists\role{emp}^-.\top\ISA \concept{Employed},         \{\property{time}\ge(\text{2025-01-01}) ?\}\ISA \concept{Recent}, \\ 
        & \concept{Employed} \ISA\lnot\concept{Unemployed}, \concept{Opole}\ISA \concept{Poland}, \exists\role{ann}.\top\ISA\concept{Hiring} \\ 
        & \concept{TechCompany}\ISA \exists\role{emp}.\concept{Engineer},         
        \role{friend}\ISA\role{friend.}^-
    \end{aligned}    
    \right \}
\end{align*}
}
\co{ To briefly explain a few of the axioms, we see that the first axiom in $\T$ defines the new concept $\concept{GenZ}$, which applies to anyone born between the years 1997 and 2012. The second axiom says that someone who has an employer is employed, and the fourth that no one can be both employed and unemployed.}
%%$\T$ also informs us that indeed the concept $\concept{Opole}$ is part of the concept $\concept{Poland}$.}
%%    To improve readability, we thus use the former CI as a shorthand for the latter three.}
% { \small
%         $\T= \big \{ \ \{\property{born}\ge \text{1997} \ ?$,
%         $\property{born}\le \text{2012} \ ? \} \ISA\concept{GenZ}$, 
%         $\exists\role{employs}^-.\top\ISA \concept{Employed}$, $\{\property{time}\ge(\text{2025-01-01}) ?\}\ISA \concept{Recent}$,$ \concept{Employed} \ISA\lnot\concept{Unemployed}$, 
%         $\concept{Opole}\ISA \concept{Poland}$,  $\concept{TechCompany}\ISA \exists\role{employs}.\concept{Engineer}$, 
%         $\exists\role{announce}.\top\ISA\concept{Hiring}$,
%         $\role{friendsWith}\ISA\role{friendsWith}^-\big \}$
%     }
\end{example}

{For the problem of checking whether a given PG is consistent with an $\ontoLang$ TBox, we refer to 
% the work of 
Artale et al.~\cite{DBLP:conf/ecai/ArtaleRK12}; their algorithms can be adapted to account for our restricted use of property values.}

\medskip\noindent
    \textbf{Query Language.}
    We study \emph{conjunctive nested two-way regular path queries} (CN2RPQs), the extension of C2RPQs, the navigational query language for graphs that has received most attention in OMQA. We enhance CN2RPQs by \emph{data tests} similar to data tests for navigational conjunctive queries in \cite{DBLP:conf/dlog/DragovicO023} to query for property values, assuming that the predicates in $\mathbf{P^D}$ can be realized in GQL and Cypher. To represent CN2RPQs we rely on \emph{nested nondeterministic finite automaton} (\nnfa), following generally the notions from~\cite{Bienvenu2014}.
    % , with some simplifications. 

    We first define nested two-way regular path expressions using NFAs, and then define conjunctive nested two-way regular path queries based on them. 
We assume a given alphabet $\Sigma$, and to define nested automata, we extend it by allowing n-NFAs as symbols. 
    % Where convenient we use the N2RPE representation for n-NFAs or the notational short form $\mathbf{A}_{s_1, F_1}$ for an $n$-NFA $(\mathbf{A}, s_1, F_1)$, following \cite{Bienvenu2014}.

    {
\begin{definition} \label[definition]{def:cn2rpqNFA:Bart}
        Let $\mathbf{A}^0$ be the set of all NFAs over a given alphabet ${\Sigma}$. For $k > 0$, the set $\mathbf{A}^k$ of \emph{nested NFA} (\nnfa) over $\Sigma$ of nesting depth $k$ contains all automata over ${\Sigma}^k =  {\Sigma} \cup \{ \nesting{\alpha} \mid \alpha \in \mathbf{A}^i, 1 \leq i < k\}$. 
        %% We write $\mathbf{A}$ for the set $\bigcup_{k \geq 0} \mathbf{A}^k$ of all n-NFAs.  
\end{definition}

We omit the nesting depth of an \nnfa~when it is irrelevant. 
We are interested in \nnfa{s} that use the alphabet of symbols and tests that may occur in our PGs. 
    
    \begin{definition}
    A \emph{nested two-way regular path expression} (N2RPE) is an \nnfa~over the specific alphabet  
    
    \[ {\mathbf{\Sigma}_{PG}} =  2^{\allroles \cup \datatests} \cup \datatests \cup \{ \mathrm{C}? \mid \mathrm{C} \in \conceptnames \} .
        \]
    %%This alphabet 
    $\mathbf{\Sigma}_{PG}$ consists of three kinds of symbols: 1) sets of (possibly inverted) role labels and data tests, which allow the expression to perform a series of data tests while traversing an edge that has  all the role labels in the test; 2) individual data tests, which are checked against the properties of a node and lastly, 3) concept tests, which check whether the current node is part of the required concept. 
\end{definition}

\mo{Sets of binary (data and role) tests allow us to do several tests while traversing one edge. In contrast, sets of unary (data and concept) tests do not need to be performed simultaneously; they can be tested equivalently by consecutive transitions.}

\mo{We can now use N2RPEs as a query language for PGs. We define the semantics of a 2NRPE $\alpha$ via paths %%in the interpretation 
that match $\alpha$, possibly with outgoing paths to match the nested automata.}

\begin{definition}[Run of an N2RPEs]\label[definition]{def:runN2RPE}
    Let $\alpha_0 = \langle Q_0, \mathbf{\Sigma}_{PG},S_0,\delta_0, F_0 \rangle $ be an N2PRE of nesting depth $k$ and let $\alpha_1, \ldots, \alpha_m$ be the   N2PREs of nesting depth $< k$ mentioned in the transitions of $\alpha_0$. 
    We let $\alpha_i = \langle Q_i, \mathbf{\Sigma}_{PG}, S_i, \delta_i, F_i \rangle$ for each $\alpha_i$.
    Then, a \emph{run} for $\alpha$ on $\I$ is a finite node-labelled tree $(T,l)$ such that every node is labelled with an element from $\Delta^\I\times \bigcup_i Q_i$.
    Furthermore, for the root $r_T$ of $T$ we have that $l(r_T) = (n_r,s_r)$, where $s_r \in S_0$ and for every leaf node $e \in N(T)$, it holds that $l(e) = (n,s_n)$, with $s_n \in F_i$ for some $i$ and there exists exactly one leaf $n'$ such that $l(e') = (n',s')$, with $s' \in F_0$, called the \emph{accepting node}.
    For each non-leaf node $v \in N(T)$ having label $l(v)=(n_v,s_v)$ with $s_v \in Q_i$, one of the following holds:
    \begin{compactitem}[$\bullet$ \hspace{-0.3em}]
        \item $v$ has a unique child $v_c$ with $l(v_c)=(n_c,s_c)$, and there exists $(s_v,\sigma,s_c)\in\delta_i$, such that one of the following holds:
        % such that $\sigma\in \rolenames $ and $(n_v,n_c)\in\sigma^\I$

        \begin{itemize}
            \item If $\sigma \in 2^{\allroles \cup  \datatests} $ then:\\ \emph{(i)}  $ (n_v,n_c) \in  (p \odot v?)^\I$ for each data test $p \odot v? \in \sigma^\I$, and \\ \emph{(ii)} $(n_v, n_c) \in r^\I$ for each $r \in \sigma^\I \cap \allroles$;
            \item if $\sigma = \mathrm{C}?$ , then $n_c = n_v$ and $n_v \in \mathrm{C}^\I$, and 
            \item if $ \sigma = p \odot v?$, then $n_v = n_c$ and $n_v \in (p \odot v?)^\I$.
        \end{itemize}
        
        \item $v$ has exactly two children $v_{c_1}$ and $v_{c_2}$ with $l(v_{c_1})=(n_v,s_{c_1})$ and $l(v_{c_2})=(n_v,s_{c_2})$, with $s_{c_2}$ the initial state of $\alpha_j$, and there exists a transition $(s_v,\nesting{\alpha_j},s_{c_1})\in\delta_i$.
    \end{compactitem}
    If the root $r_T$ is labelled $(n_r,s_0)$ and the accepting node $u$ is labelled $(n_u,s_u)$ with $s_0 \in S_0$ and $s_u \in F_0$, then $(T,l)$ is called an $(n_r,n_u)$-run, 
    \mo{and if such a run exists, we write $(n_r,n_u)\in \alpha_0^{\I}$.} 
    %%Note that $\alpha_0^{\I}$ defines a binary relation over the nodes of $\I$.}
    % \todo[inline]{we are still missing the notation of a pair being in the query answer, right?}
\end{definition}

N2RPEs are a very rich query language in their own right, but we get even more flexible querying if we join N2RPEs using variables.

\begin{definition}    
 A \emph{conjunctive N2RPQ} (CN2RPQ) $q(\vec{x})$ is a conjunction of \emph{atoms} $\nNFA{1}(x_1,y_1) \land \cdots\nNFA{i}(x_i,y_i)\cdots\land \nNFA{n}(x_n,y_n)$ with each $\nNFA{i}$ an N2RPE and $\vec{x} \subseteq 
 \bigcup_i^n \{x_i,y_i\}$ is a tuple of \emph{answer variables}.  
 We call a CN2RPQ \emph{Boolean} if $\vec{x}$ is empty. 

The set of \emph{join variables} $\mathit{Join}(q(\vec{x}))$ of  
a CN2RPQ $q(\vec{x})$ is the set of those variables that occur in two different atoms of $q(\vec{x})$. 
If $\mathit{Join}(q(\vec{x})) \subseteq \vec{x}$ for a CN2RPQ that is not Boolean, then we call $q(\vec{x})$ a \emph{join-on-free} CN2RPQ. 
\end{definition}

% As in~\cite{Bienvenu2014}, we have that for every N2RPE $\pi$, one can construct in polynomial time an $n$-NFA $A_{s_0, F_0}$ such that $\pi^\I =  \mathbf{A}^\I_{s_0,F_0}$ for every interpretation $\I$. 

We may shorten $\nesting{\alpha}(x,x)$ atoms as $\nesting{\alpha}(x)$, and 
$A?(x)$ as $A(x)$. 
%%It will be convenient  to 
We assume that CN2RPQs are connected, that is, the graph whose vertices are the variables and that has an edge between two variables if they occur in the same atom is a connected graph. Disconnected queries can be answered as separate queries and then their answers intersected. 
Note that, under this assumption, every atom in a join-on-free 
CN2RPQ has at least one answer variable. 
% \todo[inline]{R1.6. The restriction to join-on-free queries is mentioned early in the paper, but its impact on real-world use cases is only briefly defended in the conclusion. A stronger justification for this choice should be better discussed.}

\begin{definition}[Semantics of CN2RPQs]
    Consider a CN2RPQs $q(\vec{x})$ and 
    a PG $\A=(N,E,\mathsf{label, prop})$.
 A \emph{match} $\mu$ for $q(\vec{x})$ in $\A$ is a mapping from the variables in $q$ to nodes in $N$ such that $(\mu(x),\mu(y)) \in\alpha^{\I_{\A}}$ for every atom $\alpha(x,y)$ occurring as a conjunct in $q(\vec{x})$. 
 The tuple $\mu(\vec{x})$ is then called an \emph{answer} for $q(\vec{x})$ in $\A$. 
\end{definition}

\begin{example} \label[example]{ex:query}
    To illustrate our query language, we use the schema introduced in \Cref{ex:soc_net} and present four example queries, with informal explanations of their semantics following below.     
 { \small
    \begin{align*}
        q_1(x,y) \coloneqq \ 
        % \begin{aligned}
        & \role{friendsWith}^*\cdot\role{employs}^-(x,y), \\  &\nesting{\role{locatedIn}\cdot\concept{Poland}?}\cdot\concept{Hiring}? \cdot \concept{TechCompany}?(y,z) 
        % \end{aligned}
        \\        
        % q_1'(y)&:=\role{friendsWith}^*\cdot(\role{employs}^-\cdot\nesting{\role{locatedIn}\cdot\concept{Poland}}\cdot \role{foundedBy})^*\cdot\concept{Hiring}?\cdot \concept{TechCompany}?(x,y) \\
        q_2(x,y) \coloneqq \ 
        % \begin{aligned}
        & \{\role{employs}^-,\property{since}\le{\scriptsize 2023}?\}\!\cdot\!\role{friendsWith}^*\!\cdot\! \property{born}\!\ge\! 
        {\scriptsize 2000}?(x,y), \\ 
        &\nesting{\{\role{viewed},\property{time}\ge 01.01.2024?\}}\cdot\role{friendsWith}(z,y) 
        % \end{aligned}
        \\ 
        q_3(x,y) \coloneqq \  
        % \begin{aligned}
                  & \property{founded} \leq 2003 ? \cdot  \property{revenue} \geq 10^5 ? \cdot  \\ 
                  &(\{ \role{owns}, \property{since} \geq 2020 ?\} \cdot \nesting{\role{locatedIn} \cdot \concept{Poland}?})^*(x,y)   
        % \end{aligned}
        \\
        q_4(x,y) \coloneqq \  
        % \begin{aligned}
        &\concept{Company}? \cdot \role{employs} \cdot \role{friendsWith} \cdot \concept{GenZ}?(x,y),\\   
        &\role{viewed} \cdot \concept{Job}? \cdot \role{announce}^- (y,x) 
        % \end{aligned}
    \end{align*}
    }
    The query $q_1(x,y)$ retrieves all technical companies located in Poland that are currently hiring and that have an employee who is a friend of friends of the user.
    % , e.g., as a recommendation for someone looking for a position. 
    In query $q_2(x,y)$ a recruiter might be interested in friends (and friends of friends) born after 2000 of an employee, who has been working for the company for over three years. In query $q_3(x,y)$, we look for companies that were founded before 2003, with a revenue of $10^5$ or more, and want to find all Polish companies that they own since 2020, and also look for all companies that these companies own, and so on. 
    % Nesting is used to restrict the search to companies located in Poland. 
    In query $q_4(x,y)$, we look for all companies that employ workers that are friends with some user of GenZ, and return the pairs of company and users, when the user also viewed a job offer by the company that the friend is currently employed.
    % Nesting allows us to check for the friend, while retaining the employee as the target node.  
    %\todo{Explanation of the queries needs to be updated, e.g., "founded before 2003" is not in query $q_3$. Cem: restored the query. This also matches how q3 is presented in the experiments.}
   
\end{example}
% The query $q(x)$  retrieves all
% datasets from an MRI with a certain specification and data on ambidextrous participants:
% \begin{align*}
% q(x):=&\mathsf{\langle Dataset\rangle(x) }\land \mathsf{\{Manufacturer="SIEMENS"}\land \mathsf{MagnetFieldStrength} \ge 3\}(x)\\ 
% &\land \mathsf{has}^*(x,y) \land \mathsf{\langle Participant\rangle}(y)\land \mathsf{\{Handedness="ambidextrous"\}}(y)
% \end{align*}
% We use here a concrete domain that contains integers and 
% data tests on the properties $\property{born}$, $\property{since}$, $\property{founded}$ and $\property{revenue}$. 
% The Kleene star is the distinctive navigational feature of graph query languages, absent from any FO-rewritable query language; it lets us explore paths of unbounded length across the property graph.

Following the OMQA 
literature, we use the \emph{homomorphism} or \emph{walk} semantics for path queries~\cite{Angles2017}, and in the presence of a TBox, adopt the \emph{certain answer} semantics.  
\begin{definition}\label[definition]{def:certainAnswer}
Consider an PG $\A$ and a TBox $\T$. A 
tuple $\vec{a}$ of individuals in $\A$ is called a \emph{certain answer} to $q(\vec{x})$ over $(\T,\A)$ if it is an answer to $q(\vec{x})$ in $\mathit{PG}(\I)$ for every model $\I$ of $\A$ and $\T$.  
\end{definition}

The worst-case complexity of evaluating navigational queries over ontologies is well understood. Unfortunately, when we take the ontology's size and the query into account, it is highly intractable. 

\begin{theorem}[Theorem 5.1 \bl{\& Corollary 6.8} ~\cite{Bienvenu2014}]
    In DL-Lite, N2RPQ answering is \bl{\textsc{NL}-complete in data complexity and} \textsc{Exp}-complete in combined complexity. 
    In all description logics containing $\mathcal{EL}$ and contained in Horn-$\mathcal{SHIQ}$, N2RPQ answering is \bl{\textsc{P}-complete in data complexity and} \textsc{Exp}-complete in combined complexity. 
    % \todo[inline]{and NL in data , right? same Thm? BL: for EL the data complexity is not NL.\\ MO: Of course! I meant DL-Lite, sorry. We could also mention P hardness in data for EL  (same claim or another?) BL: so Thm 5.1. is only about the Exp-c, Corollary 6.8. about the NL, I'll try to find the Thm for EL data complx. UPDATE: in Table 1 of the paper we would have the summary of all the complexities, refer to that?}
\end{theorem}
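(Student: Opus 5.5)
This result is quoted from Bienvenu et al.~\cite{Bienvenu2014}, so the plan is to reconstruct their argument. It has four parts: an upper bound for Horn-$\mathcal{SHIQ}$ via an automata-theoretic reduction, a specialisation of that bound for DL-Lite, and matching lower bounds for both data and combined complexity.

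\emph{Combined-complexity upper bound.} We fix a Horn-$\mathcal{SHIQ}$ TBox $\T$, an ABox $\A$ and an N2RPQ given by an \nnfa{} $\alpha$. We then use the canonical model of $(\T,\A)$. It consists of the ABox part plus anonymous tree-shaped parts hanging off individuals, where each anonymous element is determined up to isomorphism by its type. Certain answers coincide with answers in the canonical model, so it suffices to evaluate $\alpha$ there.

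The key step is to summarise the anonymous trees by \emph{loop} information. For each type $\tau$ and each pair of states $(s,s')$ of any sub-automaton of $\alpha$, we record whether $\alpha$ can move from $s$ to $s'$ while starting and ending at the same element of type $\tau$ and staying inside the tree below it. These facts are computed by a fixpoint over types. Nested subautomata are handled bottom-up by nesting depth, which determines which concepts $\nesting{\alpha_j}$ hold at each type. There are exponentially many types and polynomially many state pairs, so the fixpoint runs in exponential time. Answering then reduces to reachability in the finite product of $\A$ with the states, with loops added as shortcuts.

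\emph{Data-complexity upper bounds.} For fixed $\T$ and $\alpha$, the loop table is constant. For DL-Lite we first precompute the nested concepts $\nesting{\alpha_j}$ on individuals, innermost first; each level is a reachability question. Along with the loops, these yield a reachability instance on $\A\times Q$. Because the nesting depth is constant, this composes to a constant-depth hierarchy of NL computations, which stays in NL since NL is closed under complement.

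For Horn logics between $\mathcal{EL}$ and Horn-$\mathcal{SHIQ}$, computing the entailed concepts on individuals already takes polynomial time, and the rest of the procedure is polynomial, giving P.

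\emph{Lower bounds.} NL-hardness for DL-Lite already holds without any TBox: reachability is expressed by the RPQ $r^*$. P-hardness in data complexity for $\mathcal{EL}$ is the known hardness of instance checking, which follows from a reduction from path systems accessibility using axioms such as $\exists r.A\sqcap\exists s.A\ISA A$.

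For EXP-hardness in combined complexity, even for DL-Lite, we encode an alternating polynomial-space Turing machine, or equivalently the acceptance problem for a two-way alternating tree automaton. The TBox generates the full (binary) tree of configurations in the anonymous part, using existential axioms that branch over successor choices. The nested query then checks, via $\nesting{\cdot}$ tests, that a universal configuration has all its successors accepting and an existential one has some successor accepting; tape consistency is checked by nested navigation through the tree.

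\textbf{Main obstacle.} The hardest step is this EXP lower bound for DL-Lite. DL-Lite cannot branch conditionally or propagate information, so the tree of configurations must be generated unconditionally. All checking of transitions and cell contents must then be shifted into nested, two-way navigation in the query. This is exactly where nesting is indispensable: plain C2RPQs over DL-Lite only reach PSPACE.
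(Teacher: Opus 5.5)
The paper gives no proof of its own here. It quotes \cite{Bienvenu2014}, whose upper-bound machinery it reuses as \textsf{SkipRel} and \Cref{lem:matJT}. Your data-complexity lower bounds are fine. However, the combined-complexity upper bound and the \textsc{Exp}-hardness sketch each have a genuine gap.

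\emph{Upper bound.} Summarising anonymous trees by pairs of states per type, with each $\nesting{\alpha_j}$ resolved ``at each type'' bottom-up, does not work. A nested automaton is two-way, so a run spawned at an anonymous element can leave the tree through its root and finish in the ABox. Whether $\nesting{\alpha_j}$ holds there is therefore not a function of the type.

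For a counterexample, take $\T=\{A\ISA\exists r.\top\}$ and the N2RPE $r\cdot\nesting{r^-\cdot p\cdot B?}\cdot r^-$. The element $a$ and its anonymous $r$-child have the same types over $\{A(a),p(a,b),B(b)\}$ and over $\{A(a),p(a,b)\}$. Yet $(a,a)$ is a certain answer only over the first. So whatever value your table gives the nested test at that type, one of the two answers is wrong.

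The missing idea is the set $\Gamma$ in \textsf{SkipRel} (\Cref{def:jumptrans}; JumpTrans/JumpFinal in \cite{Bienvenu2014}). A loop summary must record the states in which unfinished nested runs return to the root. Those runs are then resumed outside from arbitrary states, i.e.\ as $\nesting{\alpha_{j\mid s}}$. This also affects your data-complexity step: you must precompute $\nesting{\alpha_{j\mid s}}$ at individuals for every state $s$, not only $\nesting{\alpha_j}$.

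Since $\Gamma$ ranges over sets of states, there are exponentially many candidate summaries. Each is decided by one subsumption test in the TBox extended with the automaton axioms, as in \Cref{lem:matJT}, which gives \textsc{Exp}. Your scheme would otherwise be too good. Anonymous elements of DL-Lite canonical models have only polynomially many types (one per generating role), so pairs-of-states loops would run in polynomial time. That contradicts the hardness you want unless \textsc{P}$\,=\,$\textsc{Exp}. The subsets $\Gamma$ are exactly what separates N2RPQs from 2RPQs, which are \textsc{P}-complete in combined complexity.

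\emph{Lower bound.} Using $\nesting{\cdot}$ tests to check that the successors of a universal configuration are accepting cannot work. Acceptance is a recursive property of a computation of exponential depth, while the nesting depth is fixed, so each such test would need another one inside it. Alternation must instead be handled by the top-level two-way automaton.

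Consider the full binary tree generated from $\{A(a)\}$ by $\{A\ISA\exists r_i.\top,\ \exists r_i^-.\top\ISA A\mid i=0,1\}$ (or by $A\ISA\exists r_i.A$ in $\mathcal{EL}$). Every anonymous element has a unique parent, and the edge label you climb tells you which child you came from. Encode computations along paths as configurations separated by a branch bit. The top-level automaton then walks depth-first through a finite accepting computation subtree:
\begin{itemize}
\item it guesses cells while going down;
\item at existential configurations it chooses one branch;
\item at universal configurations it takes branch $0$ and, after climbing back through it, branch $1$;
\item it accepts when it is back at $a$, marked by a fresh concept name in the ABox.
\end{itemize}

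Nesting is needed only locally, and this is where it is indispensable. The walker cannot climb to the previous configuration and find its way back down. So each guessed cell is validated by a nested test that climbs polynomially many steps and reads the matching cells of the previous configuration and the branch bit. A similar nested test checks whether the configuration just re-entered is universal. This gives \textsc{Exp}-hardness already for a fixed TBox and nesting depth $1$.
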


Our aim here is not worst-case optimal complexity, but to develop techniques that may lead to tractable evaluation of nested navigational queries over ontologies in practice. 
\section{Rewriting N2RPQs}
\label{sec:rewrite_cof_N2RPQ}
% \mo{We are given  infinite, countable and pairwise disjoint sets $\conceptnames$, $\rolenames$ and $\datatests$ of (resp.) concept names, role names and data tests.}\todo{why here? can it go?}

\mo{We present a rewriting algorithm for N2RPQs with data tests. In this section and the next, we assume a fixed $\ontoLang$ TBox $\T$, and provide an algorithm for rewriting a given input N2RPQ $q$ into an N2RPQ $q_{\T}$ that is sound and complete for all ABoxes that are consistent with $\T$.  
In the next section, we lift this algorithm to (restricted) conjunctions of N2RPQs.}

Our algorithm draws inspiration from and builds on previous techniques for navigational ontology-mediated querying  \cite{Calvanese1999,Eiter2012,Bienvenu2015,Ortiz2011}, but it focuses on potential implementability. 
% Both \co{functions} draw inspiration from \emph{loop computation} \cite{Bienvenu2013}, \emph{clipping}, and \emph{JumpTrans} \cite{Eiter2012,lohnert2025,Bienvenu2014}. \todo[inline]{we can be more accurate: clipping is mostly about the join variables, i.e. it should be mentioned in sect 4. the loop computation and the jumptrans are essentially two names for the same idea, but the former name has been used in a few papers on 2RPQs: our old JAIR paper on navigational queries, the one with sebastian that considers more expressive DLs (we used the same idea, but I do not remember if we called it looping), and the reasoning web tutorial. The latter name is the one used in the only nesting paper. We should also mention that the reasoning web paper has a more conceptual discussion of the looping technique.}

As usual for ontology-mediated querying, we  rely on the fact that for every 
PG $\A$ that is consistent with $\T$, there is a canonical model 
$\I_{\T,\A}$ that gives exactly the certain answers to all N2RPQs.
Below we give the construction for the canonical model.

\begin{definition}[Canonical Model]\label[definition]{def:canonicalmodel}
    Consider a $\ontoLang$ TBox $\T$ and a PG $\A$. 
    Let $\I_0 = \A$.
    Then, we construct the \emph{canonical model} $\I_{\T,\A}$ by exhaustively applying the following rules to $\I_0$.
    \begin{compactenum}[(Ch1)]
        \item\label{def:canonicalmodel:RI} If \ $r\ISA r'\in\T$, $(v,w)\in r^{\I_i}$ and $(v,w)\notin s^{\I_i}$, then $\I_{i+1}$ is $\I_i$ with ${r'}^{\I_{i+1}}={r'}^{\I_{i}}\cup \{(v,w)\}$ in case $r\in \rolenames$, otherwise ${r'}^{\I_{i+1}}={r'}^{\I_{i}}\cup \{(w,v)\}$.        
        % \todo{We should not use $s$ for roles! Cem: s -> r'} 
        \item\label{def:canonicalmodel:roleProps} If \ $T\ISA r'\in\T$, $(v,w)\in T^{\I_i}$ and $(v,w)\notin {r'}^{\I_i}$, then $\I_{i+1}$ is $\I_i$ with ${r'}^{\I_{i+1}}={r'}^{\I_{i}}\cup \{(v,w)\}$.
        \item\label{def:canonicalmodel:CI} If \ $A \ISA B\in\T$, $v\in A^{\I_i}$ and $v\notin B^{\I_i}$, then $\I_{i+1}$ is $\I_i$ with $B^{\I_{i+1}}=B^{\I_{i}}\cup\{v\}$. 
        \item\label{def:canonicalmodel:existsLHS} If \ $\exists r.\top \ISA B\in\T$, $v\in (\exists r.\top)^{\I_i}$ and $v\notin B^{\I_i}$, then $\I_{i+1}$ is $\I_i$ with $B^{\I_{i+1}}=B^{\I_{i}}\cup\{v\}$.        
        \item\label{def:canonicalmodel:conceptProps} If \ $T\ISA B\in\T$, $v\in T^{\I_i}$ and $v\notin B^{\I_i}$, then $\I_{i+1}$ is $\I_i$ with $B^{\I_{i+1}}=B^{\I_{i}}\cup\{v\}$.
        \item\label{def:canonicalmodel:existsRHS} If \ $A \ISA \exists r.\top\in\T$, $v\in A^{\I_i}$ and $v\notin (\exists r.\top)^{\I_i}$, then $\I_{i+1}$ is $\I_i$ with a fresh $w$ in $\varDelta^{\I_{i+1}}$.
        % and $C^{\I_{i+1}}=C^{\I_{i}}\cup\{w\}$. 
        In case $r\in \rolenames$ then $r^{\I_{i+1}}=r^{\I_{i}} \cup\{(v,w)\} $, otherwise $r^{\I_{i+1}}=r^{\I_{i}} \cup \{(w,v)\}$.
        \item\label{def:canonicalmodel:existsRHS2} If \ $\exists r.\top \ISA \exists {r'}.\top\in\T$, $v\in (\exists r.\top)^{\I_i}$ and $v\notin (\exists s.\top)^{\I_i}$, then $\I_{i+1}$ is $\I_i$ with a fresh $w$ in $\varDelta^{\I_{i+1}}$.
        % and $C^{\I_{i+1}}=C^{\I_{i}}\cup\{w\}$. 
        In case $s\in \rolenames$ then ${r'}^{\I_{i+1}}={r'}^{\I_{i}} \cup\{(v,w)\} $, otherwise ${r'}^{\I_{i+1}}={r'}^{\I_{i}} \cup \{(w,v)\}$.
    \end{compactenum}
\end{definition}

\mo{Note that the property values in $\ontoLang$ have no effect when it comes to the canonical model construction. Since data value tests only occur on the left-hand-side of axioms, in $\I_{\T,\A}$ the partial function $\mathsf{prop}$ stays as-is and does not assign any new data values, neither to nodes in $\A$ nor to freshly added ones. }

For convenience, we may think of $\I_{\T,\A}$ as a set of \emph{facts}: $A(o)$ and $r(o,o')$ assert node and edge labels, plus facts $o.(p=v)$,  $r(o,o').(p=v)$ that assert property value assignments.  
% $T(o)$ and $T(o,o')$, 
where $o,o'$ are nodes, $A \in \conceptnames$, $r \in \rolenames$, $p \in \propnames$ and $v \in \mathbf{D} $.
% and $T \in \datatests$.  
%\todo{do we want to use o for objects in the canonical model? or u,v,w?  or d, e, which is common and safe? We should unify!}
We call a fact \emph{explicit} if it is present in $\A$, and \emph{implicit} otherwise. 
 That is, implicit facts are those introduced by the canonical model construction that extends the explicit facts to satisfy the TBox $\T$. \mo{All property value assertions are explicit.}
Note that, as usual for DL-Lite variants, each rule application that adds an implicit fact $\xi_\ell$ tests for the presence of one fact $\xi_{\ell-1}$, which may be explicit or added by a previous rule application. In this way, every implicit fact is triggered by one explicit fact and a sequence of rule applications.

\mo{Note that all explicit facts mention only the individual names in $\A$, but the implicit ones may mention additional nodes introduced to satisfy existential restrictions. We call the latter, that is, the nodes in $\I_{\T,\A} \setminus \A$, the  \emph{anonymous nodes}, and we refer to the facts in $\I_{\T,\A}$ that mention at least one anonymous node as the \emph{anonymous part} of $\I_{\T,\A}$.
It is not hard to see that if there is a single fact $C(a)$ in $\A$, then all the anonymous nodes added during the model construction form a tree-shaped graph rooted at $a$. Moreover, since all rules are deterministic, this tree is unique up to renaming of nodes, and the canonical model of any ABox containing an individual that satisfies $C$ will contain a copy of this tree.}

%%and given in the {extended version of this paper}~\cite{appendix}.\todo{rephrase ref!}

% \todo[inline]{Let's bring back the def of canonical model, its not too long and worth having}

% and it remains intact for the individuals in $\A$.} That is, all data value assignments are explicit. 

\mo{Our rewriting of a single N2RPQ $q$ is based on two \emph{skipping functions} that take an N2RPE as input and output a modified N2RPE with additional transitions that allow it to `skip over' implicit facts, and instead look for the explicit facts that trigger their generation. The rewriting is done in two phases. First, to account for the anonymous part, the function $\mathsf{skipAnon}$ adds transitions that skip over the anonymous part of the canonical model. 
The answers of the resulting query over the facts of $\I_{\T,\A}$ involving only non-anonymous individuals coincide with the answers of the original $q$ over the whole $\I_{\T,\A}$. Then \co{a second function} $\mathsf{skipCore}$ adds more transitions to yield complete answers when evaluated over just $\A$ alone.
% , without any implicit facts. 
}

\subsection{Skipping over the Anonymous Part}
%\co{We present the skipping functions in two parts. The first part concerns with adding new transitions based paths that need to cross into the anonymous part, and the second part concerns adding new transitions based on paths that stay in the graph.}

\mo{Given $\T$ and $\A$, we use $\mathcal{G}_{\T,\A}$ to denote
the restriction of the canonical model $\I_{\T,\A}$ to the nodes in $\A$, and call it the \emph{core} of  $\I_{\T,\A}$.
The goal of the first phase of the rewriting is to obtain a query that skips over the anonymous part of $\I_{\T,\A}$ and finds all query answers when evaluated over the core $\mathcal{G}_{\T,\A}$ alone. 

To this aim, we rely on the following \textsf{SkipRel}$(\alpha,\T)$ relation, originally defined as two relations \textsf{JumpTrans} and \textsf{JumpFinal} in \cite{Bienvenu2014}. This relation characterises the parts of the input N2RPQ $\alpha$ that can be matched in the canonical models of each tree-shaped part that can possibly appear in a canonical model constructed from $\T$.}
% This definition is called is a   following \textsf{SkipRel} Definition and computation of JumpTrans by TBox Reasoning (based on Proposition 6.3.) from \cite{Bienvenu2014} with minor modifications}

% \todo[inline]{This is the place to define a partial run}
Recall the definition of run from \Cref{def:runN2RPE}. A \emph{partial run} for $\alpha$ is defined analogously, but does not need to satisfy the conditions on the leafs, i.e. it may simply end on non-final states. 

\begin{definition}\label[definition]{def:jumptrans}
    Let $\T$ be a $\ontoLang$ TBox and $\alpha_0$ an n-NFA of depth $k$, and let $\alpha_1, \dots, \alpha_n$ be the n-NFAs of nesting depth $\le k$ mentioned in the transitions of $\alpha_0$. The set \textsf{SkipRel}$(\alpha_0,\T)$ consists of tuples $(C,s_0^1,s_0^2,\Gamma)$ where $C\in\conceptnames\cup\{\top\}$, 
    $s_0^1 \in Q_0$ is a state from $\alpha_0$,
    $s_0^2 \in (Q_0 \cup \{\mathsf{f}\}) \setminus \{s_0^1\}$ is either a different state from $\alpha_0$ or the special marker $\mathsf{f}$, 
    and $\Gamma\subseteq \bigcup_{1 \leq j \leq n}Q_j$ is a set of states from the automata nested in $\alpha_0$. 
    Such a tuple $(C,s_0^1,s_0^2,\Gamma)$ belongs to \textsf{SkipRel}$(\alpha_0,\T)$ if there exists a partial run $(T,l)$ of $\alpha_0$ on the canonical model $\I_{\T,\{C(a)\}}$ that satisfies the following conditions: 
    \begin{compactitem}
        \item The root of $T$ is labelled $(a,s_0^1)$.
        %%\item If a node in $T$ is labelled $(a,s)$, then it is either the root or a leaf.
        \item If $s_0^2 \in Q_0$, there is a leaf node $v$ with $l(v)=(a,s_0^2)$.
        \item For every leaf node $v$ with $l(v)=(o,s)\neq(a,s_0^2)$, either $s \in F_j$ for some $0 \leq j \leq n$, 
        %%$\alpha_j\in\mathbf{A}^{k-1}$, 
        or $o=a$ and $s \in\Gamma$.
    \end{compactitem}
     If\CR{, additionally,} $T$ has at least one node $(o,s)$ with $o\neq a$, then we call the tuple  $(C,s_0^1,s_0^2,\Gamma)$ \emph{strictly anonymous}. 
\end{definition}

\mo{Intuitively, a tuple $(C,s_0^1,s_0^2,\Gamma)$ is in \textsf{SkipRel}$(\alpha_0,\T)$ precisely when in the canonical model of every ABox containing a node $a$ with $a \in C^{\I_{\T,\A}}$, we are guaranteed to find a match for the query $\alpha'$ obtained by making $s_0^1$ the initial state of $\alpha_0$, $s_0^2$ the final state, and making all the states in $\Gamma$ \bl{initial} states of the respective automata. The match uses only the tree of $\I_{\T,\A}$ rooted at $a$, and it is such that the automata will be at $a$ whenever they visit a state $s$ in $\Gamma$ (which intuitively acts as \bl{initial} state in the modified $\alpha'$ but not in the original $\alpha_0$). This allows us to 'glue' together the partial match in the tree rooted at $a$ with any partial matches for the rest of the query found in the rest of the canonical model. Hence, if we find a node that satisfies $C$, we can fully skip the subquery $\alpha'$.}
% \todo{clear? helpful? Cem: short of a picture, probably as clear as it can be.}

\co{The computation of  \textsf{SkipRel}$(\alpha_0,\T)$ 
%%for an n-NFA $\alpha_0$ and $\ontoLang$ TBox $\T$ 
can be easily reduced to standard description logics reasoning.}
\mo{In the claim below, we use a TBox in the well-known description logic $\mathcal{ELHI}$, where concept inclusions take the form $C \ISA D$ with $C,D$ concepts formed according to the grammar $C,D := \top \mid A \mid \exists r.C$, and
role inclusions take the form $r \ISA t$, with $A\in \conceptnames$ and $r,t \in \allroles$.}

\begin{lemma}[Proposition 6.3 in \cite{Bienvenu2014}]
%%Materialising  \textsf{SkipRel}]
\label{lem:matJT}
Consider an n-NFA $\alpha_0$ and a $\ontoLang$ TBox $\T$, and let $\alpha_1, \dots, \alpha_n$ be the n-NFAs of nesting depth $k$ mentioned in the transitions of $\alpha_0$. For every state $s\in Q_i$ of $\alpha_i$, with $1 \leq i \leq k$, we use a fresh concept name $A_s$, and we let $A_{\mathsf{f}} = \top$. 
%%If $\{s_1,s_2\}\subseteq S_i$, then 
Let  $\T'$\! be obtained by keeping only the 
$\mathcal{ELHI}$ axioms of \ $\T$\!, and adding to it the following concept inclusions, for every \mo{$\{s,s'\}\subseteq Q_i$, $0 \leq i \leq n$}: 
    \begin{compactitem}
        \item $\top\ISA A_s$, for every $s\in F_j$ with $\alpha_j$,
        \item $\exists r.A_{s'}\ISA A_s$ for every $(s,r,s')\in\delta_i$,   
        %%\bl{and there is some $C\ISA \exists t.\top\in\T$ with $\transclosure{t}{r}$ or $\transclosure{t}{r^-}$},
        %\todo{I dropped the color part which would need updated proofs. The proof in [6] does not have it, and in practice it is just a minor optimisation}
        \item $A_{s'}\AND B\ISA A_s$, whenever $(s,B?,s')\in\delta_i$, and
        \item $A_{s'}\AND A_{s''}\ISA A_s$, whenever $(s,\nesting{\alpha_j},s')\in\delta_i$ and \co{$s'' \in S_j$ is an initial state of $\alpha_j$.}
    \end{compactitem}
    Then the following holds: 
    \[
      (C,s_0^1,s_0^2,\Gamma)\in\textsf{SkipRel}(\alpha_0,\T) \text { iff } \T'\models (C\AND A_{s_0^2} \AND \bigsqcap_{s \in\Gamma} A_{s})\ISA A_{s_0^1}.
    \]
\end{lemma}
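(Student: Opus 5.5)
We prove both directions by relating partial runs of $\alpha_0$ in $\I_{\T,\{C(a)\}}$ to derivations in the canonical model of $\T'$ (which is Horn, so entailment of $D \ISA A_{s}$ with $D$ a conjunction of names can be checked on a single canonical model). Let $D = C\AND A_{s_0^2}\AND\bigsqcap_{s\in\Gamma}A_s$ and let $\J$ be the canonical model of $\T'$ over the ABox $\{D(a)\}$, i.e.\ $\{C(a),A_{s_0^2}(a)\}\cup\{A_s(a)\mid s\in\Gamma\}$. First we observe that the fresh names $A_s$ occur only on right-hand sides of the new axioms or conjoined on their left, never under an existential on a right-hand side. Hence the anonymous tree generated in $\J$ from $a$ coincides, up to the extra $A_s$ labels, with the tree of $\I_{\T,\{C(a)\}}$. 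This relies on the $\ontoLang$ axioms that are $\mathcal{ELHI}$ being exactly the ones without data tests or negation; data tests are irrelevant since $\{C(a)\}$ carries no property values, and negative axioms only constrain consistency. We then reduce the claim to the following statement: $\J\models A_s(o)$ iff there is a partial run from $(o,s)$ whose leaves are either labelled with final states, or are $(a,s')$ with $s'\in\Gamma\cup\{s_0^2\}$.

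For ($\Leftarrow$), we take a partial run $(T,l)$ witnessing the tuple and show by induction on the height of subtrees, from the leaves upward, that for every node $v$ with $l(v)=(o,s)$ we have $o\in A_s^{\J}$. For leaves, either $s$ is final (axiom $\top\ISA A_s$), or $o=a$ and $s\in\Gamma\cup\{s_0^2\}$, where $A_s(a)$ holds by the ABox (with $A_{\mathsf f}=\top$ if $s_0^2=\mathsf f$). For inner nodes we split into the three transition cases of \Cref{def:runN2RPE}, each matched by the corresponding axiom. A role step uses $\exists r.A_{s'}\ISA A_s$, with role sets handled by taking the single role, since data tests are vacuous here. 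A concept test uses $A_{s'}\AND B\ISA A_s$. A nested automaton uses $A_{s'}\AND A_{s''}\ISA A_s$. At the root we obtain $a\in A_{s_0^1}^{\J}$, hence the entailment.

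For ($\Rightarrow$), we use that $\J$ is built by a chase that adds $A_s(o)$ atoms. By induction on the chase step at which $A_s(o)$ first appears, we construct a partial run rooted at $(o,s)$ in $\I_{\T,\{C(a)\}}$ whose leaves are final or in $(a,\Gamma\cup\{s_0^2\})$. Each rule application adding $A_s(o)$ is read backwards as one transition: the run children come from the premises, which were derived earlier, and their subruns are combined. Atoms present initially give single-node runs. Instantiating with $o=a$ and $s=s_0^1$ yields the required run.

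The main obstacle is the mismatch between the run definition and the tuple definition. A run that revisits $a$ in state $s_0^2$ several times produces several leaves, while the definition expects a distinguished leaf. We would resolve this by treating additional such leaves as allowed via $\Gamma$, or by noting that the definition only requires existence of one such leaf. Further care is needed for the condition $s_0^2\neq s_0^1$, for role sets containing several roles, which must be reduced to role inclusions, and for inverse roles in the canonical model rules (Ch1), (Ch6) and (Ch7), so that the $\exists r$ axioms faithfully track edges in both directions.
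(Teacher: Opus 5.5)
Your strategy is the natural one: read entailment off the canonical model $\J$ of the Horn TBox $\T'$, and match derivations of $A_s(o)$ with partial runs rooted at $(o,s)$. The paper gives no proof of its own here; it imports Proposition~6.3 of \cite{Bienvenu2014}. Your argument, however, breaks at the step ``instantiating with $o=a$ and $s=s_0^1$ yields the required run''.

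For $s_0^2\in Q_0$, \Cref{def:jumptrans} demands a leaf labelled $(a,s_0^2)$, and your characterisation does not supply one. Since $\T'$ contains $\top\ISA A_f$ also for $f\in F_0$, a derivation of $A_{s_0^1}(a)$ may close the $\alpha_0$-branch in a final state of $\alpha_0$, possibly at an anonymous node, without ever using the hypothesis $A_{s_0^2}(a)$. For the statement as written this cannot be patched. Take $\T=\emptyset$ and $\alpha_0$ with states $s_0,s_1,s_f$, $F_0=\{s_f\}$, and the single transition $(s_0,A?,s_f)$. Then $\top\ISA A_{s_f}$ and $A_{s_f}\AND A\ISA A_{s_0}$ give $\T'\models A\AND A_{s_1}\ISA A_{s_0}$. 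But no partial run rooted at $(a,s_0)$ has a leaf $(a,s_1)$, so $(A,s_0,s_1,\emptyset)\notin\textsf{SkipRel}(\alpha_0,\T)$.

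Omitting the axioms $\top\ISA A_f$ for $f\in F_0$ only moves the failure to the case $s_0^2=\mathsf f$. There $A_{\mathsf f}=\top$ adds nothing, so no $A_s$ with $s\in Q_0$ is derivable. Yet $(A,s_0,\mathsf f,\emptyset)\in\textsf{SkipRel}(\alpha_0,\T)$ in the example above. You therefore have to treat the two cases with different TBoxes:
\begin{itemize}
\item For $s_0^2\in Q_0$, drop the $F_0$-axioms. Atoms $A_s$ with $s\in Q_0$ are then derivable only from other such atoms along $\delta_0$. Since states of $\alpha_0$ always lie on a single branch of a run, the unique $Q_0$-leaf of the extracted run must come from the hypothesis $A_{s_0^2}(a)$.
\item For $s_0^2=\mathsf f$, keep the $F_0$-axioms.
\end{itemize}
The same single-branch observation shows that the obstacle you single out, several leaves labelled $(a,s_0^2)$, cannot occur. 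The real problem is a missing leaf.

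A second gap concerns the alphabet. N2RPE transitions carry sets $\sigma\subseteq\allroles\cup\datatests$, whereas $\T'$ only has axioms $\exists r.A_{s'}\ISA A_s$ for single roles.
\begin{itemize}
\item If $\sigma$ contains a data test, the transition is never usable in $\I_{\T,\{C(a)\}}$, which has no property values at all. Such a transition must contribute no axiom. If ``taking the single role, since data tests are vacuous'' means discarding the tests and keeping $\exists r.A_{s'}\ISA A_s$, the entailment-to-run direction becomes unsound.
\item If $\sigma$ contains several roles, $\T'$ as stated contains nothing for it. A run through $(s,\{r_1,r_2\},s')$ is available in the anonymous part whenever the generating role of the edge is a common subrole of $r_1$ and $r_2$, yet it yields no entailment. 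Saying such sets are ``reduced to role inclusions'' is not yet an argument.
\end{itemize}
For the multi-role case you need to extend $\T'$. For instance, add a fresh role $r_\sigma$, inclusions $r\ISA r_\sigma$ for every $r\in\allroles$ with $r\sqsubseteq^*_\T\rho$ for all $\rho\in\sigma$ (with the hierarchy closed under inverses), and the axiom $\exists r_\sigma.A_{s'}\ISA A_s$. This works precisely because every edge of $\I_{\T,\{C(a)\}}$ is created by (Ch\ref{def:canonicalmodel:existsRHS}) or (Ch\ref{def:canonicalmodel:existsRHS2}) and carries exactly the superroles of its generating role. Your proof should state and use that fact explicitly.
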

\begin{figure*}[h]
    \centering
    \begin{subfigure}{0.35\textwidth}
        \centering
            \begin{tikzpicture}[scale=0.5,node distance = 1.3cm,shorten >=1pt,state/.style={circle,draw,inner sep=1.5pt}]
            % 1. Wrap the automaton inside a local bounding box named 'automaton'
            \begin{scope}[local bounding box=automaton]
                \node[state, initial left] (s0) {${s_0}$};
                \node[state, right of=s0](s1){$s_1$};
                \node[state, right of=s1](s2){$s_2$};
                \node[state, accepting, right of=s2] (s3) {$s_3$};
                \path[->]   (s0)  edge[above] node{$\nesting{\alpha_1}$} (s1)
                            (s0)  edge[loop above] node{$\role{r}$} (s0)
                            %(s0)  edge[loop below, densely dotted] node{$\role{t}$} (s)
                            %(s0)  edge[below, bend right=50, densely dotted] node{$\concept{A}?$} (s1)
                            (s1) edge[above] node{$\nesting{\alpha_2}$} (s2)
                            (s2) edge[above] node{$\role{r}^-$} (s3);
            \end{scope}
            % 2. Place the name \alpha_1 to the left of the bounding box
            %\node[left=0.1cm of automaton.north west] {\LARGE $\alpha_1$:};
            %\node[right=0.1cm of automaton.north east] {\phantom{\LARGE $\alpha_1$:}};
        \end{tikzpicture}   
        \caption{$\alpha_0$}
    \end{subfigure}    
    \begin{subfigure}{0.25\textwidth}
        \centering
        \begin{tikzpicture}[scale=0.5,node distance = 1.3cm,shorten >=1pt,state/.style={circle,draw,inner sep=1.5pt}]
        \begin{scope}[local bounding box=automaton]
        \node[state, initial left] (s4) {$s_4$};
        \node[state, right of=s4] (s5) {$s_5$};
        \node[state, accepting, right of=s5] (s6) {$s_6$};
        \path[->]   (s4)  edge[above] node{$\role{r}^-$} (s5)
                    (s5)  edge[above] node{$\role{p}$} (s6);
        \end{scope}
    % 2. Place the name \alpha_1 to the left of the bounding box
    %\node[left=0.1cm of automaton.north west] {\LARGE $\alpha_2$:};
    %\node[right=0.1cm of automaton.north east] {\phantom{\LARGE $\alpha_1$:}};
    \end{tikzpicture}
    \caption{$\alpha_1$}
    \end{subfigure}
    \begin{subfigure}{0.35\textwidth}
        \centering
        \begin{tikzpicture}[scale=0.5,node distance = 1.3cm,shorten >=1pt,state/.style={circle,draw,inner sep=1.5pt}]
            \begin{scope}[local bounding box=automaton]
            \node[state, initial left] (s7) {$s_7$};
            \node[state, right of=s7] (s8) {$s_8$};
            \node[state, right of=s8] (s9) {$s_9$};
            \node[state, accepting, right of=s9] (s10) {$s_{10}$};
            \path[->]   (s7) edge[above] node{$\role{t}$} (s8)
                        (s8) edge[above] node{$\role{r}^-$} (s9)
                        (s9) edge[above] node{$\concept{C}?$} (s10);
            \end{scope}
            % 2. Place the name \alpha_1 to the left of the bounding box
            %\node[left=0.1cm of automaton.north west] {\LARGE $\alpha_2$:};
            %\node[right=0.1cm of automaton.north east] {\phantom{\LARGE $\alpha_1$:}};
        \end{tikzpicture}
        \caption{$\alpha_2$}
        \label{subfig:alpha2}
    \end{subfigure}
    \Description{A figure showing three automata.}
    \caption{Example N2RPQ with nesting level $k=1$.}
    \label{fig:exampleN2RPE}
\end{figure*}
\begin{example}\label[example]{ex:skipRel}
    To illustrate \textsf{SkipRel}, we consider an example n-NFA $\alpha_0$, given in \Cref{fig:exampleN2RPE}. We consider a 
        % $(\alpha_0,\T)$ let us consider an $\alpha_0$ from \Cref{fig:exampleN2RPE} and 
    TBox $\T\!=\{\concept{A}\ISA \exists \role{r}, \roleMath{r^-}\!\!~\ISA~\role{t},$ \\ $\concept{D}\ISA\exists \role{t'}.\top,\exists \role{t'}.\top\ISA \concept{C},\{\property{p} \ge 10?\}\ISA \concept{D}\}$. Then, the following holds: 
    % \textsf{SkipRel}$(\alpha_0,\T)$ is a set containing the following tuples 
    \begin{align*}
       \textsf{SkipRel}(\alpha_0,\T)\! =\!  \big \{ (\concept{A},s_0,s_3,\{s_5,s_8\}), (\concept{A},s_0,s_3,\{s_6,s_9\}), (\concept{A},s_5,s_6,\emptyset)\big \}
    \end{align*}
    To see why, for example $ (\concept{A},s_0,s_3,\{s_5,s_8\}) $ is in the set, we note that in the run of $\alpha_0$ starting from $s_0$, we must have a node $a$ in the graph that has an outgoing \role{r} edge and by $\roleMath{r^-}\ISA \role{t}$ also an incoming $\role{t}$. In state $s_5$ in $\alpha_1$ we previously followed a $t$ edge and in state $s_8$ in $\alpha_2$ an $\roleMath{r^-}$ edge. Thus, the run can just follow that same edge in reverse, landing at $a$ again.
\end{example}

% \todo[inline]{Sufficient Conditions for Polynomial Rewriting (see also Tree Witness Rewriting \cite{Kikot2012,Rodriguez2013})}

\mo{Deciding $\T \models C \ISA D$
is an instance of a well-studied problem: subsumption in $\mathcal{ELHI}$. It is \textsc{Exp}-complete \cite{KrotzschRH13},
but efficient off-the-shelf reasoners are available, \cite{DBLP:conf/kr/BateMGSH16, DBLP:journals/ws/SteigmillerLG14,DBLP:conf/ijcai/Kazakov09}. 
However, there are exponentially many possible candidates to populate the set \textsf{SkipRel}$(\alpha_0,\T)$, and testing them all (as assumed in \cite{Bienvenu2014}) would require strictly exponentially many calls to the external reasoner, which is clearly impractical. 
Fortunately, a simple pruning of the search space can discard many irrelevant candidates. First, 
we can restrict our attention to tuples that are strictly anonymous, as otherwise the partial runs do not use the anonymous part and are already present in $\mathcal{G}_{\T,\A}$.
For the strictly anonymous tuples, we make a simple relevance test: there must be some edge in the canonical model that allows the query to enter the anonymous part and exit it again.}
% we now provide formal conditions to prune the possible search space by discarding irrelevant candidates i.e. those that can never actually end up in the set based on the condition in~\Cref{lem:matJT}. }

\begin{definition}%[Candidates for \textsf{SkipRel}]
\label[definition]{def:candidates}
Let $\T$ be a $\ontoLang$ TBox and $\alpha_0$ an n-NFA of depth $k$, and let $\alpha_1, \dots, \alpha_n$ be the n-NFAs of nesting depth $k$ mentioned in the transitions of $\alpha_0$. 
    A tuple $(C,s_0^1,s_0^2,\Gamma)$ where $C\in\conceptnames\cup\{\top\}$, $s_0^1,s_0^2 \in Q_0$ %%states from $\alpha_i$ 
    and $\Gamma\subseteq \bigcup_{j}Q_j$, with $1 \leq i,j \leq n$, is called \emph{relevant} if $\{s_0^1\} \cup \Gamma$ does not contain any final states and the tuple satisfies all the following conditions (with $`\_`$ we denote an arbitrary state). 
    \begin{compactenum}[(1)]
%%        \item $s_0^1,s_0^2$ are states from $\alpha_i$ and $\{s_1,\dots,s_n\}\subseteq \bigcup_{\alpha_j}S_j$
        \item \label{item:defCand1} $B \ISA \exists r.\top \in \T$, $\transclosure{r}{r'}$, and $\transclosure{r}{t}$. 
        %%\todo{$r \ISA r' \in \T$ is not enough, we need $\transclosure{}{}$}
        %%such that $r,s,t\in\allroles$ 
        \item \label{item:defCand2} For some $0 \leq i \leq n$, we have $\{(\_,r',\_)\}\subseteq\delta_i$.
        \item \label{item:defCand3} If
        $s_0^2 \neq \mathsf{f}$, then
        for some $0 \leq i \leq n$, we have
        $\{(\_,t^-,\_)\}\subseteq\delta_i$. 
        %\todo{I weakened the conditions!} %%, and
        %%or (b) $s'\in F$.
        % \item for each $s \in\Gamma \cap Q_i$
        % %%${s_1,\dots,s_n\}$ and $s_j\in\alpha_j$ 
        % there is some $(\_,r_i^-,\_) \in \delta_i$ with $\transclosure{r}{r_i}$ 
    \end{compactenum}
\end{definition}

% \todo[inline]{the claim below is hard to make sense of. A ``candidate'' is something that is not already part of something. But here we use the notion to further restrict elements already in the set. This is bound to trip readers up.  Cem: does the renaming resolve your concern?}

\begin{lemma} \label{lemma:candidates}
    If $(C,s_0^1,s_0^2,\Gamma)\in \textsf{SkipRel}(\alpha_0,\T)$ and it is strictly anonymous, then it %%$(C,s_0^1,s_0^2,\Gamma)$ 
    is relevant. 
    %%a candidate for \textsf{SkipRel}$(\alpha_0,\T)$.
\end{lemma}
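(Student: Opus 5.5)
We will argue directly from the partial run witnessing membership in \textsf{SkipRel}$(\alpha_0,\T)$ (\Cref{def:jumptrans}), together with the tree shape of the anonymous part of $\I_{\T,\{C(a)\}}$. Fix a partial run $(T,l)$ of $\alpha_0$ on $\I_{\T,\{C(a)\}}$ whose root is labelled $(a,s_0^1)$ and which contains a node $(o,s)$ with $o\neq a$. Every node of $\I_{\T,\{C(a)\}}$ other than $a$ is anonymous. The anonymous nodes form a tree rooted at $a$, and they arise only through rules (Ch\ref{def:canonicalmodel:existsRHS}) and (Ch\ref{def:canonicalmodel:existsRHS2}). Each such node $w$ is created by an axiom $B\ISA\exists r.\top\in\T$, with $B$ either a concept name or $\exists r''.\top$. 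The only facts linking $w$ to its parent $v$ are the $r$-edge created at that step and the edges added afterwards by (Ch\ref{def:canonicalmodel:RI}). So an edge between $v$ and $w$, read from $v$ to $w$, carries exactly the roles $r'$ with $\transclosure{r}{r'}$. Read from $w$ back to $v$, it carries the roles $t^-$ with $\transclosure{r}{t}$. There are no role data tests to satisfy on these edges, because $\mathsf{prop}$ is undefined on implicit edges. We will state this as a small auxiliary fact about the canonical model and prove it by induction on rule applications.

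First we check that $\{s_0^1\}\cup\Gamma$ contains no final state. For $s_0^1$: if $s_0^1$ were final, we could presumably trim the run to the root, and we should interpret \textsf{SkipRel} tuples as nontrivial. We expect that this part of the definition of ``relevant'' is best justified by restricting to minimal runs. If this cannot be derived, we will note that final states in $\Gamma$ or $s_0^1$ make the tuple redundant and can be assumed away without loss of generality.

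Next we prove conditions (\ref{item:defCand1}) and (\ref{item:defCand2}). Since the run starts at $a$ and visits some $o\neq a$, and runs move only along edges or stay in place, the root-to-$o$ path in $T$ contains a first step $(v_p,v_c)$ with $l(v_p)=(a,\cdot)$ and $l(v_c)=(w,\cdot)$, where $w$ is a child of $a$ in the anonymous tree. This step uses some transition $(\_,\sigma,\_)\in\delta_i$ with $\sigma\subseteq\allroles\cup\datatests$, and $(a,w)\in r'^{\I}$ for every role $r'\in\sigma$. By the auxiliary fact, $w$ was generated by some $B\ISA\exists r.\top\in\T$ with $\transclosure{r}{r'}$. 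This gives (\ref{item:defCand1}) and (\ref{item:defCand2}). Here we need $\sigma$ to contain at least one role. A set containing only data tests cannot hold on an implicit edge, and $\emptyset$ we treat as not occurring.

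Then we prove condition (\ref{item:defCand3}). If $s_0^2\neq\mathsf{f}$, there is a leaf labelled $(a,s_0^2)$. We must show that some branch of $T$ returns from the anonymous part to $a$. The main obstacle is to choose the right exiting step. We will take the subtree below $v_c$: it must contain the leaf $(a,s_0^2)$, or else the remaining leaves lie at $a$ in $\Gamma$-states. Any branch from $w$ that reaches $a$ must cross the edge $w\to a$ in the child-to-parent direction, since the anonymous part is a tree. That crossing uses a transition labelled with some $t^-$ with $\transclosure{r}{t}$, which gives (\ref{item:defCand3}) with the same $r$ as in (\ref{item:defCand1}). The delicate case is when the path continues into $w$'s own children, which were created by different axioms $r$. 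To handle it, we will take the last return into $a$ along the branch and pair it with the corresponding entering edge at $a$. Both then refer to the same child $w$ and hence the same generating axiom. We will therefore choose the child $w$ through which the branch to the accepting leaf returns to $a$, rather than the first child entered. If no branch returns to $a$, the leaf $(a,s_0^2)$ would have to lie in the part of the tree that never left $a$. In that case we instead pick any branch entering the anonymous part, and for the tuple to be strictly anonymous some branch must do so. We then argue that this branch ends at a final state, which conflicts with the minimality assumption. Reconciling this case with the weakened condition is where we expect to spend the most care.
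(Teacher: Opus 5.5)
Your argument for conditions (1) and (2) follows the paper's proof. The first step of the run that leaves $a$ enters a child $w$ generated by some $B \ISA \exists r.\top$, and its (necessarily role-containing) label gives an $r'$ with $\transclosure{r}{r'}$. Pairing the exit with the entry through the same child $w$ is more careful than the paper, which silently uses one node $o$ for both. The two points you defer, however, cannot be closed, because the statement as literally defined fails there.

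First, the clause that $\{s_0^1\}\cup\Gamma$ contains no final state does not follow from membership in \textsf{SkipRel}. The leaf condition only bounds $\Gamma$ from below, so adding a final state of a nested automaton to $\Gamma$ keeps a strictly anonymous tuple in \textsf{SkipRel}, witnessed by the same run. Nothing excludes $s_0^1\in F_0$ either. Take $\T=\{C\ISA\exists r.\top\}$ and transitions $(s_0^1,\{r\},s_1)$, $(s_1,\{r^-\},s_0^2)$ with $s_0^1$ both initial and final. Then $(C,s_0^1,s_0^2,\emptyset)$ is a strictly anonymous member of \textsf{SkipRel} that is not relevant. No minimality of runs affects either case. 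Your fallback that such tuples are redundant holds for $\Gamma$, since removing final states from $\Gamma$ preserves membership. It fails for $s_0^1$: add $(s_0^2,\{p\},s_3)$ with $s_3$ final. Over $\A=\{C(a),p(a,b)\}$, the rewriting can find the certain answer $(a,b)$ only through the transition $(s_0^1,C?,s_0^2)$, which this very tuple contributes.

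Second, the case you single out for (3) is a counterexample, not something minimality excludes. A nested automaton's run must end in one of its final states, so a branch that leaves $a$ inside a nested automaton and accepts in the anonymous part cannot be trimmed. Take $\T=\{C\ISA\exists r.\top\}$. Let $\alpha_0$ have the single transition $(s_0^1,\nesting{\alpha_1},s_0^2)$ with $s_0^2$ final, and let $\alpha_1$ have the single transition $(q_0,\{r\},q_f)$. Consider the run with root $(a,s_0^1)$, children $(a,s_0^2)$ and $(a,q_0)$, and a step from $(a,q_0)$ to $(w,q_f)$. It shows that $(C,s_0^1,s_0^2,\emptyset)$ is a strictly anonymous member of \textsf{SkipRel}. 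Yet condition (3) fails: $r$ is the only role generating anonymous nodes, $\T$ has no role inclusions, and no transition label contains $r^-$.

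The paper's proof has exactly this hole. From the leaf $(a,s_0^2)$ it infers that ``the partial run returns to $a$ after $o$''. That holds only if the excursion lies on the $\alpha_0$-branch leading to that leaf, and the paper never addresses the final-state clause. Your argument does prove (1), (2), and also (3) whenever some branch of the witnessing run returns to $a$ after leaving it. This is modulo your assumption that no transition is labelled $\emptyset$; the run definition would let such a transition jump to arbitrary nodes. To obtain a true lemma, the statement or the definition of relevance must be amended along these lines rather than proved as given.
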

\begin{proof}
    Given an n-NFA $\alpha_0$, let $\alpha_1, \dots, \alpha_n$ be the n-NFAs of nesting depth $k$ mentioned in the transitions of $\alpha_0$.
    Consider $(C,s_0^1,s_0^2,\Gamma)\in \textsf{SkipRel}(\alpha_0,\T)$ with $\Gamma=\{s_1,\dots,s_n\}$, then we show that each of the conditions in \Cref{def:candidates} is satisfied. By \Cref{def:jumptrans} there is a partial run $(T,l)$ of $\alpha_0$ in the canonical model $\I$ of $\T$ and $\A=\{C(a)\}$ such that the following are satisfied. 
    \begin{compactitem}
        \item The root of $T$ is labelled $(a,s_0^1)$.
        \item If $s_0^2 \in Q_0$, there is a leaf node $v$ with $l(v)=(a,s_0^2)$.
        \item For every leaf node $v$ with $l(v)=(o,s)\neq(a,s_0^2)$, either $s \in F_j$ for some $0 \leq j \leq n$, 
        %%$\alpha_j\in\mathbf{A}^{k-1}$, 
        or $o=a$ and $s \in\Gamma$.
        \item $T$ has at least one node $(o,s)$ with $
        o\neq a$. 
    \end{compactitem}
    It is not hard to argue that the conditions of Definition~\ref{def:candidates} hold. 
    Since the construction of canonical model $\I$ is based on the ABox $\A=\{C(a)\}$ with $a$ being the only individual, if there is an individual $o\neq a$, it must have been added by rules  (Ch\ref{def:canonicalmodel:existsRHS}) or (Ch\ref{def:canonicalmodel:existsRHS2}) in \Cref{def:canonicalmodel}, which requires the existence of an  
    axiom $B \ISA \exists r.\top \in \T$, as desired, and it is triggered by $a \in B^{\I_{\T,\A}}$. 
    By construction, $(a,o) \in r^{\I_{\T,\A}}$. Moreover, for every $r' \in \allroles$, we have that 
    $(a,o) \in r'^{\I_{\T,\A}}$ implies $\transclosure{r}{r'}$, as
    (Ch\ref{def:canonicalmodel:RI}) is the only rule that can add role names for already existing pairs of nodes. 
    Moreover, the partial run can only visit $o$ after $a$ if it has a transition using such an $r'$. Hence, conditions (\ref{item:defCand1}) and (\ref{item:defCand2}) hold.  Finally,  if  $s_0^2 \neq \mathsf{f}$ then there is a leaf with label $(a,s_0^2)$, which means that the partial run returns to $a$ after $o$, which it can only do using some
$t^- \in \allroles$ that occurs in the transition function, and such that $(a,o) \in t^{\I_{\T,\A}}$, which as argued above,  implies $\transclosure{r}{t}$, ensuring that (\ref{item:defCand3}) holds.
    % \begin{compactenum}[(1)]
    %   %%  \item \emph{$s_0^1,s_0^2$ are states from $\alpha_i$ and $\{s_1,\dots,s_n\}\subseteq \bigcup_{j}Q_j$}: simply holds by \Cref{def:jumptrans}
    %     \item \emph{$\{A\ISA \exists r.\top,r\ISA r',r\ISA t\}\subseteq\T$ such that $r,r',t\in\allroles$}: 
    %     Note that w.l.o.g. it holds that $r\ISA r$ is always in $\T$.
    %     \item \emph{$\{(\_,r',\_),(\_,t^-,\_)\}\subseteq\delta_i$}: $s_0^1$ and $s_0^2$ are states of the same automaton $\alpha_i$; by definition of \textsf{SkipRel} the root of $T$ has label $(a,s_0^1)$ and there is a leaf node with label $(a,s_0^2)$; since the anonymous part of the canonical model $\I$ is a tree, and for the partial run to start and end in $a$, it has to pass an edge in both direction (unless depth of the run is $0$, i.e., labels of the run only consider individual $a$)
    %     \item \emph{for each $s_j\in\{s_1,\dots,s_n\}$ such that $s_j\in\alpha_j$ there is $(\_,r^-,\_)\in\delta_j$}: since $s_j\in\Gamma$, by definition of \textsf{SkipRel} there is a leave node with the label $(a,s_j)$; (similar argumentation as previous condition) partial run starts with $(a,s_0^1)$ and ends in $(a,s_j)$ and anonymous part is a tree --> have to pass edge in both directions
    %     \todo{finish (check case wit $\nesting{\alpha}\cdot r\cdot r^-$}
    % \end{compactenum}
\end{proof}

\mo{Each tuple $(C,s^1_0,s^2_0,\Gamma)$ in $\textsf{SkipRel}(\alpha_0,\T)$ tells us that if our data has a node $a$ that satisfies $C$, we can skip directly from $s^1_0$ to $s^2_0$, since a path that satisfies the transitions between them always exists in the canonical model, specifically in the anonymous part below $a$. Moreover, any nested automata that started but did not finish their runs in the anonymous part below $a$ will have to `exit' the anonymous part in some state $s$. 
These exit states $s$ are stored in $\Gamma$, and the corresponding nested automaton only needs to be executed from $s$ onwards to complete its run. 
We modify our input $\alpha_0$ to do precisely that: look for a $C$, jump from $s^1_0$ to $s^2_0$, and still look for partial runs that complete the nested automata, setting the states in $\Gamma$ as their initial states.

We introduce some notation.}
% \co{In our skipping procedure, before we can now finally show our skipping procedure that involves paths that traverse the anonymous part of the canonical model, we need some further notation to simplify the presentation.}
% % \todo{for the rewriting of a single JumpTrans tuple we first need a mechanism to "move" the initial state of an n-NFA}
% % \begin{definition}[Move initial state of n-NFA]
    For an n-NFA $\alpha={}$ $\langle Q, \mathbf{\Sigma}_{PG}, S, \delta,F \rangle$ of nesting depth $k$, we denote by $\alpha_{\mid s_0}$ the n-NFA $\langle Q, \mathbf{\Sigma}_{PG}, \{s_0\}, \delta,F \rangle$, that is, we make $s_0$ the single initial state. 
% \end{definition}

\begin{definition}[Skipping Anonymous]\label[definition]{def:skipAnonymous}
    Given a $\ontoLang$ TBox $\T$ and an n-NFA $\alpha_0=\langle Q_0, \mathbf{\Sigma}_{PG}, S_0, \delta_0,F_0 \rangle$,  let $\alpha_1, \dots, \alpha_n$ be the n-NFAs of nesting depth $k$ mentioned in the transitions of $\alpha_0$. 
Let $(C,s^1_0,s^2_0,\Gamma)\in\textsf{SkipRel}(\alpha_0,\T)$ with $\Gamma = \{s_1, \ldots s_n\}$. 
We associate a state to $s^2_0$ as follows: if $s^2_0 \in Q_0$, then $\mathit{state}(s^2_0) = s^2_0$, and 
if $s^2_0 = \mathsf{f}$, then $\mathit{state}(s^2_0)$ is a fresh state $s_\mathsf{f}$.  If $\Gamma = \emptyset$, we use $s'_1$ to denote $\mathit{state}(s^2_0)$, and for each  $s_j\in\Gamma$, we let $\alpha^j$ denote the $\alpha_i$ with $s_j \in Q_j$.    
Then $\skippingAnon{\alpha_0} = \langle Q'_0, \mathbf{\Sigma}_{PG}, S_0, \delta_0',F'_0 \rangle$ is the n-NFA  with: 
\begin{align*}
 Q'_0 = & Q_0 \cup \{\mathit{state}(s^2_0)\} \cup \{s'_j \mid s_j\in\Gamma \} \\    
 F'_0 = & F_0 \cup \{s_\mathsf{f}\} \text{~if~} s^2_0 = \mathsf{f} \text{~and~} F'_0 = F_0 \text{~otherwise, and}\\ 
  \delta'_0 =& \ \delta_0  \cup 
\{(s^1_0,C?,s'_1)\}\  \cup  \\
&\  \{(s'_1,\nesting{\alpha^1_{\mid s_1}},s'_2)\} \ \cup 
\cdots \cup \{(s'_{n-1},\nesting{\alpha^{n-1}_{\mid s_{n-1}}},s'_{n})\} \   \cup \\ 
&\  \{(s'_n,\nesting{\alpha^{n}_{\mid s_{n}}},s^2_0)\}
\end{align*}
\end{definition}
\noindent
In particular if $\Gamma = \emptyset$, the only new transition is 
$\{(s^1_0,C?,\mathit{state}(s^2_0))\}$.

% % Then we add states and transitions to $\alpha_0$ as follows: 
%     \begin{compactitem}
%         \item for each $s_j\in\Gamma$ add a fresh state $s'_j$ to $Q'_0$, \todo{I will finish fixing this def from home}
%         \item if $\Gamma=\emptyset$, then $\{(s^1_0,C,s^2_0)\}\subseteq\delta'_0$, and
%         \item if $\Gamma\neq\emptyset$, then $\{(s,C,s'_1)\}\cup\{(s'_i,\nesting{\alpha_{i\mid s_i}},s'_{i+1})\mid 1\le i< n\}\cup\{(s'_n,\nesting{\alpha_{n\mid s_n}},s')\}\subseteq\delta'_0$.
%     \end{compactitem}
% \end{definition}

%Recall that  $\mathcal{G}_{\T,\A}$ be $\I_{\T,\A}$ restricted to the nodes in $\A$. 
We can show that $\skippingAnon{\alpha}$ correctly skips over the anonymous part of $\I_{\T,\A}$ and finds all query answers when evaluated over the core $\mathcal{G}_{\T,\A}$ alone.

\begin{example} \bl{Consider the N2RPQ from \Cref{fig:exampleN2RPE} and the TBox $\T$ from \Cref{ex:skipRel}. Then, by $\skippingAnon{\alpha_0}$ and for the tuple $(\concept{A},s_0,s_3,\{s_5,s_8\})\in\textsf{SkipRel}(\alpha_0,\T)$ we add the following transitions and fresh states to $\alpha_0$.}
    \begin{center}
        \begin{tikzpicture}[scale=0.5,node distance = 1.5cm,shorten >=1pt,state/.style={circle,draw,inner sep=1.5pt}]
                % 1. Wrap the automaton inside a local bounding box named 'automaton'
            \begin{scope}[local bounding box=automaton]
                \node[state] (s0) {${s_0}$};
                \node[state, right of=s0](s5){$s'_5$};
                \node[state, right of=s5](s8){$s'_8$};
                \node[state, right of=s8] (s3) {$s_3$};
                \path[->]   (s0) edge[above] node{$\concept{A}$} (s5)
                            (s5) edge[above] node{$\nesting{\alpha_{1\mid s_5}}$} (s8)
                            %(s0) edge[loop above] node{$\role{r}$} (s0)
                            (s8) edge[above] node{$\nesting{\alpha_{2\mid s_8}}$} (s3);
            \end{scope}
                % 2. Place the name \alpha_1 to the left of the bounding box
                %\node[left=0.1cm of automaton.north west] {\LARGE $\alpha_1$:};
                %\node[right=0.1cm of automaton.north east] {\phantom{\LARGE $\alpha_1$:}};
        \end{tikzpicture}  
    \end{center} 
\end{example}

\begin{toappendix}
The idea and structure of the proof for \Cref{lemma:skipAnonCorrectness} is closely related to the proof of Proposition 6.6. in \cite{Bienvenu2014}. However, we use a different notation and make slightly different claims. For this reason, we provide the adapted proof below with references to the corresponding claims.
Before proving \Cref{lemma:skipAnonCorrectness}, we recall the following standard fact on canonical model homomorphism from the Description Logic literature. 
    
\begin{fact}\label{fact:model_isomorphism}
    Let $\T$ be a TBox and $\A$ an ABox. Then, suppose that two nodes $o_1$ and $o_2$ are such that $o_1\in A^{\I_{\T,\A}}$ iff $o_2\in A^{\J_{\T,\A}}$ for every concept name $A$. Then, the two submodels $\I_{\T,\A}$ rooted at $o_1$ and $\J_{\T,\A}$ rooted at $o_2$ are isomorphic.
\end{fact}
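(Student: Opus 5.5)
The plan is to build the isomorphism level by level along the tree of anonymous nodes hanging below $o_1$ and $o_2$, using the fact that every anonymous node's labels and successors are fixed by the TBox alone. First I fix what ``the submodel rooted at $o$'' means. It is the restriction of the canonical model to $o$ together with all anonymous nodes introduced, directly or transitively, by (Ch\ref{def:canonicalmodel:existsRHS}) or (Ch\ref{def:canonicalmodel:existsRHS2}) from $o$. These nodes form a tree whose edges point away from $o$.

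I will read the hypothesis as equality of the \emph{types} of $o_1$ and $o_2$. By the type of a node I mean the set of basic concepts it satisfies: the concept names $A$ together with the concepts $\exists r.\top$ for $r\in\allroles$. These are exactly the conditions that trigger (Ch\ref{def:canonicalmodel:existsRHS}) and (Ch\ref{def:canonicalmodel:existsRHS2}). Data-test concepts need no separate treatment. Anonymous nodes have no property values, and concept memberships of $o_i$ derived from data tests are already recorded as concept names. I also fix a canonical strategy for rule application: apply all non-generating rules (Ch1)--(Ch5) exhaustively before any generating rule. This ensures that the guard ``$v\notin(\exists r.\top)^{\I_i}$'' is evaluated on the saturated type of $v$, so the resulting model is unique up to isomorphism.

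\textbf{Step 1 (anonymous types depend only on the generating role).} Suppose $w$ is a fresh node created from $v$ via role $r$. Its only neighbours are $v$, through $r$ and hence through every $r'$ with $\transclosure{r}{r'}$ by (Ch\ref{def:canonicalmodel:RI}), and its own children. The concept names of $w$ come from closing $\{\exists r^-.\top\}$, plus the super-roles of $r^-$, under the CIs of $\T$. The rules add nothing to $w$ from its children beyond the $\exists t.\top$ concepts that witness them, and those children were themselves generated by $w$'s type. Consequently $w$'s type is a function $\tau_\T(r)$ of $r$ alone, which I would show by induction on rule applications.

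\textbf{Step 2 (inductive construction).} Map $o_1\mapsto o_2$. Assume a node $v_1$ below $o_1$ has already been mapped to a node $v_2$ below $o_2$ of the same type. The children of $v_i$ are in bijection with the roles $r$ for which some $B\ISA\exists r.\top$ (or $\exists r'.\top\ISA\exists r.\top$) applies and $v_i\notin(\exists r.\top)$ at the moment the generating rules fire. This set of roles depends only on the type of $v_i$, so the children of $v_1$ and $v_2$ can be matched role by role. Matched children have type $\tau_\T(r)$ by Step 1, so the induction hypothesis holds for them. The map preserves edge labels, because the labels on the edge from $v$ to a child generated via $r$ are exactly the super-roles of $r$. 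It preserves concept labels by Step 1, and properties trivially since anonymous nodes carry none. Taking the union over all depths yields the isomorphism.

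The step I expect to be the main obstacle is the guard in the generating rules. Whether a witness is created depends on which $\exists r.\top$ already hold at $v$, including those satisfied by explicit ABox edges, and this is not information about concept names. My first attempt is the type-based reading above. If the statement must hold literally for concept names only, the fallback is to switch to the standard ``always generate'' variant of the canonical model, in which every applicable axiom produces a fresh witness regardless of the guard. In that variant the subtree depends only on concept names (and trigger roles). I would then note that it is homomorphically equivalent to $\I_{\T,\A}$, so it gives the same answers to N2RPQs, which is all that is needed later in the proof of \Cref{lemma:skipAnonCorrectness}.
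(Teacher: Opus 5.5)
The paper gives no proof of this Fact; it only recalls it as standard. Your worry about the guard is exactly right, but your first route does not get around it, and it cannot. The statement is false for the paper's guarded canonical model, even under your strengthened, type-based hypothesis. What decides whether (Ch6)/(Ch7) fire at an ABox node $o$ is the type of $o$ \emph{before} generation starts, i.e.\ after saturating (Ch1)--(Ch5) on $\A$, and that type only sees explicit witnesses. The type of $o$ in $\I_{\T,\A}$ cannot tell an $\exists r.\top$ witnessed by an ABox edge from one witnessed by a generated child.

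Concretely, take $\T=\{A\ISA\exists r.\top\}$ and $\A=\{A(o_1),\ r(o_1,b),\ A(o_2)\}$. In $\I_{\T,\A}$ both $o_1$ and $o_2$ satisfy exactly $\top$, $A$ and $\exists r.\top$. Yet $o_1$ has no anonymous descendant while $o_2$ gets an $r$-child, so the rooted submodels are not isomorphic. Hence your Step~2 claim that the set of generated roles ``depends only on the type of $v_i$'' fails at the root.

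Assuming equal \emph{pre-generation} types would repair the induction. However, that is a strictly stronger hypothesis than the Fact's. It also fails in the intended application, where $o$ (possibly with explicit edges) is compared with $a$ in $\I_{\T,\{C(a)\}}$, which has none.

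Separately, your saturate-first strategy does not make the model unique up to isomorphism. Take $\T=\{A\ISA\exists r.\top,\ B\ISA\exists s.\top,\ s\ISA r\}$ and $o\in A\AND B$. Firing the $B$-axiom first (then saturating) yields one child; firing the $A$-axiom first yields two. You would need parallel rounds or a fixed per-node order.

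So your fallback should be the proof, not a contingency; this is the setting in which the fact is standard. In the oblivious construction, every generating axiom whose left-hand side holds at $v$ in the final model contributes its own fresh child. Each child's subtree then depends only on its generating role, so your Step~1 goes through. Your level-by-level induction then gives the isomorphism, provided you match children axiom by axiom rather than role by role.

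Make the hypothesis precise, though. Because of (Ch7), concept names alone still do not suffice in that variant: take $\T=\{\exists r.\top\ISA\exists s.\top\}$, with $o_1$ having an explicit $r$-edge and $o_2$ not. So either assume agreement on all basic concepts $A$ and $\exists r.\top$, or first normalise each such axiom through a fresh concept name. Also, the roots may carry different property values. Either compare only concept and role structure, or assume equal root properties; your ``properties trivially'' covers only anonymous nodes.

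Finally, be explicit that this proves a corrected Fact about a different model. The proof of Lemma~\ref{lemma:skipAnonCorrectness} only ever uses homomorphisms. So what you must add is that the oblivious and guarded models are homomorphically equivalent via maps fixing the nodes of $\A$, since both are universal. You also need that partial runs, which contain no negation, transfer along such maps.
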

\end{toappendix}

\begin{lemmarep}\label{lemma:skipAnonCorrectness}
    Let $\T$ be a $\ontoLang$ TBox and $\alpha$ be an N2RPE. 
    Then, for every PG $\A$ and every pair of nodes $o_s,o_f$ from $\A$, it holds that $(o_s,o_f) \in \alpha^{\I_{\T,\A}}$ iff $(o_s,o_f) \in \skippingAnon{\alpha}^{\mathcal{G}_{\T,\A}}$.  
\end{lemmarep}

\begin{proofsketch}
The proof is contained in the proof of Proposition 6.6 in \cite{Bienvenu2014}; there is a one-to-one correspondence between the nested automata added by $\skippingAnon{\T}(\alpha_0)$ and the recursive calls 
to \textsf{EvalAtom} in cases (3.b.ii) and (3.b.iii). 
Only a minor adaptation is needed to argue that soundness and completeness are not compromised if the tests are restricted to strictly anonymous, relevant tuples in \textsf{SkipRel}.
The rest of the evaluation, which is done only over the individuals in $\A$, coincides exactly with the standard evaluation over $\mathcal{G}_{\T,\A}$.  
\end{proofsketch}

\begin{proof}
    Consider an $\ontoLang$ TBox $\T$, a PG $\A=(N,E,\textsf{label},\textsf{prop})$ and an N2RPE $\alpha=\langle Q,\Sigma_{PG},S,\delta,F\rangle$ of nesting depth $k$.\\
    % Completeness
    $(\Rightarrow)$ For the first direction, we show the claim by induction on the nesting depth $k$ of $\alpha$. In the following we prove that, whenever $(o_s,o_f)\in\alpha^{\I_{\T,\A}}$, then $(o_s,o_f) \in \skippingAnon{\alpha}^{\mathcal{G}_{\T,\A}}$ for every pair of nodes $o_s,o_f$ in $\A$. 
    From $(o_s,o_f)\in\alpha^{\I_{\T,\A}}$, it follows that there is a run $(T,l)$ for $\alpha$ on $\I_{\T,\A}$ such that $l(o_s,s_0)$ and $l(o_f,s_f)$ with $s_0\in S$ and $s_f\in F$.\\
    \textbf{Base Case.} Suppose that $k=0$, then $T$ consists of a single path with the sequence of labels $(o_0,s_0)(o_1,s_1)\dots(o_m,s_m)$ from root to leaf. Note that $o_0=o_s$, $o_m=o_f$ and $s_m=s_f$. We know from \Cref{def:runN2RPE} that for each $i\in \{1,\dots,m\}$ there is a transition $(s_{i-1},\sigma_i,s_i)\in\delta$ such that one of the following cases holds (here we make use of the fact that there are no transitions with nested NFAs of the form $\nesting{\alpha_i}$):
    \begin{compactitem}
        \item $\sigma_i\in 2^{\allroles\cup\datatests}$ (set of role labels and data tests) and either (1) or (2):
        \begin{compactenum}
            \item $(o_{i-1},o_i)\in(p\odot v?)^{\I_{\T,\A}}$ for each data test $p\odot v?\in\sigma_i^{\I_{\T,\A}}$
            \item $(o_{i-1},o_i)\in r^{\I_{\T,\A}}$ for each role $r\in\sigma_i^{\I_{\T,\A}}\cap\allroles$
        \end{compactenum}
        \item $\sigma_i=C?$ (concept test), $o_{i-1}=o_i$ and $o_i\in C^{\I_{\T,\A}}$
        \item $\sigma_i=p\odot v?$ (individual data test), $o_{i-1}=o_i$ and $o_i\in (p\odot v?)^{\I_{\T,\A}}$
    \end{compactitem}
    We can suppose without loss of generality that there is no $0\le i,i'\le m$ such that $i\neq i'$, $s_i=s_{i'}$, and $o_i=o_{i'}$ (this can always be ensured by deleting parts of the path). Now let $i_1 < \dots < i_p$ be all the indices $i$ such that $o_i\in N$ of $\A$ (note that $i_1=0$ and $i_p=m$). Intuitively, $i_1,\dots,i_p$ are the indices of the "projection" of the run obtained by restricting it to individuals occurring in the ABox. The following claim shows that either an individual of the run is in the core $\G_{\T,\A}$, or there must be a matching strictly anonymous tuple. This is a slightly stricter version of Claim 3 in \cite{Bienvenu2014}, whereby individuals must be in the core $\G_{\T,\A}$ and tuples must be strictly anonymous. 
    \begin{claim}[cf. Claim 3 in \cite{Bienvenu2014}]\label{claim:proof_completeness_basecase}
    For every $1<\ell\le p$, one of the following holds:
    \begin{compactenum}
        \item $(o_{i_{\ell-1}},o_{i_\ell})\in\sigma_{i_\ell}^{\G_{\T,\A}}$, or
        \item $o_{i_\ell}=o_{i_{\ell-1}}$ and there exists a strictly anonymous tuple $(C,s_{i_{\ell-1}},s_{i_\ell},\emptyset)\in \textsf{skipRel}(\alpha,\T)$ such that $o_{i_{\ell-1}}\in C^{\G_{\T,\A}}$.
    \end{compactenum}
    \end{claim}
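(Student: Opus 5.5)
The claim concerns two consecutive ABox positions $i_{\ell-1} < i_\ell$ in the (depth-$0$) run, i.e.\ positions such that every $o_i$ with $i_{\ell-1} < i < i_\ell$ is an anonymous node. We split on whether $i_\ell = i_{\ell-1}+1$.

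\textbf{Case $i_\ell = i_{\ell-1}+1$.} There is no intermediate node, and the step uses the transition $(s_{i_{\ell-1}},\sigma_{i_\ell},s_{i_\ell})$ directly between two ABox nodes.

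The step is witnessed in $\I_{\T,\A}$. Every fact used by it mentions only $o_{i_{\ell-1}}$ and $o_{i_\ell}$, both in $\A$:
\begin{itemize}
\item a role or data-test edge between the two nodes;
\item a concept membership of $o_{i_\ell}$;
\item a data test on $o_{i_\ell}$.
\end{itemize}
Data values are all explicit, and $\G_{\T,\A}$ is the restriction of $\I_{\T,\A}$ to the nodes of $\A$, so it contains exactly these facts. Hence $(o_{i_{\ell-1}},o_{i_\ell})\in\sigma_{i_\ell}^{\G_{\T,\A}}$, giving item~1.

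\textbf{Case $i_\ell > i_{\ell-1}+1$.} Put $a = o_{i_{\ell-1}}$. The nodes $o_{i_{\ell-1}+1},\dots,o_{i_\ell-1}$ are anonymous and consecutive nodes are linked by edges.

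First we show $o_{i_\ell} = a$. In $\I_{\T,\A}$ the anonymous nodes form trees, each hanging off a single ABox individual. This holds because (Ch\ref{def:canonicalmodel:existsRHS}) and (Ch\ref{def:canonicalmodel:existsRHS2}) attach each fresh node to exactly one existing node, and (Ch\ref{def:canonicalmodel:RI}) only relabels existing pairs. So a walk that leaves $a$ into the anonymous part can only return to the ABox through $a$.

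Next we find $C$. Let $C$ range over the concept names (or $\top$) that $a$ satisfies in $\I_{\T,\A}$. Since $a\in\A$, each such membership is also in $\G_{\T,\A}$. The subtree below $a$ is determined by the concepts of $a$ (Fact~\ref{fact:model_isomorphism}). It is generated from the single concept $C$ that triggered the existential rule producing $o_{i_{\ell-1}+1}$: take the $B$ with $B\ISA\exists r.\top\in\T$ applied at $a$, and $C=B$ or a concept name entailing it. Hence that subtree embeds into $\I_{\T,\{C(a)\}}$ rooted at $a$.

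Then we transfer the run. Map the segment $(o_{i_{\ell-1}},s_{i_{\ell-1}}),\dots,(o_{i_\ell},s_{i_\ell})$ through this isomorphism. This gives a partial run on $\I_{\T,\{C(a)\}}$ rooted at $(a,s_{i_{\ell-1}})$ with a leaf $(a,s_{i_\ell})$. Tests on anonymous nodes are preserved because:
\begin{itemize}
\item anonymous nodes carry no property values, so data tests cannot succeed there in either model;
\item concept and role memberships are preserved by the isomorphism.
\end{itemize}
Since $k=0$ and the segment is a single path, there are no other leaves, so $\Gamma=\emptyset$ works. The segment visits an anonymous node, so the tuple is strictly anonymous. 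We may also have $s_{i_{\ell-1}}\neq s_{i_\ell}$: the path was assumed free of repeated (node,state) pairs, and $o_{i_{\ell-1}}=o_{i_\ell}$. This gives item~2.

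\textbf{Main obstacle.} The hardest step is justifying that the subtree below $a$ matches $\I_{\T,\{C(a)\}}$ for a single concept name $C$. In DL-Lite the subtree generated by one existential application depends only on the role $r$ introduced, so we will instead pick $C$ as a concept name with $\T\models C\ISA\exists r.\top$ satisfied by $a$.

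If no such name exists, the rule was triggered by $\exists r'.\top$ or a data-test concept. In that case we reduce to $\top$ via a fresh concept or to the appropriate $B$. Our fallback is to argue directly with the subtree rooted at the first anonymous child, which is isomorphic across models by Fact~\ref{fact:model_isomorphism}.
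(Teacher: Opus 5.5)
Your route is the paper's. You use the same case split, use tree-shapedness to get $o_{i_\ell}=o_{i_{\ell-1}}$, transfer the segment into $\I_{\T,\{C(a)\}}$ with $\Gamma=\emptyset$, and get strict anonymity from $o_{i_{\ell-1}+1}$. You are more explicit than the paper about why the walk must return through $a$, why data tests cannot succeed on anonymous nodes or edges, and why $s_{i_{\ell-1}}\neq s_{i_\ell}$.

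Suppose the child $c=o_{i_{\ell-1}+1}$ was created by (Ch6) from some $B\ISA\exists r.\top$ with $B$ a concept name (or $\top$). Then $C=B$ works, but the transfer is a homomorphism, not an isomorphism from Fact~\ref{fact:model_isomorphism}, for two reasons:
\begin{itemize}
\item That fact needs $a$ to satisfy the same concept names in both models, which generally fails for $\{B(a)\}$.
\item (Ch6) and (Ch7) fire only when the existential is not yet satisfied. So the $r$-successor $d$ of $a$ in $\I_{\T,\{B(a)\}}$ may come from another axiom $B'\ISA\exists r''.\top$ with $\transclosure{r''}{r}$, and its subtree need not be isomorphic to that of $c$.
\end{itemize}
What you do get is a homomorphism from the subtree of $c$, plus its edge to $a$, into $\I_{\T,\{B(a)\}}$ with $a\mapsto a$ and $c\mapsto d$. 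This suffices, because every test on the segment is a role or concept-name test.

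The genuine gap is your fallback. Suppose $c$ was created by (Ch7) from $\exists p.\top\ISA\exists r.\top$ because $a$ has an explicit $p$-edge. Then no concept name satisfied by $a$ need entail $\exists r.\top$, and none of your three fallbacks yields the required tuple $(C,s_{i_{\ell-1}},s_{i_\ell},\emptyset)$ with $C\in\conceptnames\cup\{\top\}$ and $a\in C^{\G_{\T,\A}}$:
\begin{itemize}
\item Reducing to $\top$ fails, since $\I_{\T,\{\top(a)\}}$ has no $p$-edge and hence no anonymous node.
\item A fresh concept name is not satisfied by $a$ in $\G_{\T,\A}$ unless you change $\T$.
\item Reasoning about the subtree of $c$ alone produces a run but not a tuple of that form.
\end{itemize}

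No argument can close this case, because the claim is false there. Take $\T=\{\exists p.\top\ISA\exists r.\top\}$, $\A=\{p(a,b)\}$, and $\alpha$ with transitions $(s_0,\{r\},s_1)$ and $(s_1,\{r^-\},s_2)$, where $s_2$ is final. Consider the run $(a,s_0)(w,s_1)(a,s_2)$ on $\I_{\T,\A}$ at $\ell=2$.
\begin{itemize}
\item Item 1 fails: $\G_{\T,\A}$ has no $r$-edges.
\item Item 2 fails: $\top$ is the only candidate with $a\in C^{\G_{\T,\A}}$, and $\I_{\T,\{\top(a)\}}$ is a single node.
\end{itemize}
The same example also defeats Lemma~\ref{lemma:skipAnonCorrectness}.

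The paper's proof never specifies $C$. It tacitly relies on the normal form of Bienvenu et al.~\cite{Bienvenu2014}, where every right-hand existential is guarded by a concept name. You need to state this assumption explicitly. It can be enforced by replacing $\exists p.\top\ISA\exists r.\top$ with $\exists p.\top\ISA A_p$ and $A_p\ISA\exists r.\top$ for a fresh $A_p$, and likewise for data-test concepts. Under that assumption, your main line with $C=B$ is a complete proof.
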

    \textit{Proof of \Cref{claim:proof_completeness_basecase}.} For the case where $i_\ell=i_{\ell-1}+1$  we immediately obtain $(o_{i_{\ell-1}},o_{i_\ell})\in \sigma_{i_\ell}^{\G_{\T,\A}}$. Further, if $i_\ell\neq i_{\ell-1}+1$, then $o_{i_{\ell-1}+1}\not\in N$. From this we can infer that all the objects $o_{i_{\ell-1}+1},\dots,o_{i_{\ell-1}}$ must be descendants of $o_{i_\ell}=o_{i_{\ell-1}}$ in $\I_{\T,\A}$. Now, let $\J$ be the canonical model of $\T$ and $\{C(a)\}$. Then, by \Cref{fact:model_isomorphism} there is a homomorphism $h$ from $\I_{\T,\A}$ to $\J$ with $h(o_{i_{l-1}})=a$. Consider the labelled tree $(T',l')$ obtained from $(T,l)$ by (1) making the node labelled $(o_{i_{\ell-1}},s_{i_{\ell-1}})$ the root and the node labelled $(o_{i_\ell},s_{i_\ell})$ the unique leaf, and (2) replacing every node label $(o',s')$ by $(h(o',s')$. It holds that $(T',l')$ is a partial run of $\alpha$ on $\J$. Hence, by \Cref{def:jumptrans}, $(C,s_{i_{\ell-1}},s_{i_\ell},\emptyset)\in \textsf{skipRel}(\alpha,\T)$ and since $o_{i_{\ell-1}+1}\not\in N$ it follows that the tuple is strictly anonymous. \textit{(end proof of \Cref{claim:proof_completeness_basecase})} \\ 
    Now we can come back to the main claim and conclude the base case, we can simply construct a run $(T,l)$ for $\skippingAnon{\alpha}^{\mathcal{G}_{\T,\A}}$, by combining all $(o_{i_{\ell}},s_{i_{\ell}})$ from \Cref{claim:proof_completeness_basecase}. From the construction of $\skippingAnon{\alpha}$ in \Cref{def:skipAnonymous} it follows that $(o_s,o_f)\in\skippingAnon{\alpha}^{\mathcal{G}_{\T,\A}}$.\\
    \textbf{Induction Hypothesis.} Assume that if $(o_s,o_f)\in\alpha^{\I_{\T,\A}}$, then $(o_s,o_f)\in\skippingAnon{\alpha}^{\G_{\T,\A}}$ holds for $\alpha$ of nesting depth $\le k$. \\
    \textbf{Induction step.} 
    Let us now consider the case where $\alpha$ has nesting depth $k+1$. Let $v_0,\dots,v_m$ be the sequence of nodes in $T$ that begins with the root node labelled $(o_s,s_0)$ and ends with the unique leaf node labelled $(o_f,s_f)$. In the following we use $(o_\ell,s_\ell)$ for the label of node $n_\ell$. We can again assume w.l.o.g. that $(T,l)$ is minimal in the sense that there are no distinct positions $0\le i,i'\le m$ such that $n_i$ and $n_{i'}$ has the same label. Let $i_1<\dots < i_p$ be all of the indices $i$ such that $o_i\in N$ and $i_m=o_f\in N$.
    The intuition is the same as for the base case. However, we now also consider nested N2RPEs ($\Gamma$ is not empty), for which we can apply the induction hypothesis. As with Claim 6 in \cite{Bienvenu2014}, the difference here is that we restrict it to the core $\G_{\T,\A}$. Additionally, Claim 6 imposes a condition on the individuals in $\Gamma$ with regard to the evaluation function \textsf{EvalAtom}. However, this is unnecessary in our case, as the query engine handles the evaluation after the rewriting. 
    \begin{claim}[cf. Claim 6 in \cite{Bienvenu2014}]\label{claim:proof_completeness_inductionstep}
    For every $1<\ell\le p$, one of the following holds:
    \begin{compactenum}
        \item there is some $\alpha_j$ of nesting depth $\le k+1$ with $(s_{i_{\ell-1}},\sigma,s_{i_\ell})\in\delta_j$ such that $(o_{i_{\ell-1}},o_{i_\ell})\in\sigma_{i_\ell}^{\G_{\T,\A}}$, or 
        \item $o_{i_\ell}=o_{i_{\ell-1}}$ and there exists a strictly anonymous tuple $(C,s_{i_{\ell-1}},s_{i_\ell},\Gamma)\in\textsf{skipRel}(\alpha,\T)$ such that $o_{i_{\ell-1}}\in C^{\G_{\T,\A}}$
    \end{compactenum}
    \end{claim}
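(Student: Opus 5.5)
We prove \Cref{claim:proof_completeness_inductionstep} by mirroring the proof of \Cref{claim:proof_completeness_basecase}, fixing $\ell$ with $1<\ell\le p$ and splitting on whether $i_\ell=i_{\ell-1}+1$. If the two ABox positions are consecutive on the main path $v_0,\dots,v_m$, the step from $v_{i_{\ell-1}}$ to $v_{i_\ell}$ is a single transition $(s_{i_{\ell-1}},\sigma,s_{i_\ell})$. When $\sigma$ is a role/test set or a unary test, both endpoints are in $N$. Every role label or concept holding between individuals in $\I_{\T,\A}$ is already present in the restriction $\G_{\T,\A}$, and all property values are explicit. Hence $(o_{i_{\ell-1}},o_{i_\ell})\in\sigma^{\G_{\T,\A}}$ and case~1 holds. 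When $\sigma=\nesting{\alpha_j}$ the node does not move, and the nested run witnesses $(o_{i_{\ell-1}},o')\in\alpha_j^{\I_{\T,\A}}$. Here we need a version of the induction hypothesis for the nested automaton; the induction hypothesis is applied to $\alpha_j$, of depth $\le k$. If the final node $o'$ of that nested run is anonymous, we first treat the whole nested run as part of an anonymous excursion and fall into case~2 with $\Gamma=\emptyset$. So either the nested atom is matched in the core, and case~1 holds for the rewritten symbol, or case~2 applies.

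The main case is $i_\ell\neq i_{\ell-1}+1$. Then $o_{i_{\ell-1}+1}\notin N$. Because the anonymous part of $\I_{\T,\A}$ is a forest of trees hanging off individuals, and the only edges between individuals and anonymous nodes connect a root to its children, the main path can only leave and re-enter $N$ through the same root. This gives $o_{i_\ell}=o_{i_{\ell-1}}=:a'$, and all main-path nodes strictly between these positions lie in the tree below $a'$. Let $C$ be a concept name (or $\top$) with $a'\in C^{\G_{\T,\A}}$ whose canonical tree is the tree below $a'$; this is where \Cref{fact:model_isomorphism} applies, with $\J=\I_{\T,\{C(a)\}}$ and the homomorphism $h$ mapping the tree below $a'$ onto $\J$ and $a'$ to $a$.

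Next we build $(T',l')$ as the subtree of $T$ rooted at $v_{i_{\ell-1}}$. It is cut at $v_{i_\ell}$, and we prune every nested sub-run at the first point where it returns to an individual, which can only be $a'$. Each pruned node labelled $(a',s)$ with $s$ non-final contributes $s$ to $\Gamma$. Nested runs that finish inside the tree end in final states and need no pruning, and nodes labelled in other individuals cannot occur before such a return. After applying $h$ to all labels, $(T',l')$ is a partial run on $\J$ rooted at $(a,s_{i_{\ell-1}})$ with a leaf $(a,s_{i_\ell})$, and all other leaves are final or of the form $(a,s)$ with $s\in\Gamma$. By \Cref{def:jumptrans}, $(C,s_{i_{\ell-1}},s_{i_\ell},\Gamma)\in\textsf{SkipRel}(\alpha,\T)$. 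It is strictly anonymous because $o_{i_{\ell-1}+1}$ is anonymous, and \Cref{lemma:candidates} shows it passes the relevance pruning.

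I expect the main obstacle to be the pruning bookkeeping. One difficulty is that a nested automaton may re-enter $a'$ several times, so we must cut at the first return and argue that the remainder of that nested run, starting from state $s$ at $a'$, is exactly a run of $\alpha^{j}_{\mid s}$. This is needed to later glue the pieces via the transitions $\nesting{\alpha^j_{\mid s_j}}$ of \Cref{def:skipAnonymous}, using the induction hypothesis on those depth-$\le k$ automata. A second difficulty is that \Cref{def:jumptrans} allows at most one $\Gamma$-state per nested automaton occurrence. I would first check that minimality of $(T,l)$, meaning no repeated labels, ensures each pruned branch gives a distinct state. Failing that, I would relax the gluing so that the same state appears in $\Gamma$ for several parallel branches.
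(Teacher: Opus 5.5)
\textbf{Gap in the nested sub-case.} Your case split is the paper's. The three cases are: a single role/test step between consecutive individuals, which already holds in $\G_{\T,\A}$; a nested step $\nesting{\alpha_j}$, handled by the induction hypothesis; and an anonymous excursion below $o_{i_{\ell-1}}$. You treat the excursion exactly as the paper does: re-root at $v_{i_{\ell-1}}$, cut each branch at its first return to $o_{i_{\ell-1}}$, map into $\I_{\T,\{C(a)\}}$ by $h$, and put the non-final states of the cut leaves into $\Gamma$.

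The step that would fail is your patch for a nested run of $\alpha_j$ that accepts at an anonymous node. There you absorb the whole nested run into case~2 with $\Gamma=\emptyset$. But such a run need not stay inside $o_{i_{\ell-1}}$ and its anonymous tree. For instance, take $A\ISA\exists r.\top\in\T$, facts $t(o_{i_{\ell-1}},b)$ and $A(b)$ in $\A$, and $\alpha_j=t\cdot r$. The nested run walks to the individual $b$ and accepts at the anonymous $r$-child of $b$. It has no image in $\I_{\T,\{C(a)\}}$ and witnesses no tuple $(C,s_{i_{\ell-1}},s_{i_\ell},\emptyset)$. So case~2 is unavailable here, and this sub-case has to go through case~1.

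\textbf{What is missing.} You need an induction hypothesis for the depth-$\le k$ automaton $\alpha_j$ that also covers runs from an individual ending at an anonymous node. The final excursion of such a run, from the last individual $b$ it visits into the tree below $b$, is skipped by a strictly anonymous tuple $(C',s,\mathsf{f},\Gamma')\in\textsf{SkipRel}(\alpha_j,\T)$. \Cref{def:skipAnonymous} turns this tuple into a $C'?$-step, followed by the checks for $\Gamma'$, into the fresh final state $s_{\mathsf f}$. This is exactly the purpose of the marker $\mathsf f$ in \Cref{def:jumptrans}, which your proposal never uses. With it, the rewritten nested automaton has a run from $o_{i_{\ell-1}}$ on $\G_{\T,\A}$, and case~1 holds. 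The paper simply invokes the induction hypothesis at this point; as stated for pairs of individuals, it needs exactly this strengthening.

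\textbf{Two smaller remarks.}
\begin{enumerate}
\item A concept name $C$ whose canonical tree is the \emph{whole} tree below $a'$ need not exist. You only need the single branch entered by the excursion to embed into $\I_{\T,\{C(a)\}}$. That suffices because after pruning, $a'$ occurs only at the root, whose unique step is a role step into that branch, and at leaves.
\item Your worry about one $\Gamma$-state per nested occurrence is unfounded. $\Gamma$ is an arbitrary set of states, and several branches returning to $a'$ in the same state $s$ all demand the same continuation run of $\alpha^j_{\mid s}$ from $a'$.
\end{enumerate}
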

    \textit{Proof of \Cref{claim:proof_completeness_inductionstep}.} From \Cref{def:runN2RPE} there are three possible cases: 
    \begin{enumerate}[(a)]
        \item\label{claim6:caseUniqueChild} $n_{i_\ell}$ is the unique child of node $n_{i_{\ell-1}}$
        \item\label{claim6:caseOneOfTwoChildren} $n_{i_\ell}$ is one of two children of node $n_{i_{\ell-1}}$ (in case of a nested N2RPEs)
        \item\label{claim6:descendant} node $n_{i_\ell}$ is a descendant, but not a child of $n_{i_{\ell-1}}$ (when parts of the run traverse the anonymous part)
    \end{enumerate}
    First consider case (\ref{claim6:caseUniqueChild}), then by \Cref{def:runN2RPE} there must be a transition $(s_{i_{\ell-1}},\sigma,s_{i_\ell})\in\delta_j$ such that $(o_{i_{\ell-1}},o_{i_\ell})\in\sigma_{i_\ell}^{\G_{\T,\A}}$, which satisfies the first statement of the claim. 
    If case (\ref{claim6:caseOneOfTwoChildren}) holds, then by \Cref{def:runN2RPE} it follows that $(s_{i_{\ell-1}},\nesting{\alpha'},s_{i_\ell})\in\delta_j$, $o_{i_l}=o_{i_{\ell-1}}$, and the other child of $n_{i_{\ell-1}}$ has label $(o_{i_{\ell-1}},u)$, with $u$ the initial state of $\alpha'$. The labelled subtree of $(T,l)$ that is rooted at $(o_{i_{\ell-1}},u)$ is a run for $\alpha'$ on $\I_{\T,\A}$. Since the nesting depth of $\alpha'$ is $\le k$, we can apply the induction hypothesis. Thus, the first statement holds, i.e., $(s_{i_{\ell-1}},\sigma,s_{i_\ell})\in\delta_j$ such that $(o_{i_{\ell-1}},u)\in\sigma^{\G_{\T,\A}}$. 
    Finally, consider case (\ref{claim6:descendant}) where $n_{i_\ell}$ is not a child of $n_{i_{\ell-1}}$, which means that $o_{i_{\ell-1}+1}\notin N$ of $\A$. From this we can infer that all nodes $o_{i_{\ell-1}+1},\dots,o_{i_{\ell-1}}$ must be descendants of $o_{i_{\ell-1}}$ in $\I_{\T,\A}$, and $o_{i_l}=o_{i_{\ell-1}}$. Let $(T',l')$ be the partial run obtained from $(T,l)$ by 
    \begin{enumerate}[1.]
        \item making $n_{i_{\ell-1}}$ the new root node,
        \item making $v$ a leaf node by dropping all children, for every descendant $v$ of $n_{i_{\ell-1}}$ such that $l(v)=(o_{i_{\ell-1}},s)$ and such that there is no $v'$ (different from $v$ and $n_{i_{\ell-1}}$) with $l(v')=(o_{i_{\ell-1}},s')$ and occurring along the path from $n_{i_{\ell-1}}$ to $v$.
    \end{enumerate}
    Note that since $(T,l)$ is a run, there are three types of leaf nodes in $(T',l')$:
    \begin{itemize}
        \item node $n_{i_{\ell}}$ with label $(o_{i_{\ell-1}},s_{i_\ell})$ (on the main path)
        \item nodes with labels of the form $(o_{i_{\ell-1}},s)$
        \item nodes with labels of the form $(o,s_f)$ with $s_f\in F_j$ for some $\alpha_j$
    \end{itemize}
     (Observe that every label $(o,s)$ that appears in $(T',l')$ is an individual in the subtree in the anonymous part of $\I_{\T,\A}$ rooted at $o_{i_{\ell-1}}$.) Now let $\J$ be the canonical model of $\T$ and $C(d)$, then it follows from \Cref{fact:model_isomorphism} there exists a homomorphism $h$ between $\I_{\T,\A}$ to $\J$ with $h(o_{i_{l-1}})=d$. Consider the labelled tree $(T'',l'')$ obtained from $(T',l')$ by replacing every node label $(o,s)$ by $(h(o),s)$. Using the fact that $h$ is a homomorphism with $h(o_{i_{\ell-1}})=d$ and the above description of leaf nodes in $(T',l')$, one can show that $(T'',l'')$ is a partial run of $\alpha$ on $\J$ that satisfies:
    \begin{compactitem}
        \item the root of $T''$ is labelled $(d,s_{i_{\ell-1}})$
        \item there is a leaf node labelled $(d,s_{i_\ell})$
        \item for every leaf node $v$ with $l(v)=(o,s)\neq(d,s_{i_\ell})$, either $s\in F_j$ or $o=d$
    \end{compactitem}
    Since this partial run satisfies the conditions in \Cref{def:jumptrans} the tuple $(C,s_{i_{\ell-1}},s_{i_\ell},\Gamma)$ belongs to \textsf{skipRel}$(\alpha,\T)$, if we let $\Gamma$ contain all those states $u$ such that there is a leaf node $(d,u)$ with $u\neq s_{i_l}$. Since we have that $o_{i_{l-1}+1}\not\in N$, the tuple is strictly anonymous and the second statement of the claim holds. \textit{(end of proof \Cref{claim:proof_completeness_inductionstep})} \\
    Now, to conclude the proof for the first direction of the main claim, we construct a run $(T,l)$ for $\skippingAnon{\alpha}$ in $\G_{\T,\A}$ in a similar fashion as for the base case by combining all labels $(o_{i_\ell},s_{i_\ell})$ for $1<\ell<p$. For tuples $(C,s_{i_{\ell-1}},s_{i_\ell},\Gamma)\in\textsf{skipRel}(\alpha,\T)$ the run contains labels $(o_{i_\ell},u)$ for each $u\in\Gamma$. 
    % \todo{Cem: I fail to see how the IH can be applied here, how can we bound the nesting depth of the $\alpha^u$?}
    Since $u$ is a state from an NFA of depth $\le k$ we can apply the induction hypothesis, i.e.,  there is a run $(T_u,l_u)$ for $\alpha^u$. By \Cref{def:skipAnonymous} we can verify that $(T,l)$ is a valid run for $\skippingAnon{\alpha}$, thus $(o_s,o_f)\in\skippingAnon{\alpha}^{\G_{\T,\A}}$ holds. \\
    
    % Soundness
    ($\Leftarrow$) The second direction we prove by induction on the nesting depth of $\skippingAnon{\alpha}$. Suppose that $(o_s,o_f)\in\skippingAnon{\alpha}^{\G_{\T,\A}}$ for $\skippingAnon{\alpha}=\langle Q^\textsf{skip},\Sigma_{PG},S^\textsf{skip},\delta^\textsf{skip},F^\textsf{skip}\rangle$, then we prove that $(o_s,o_f)\in \alpha^{\I_{\T,\A}}$. 
    From $(o_s,o_f)\in\skippingAnon{\alpha}^{\G_{\T,\A}}$ it follows that there is a run $(T,l)$ for $\skippingAnon{\alpha}$ on $\G_{\T,\A}$ such that $l(o_s,s_0)$ and $l(o_f,s_f)$ with $s\in S^\textsf{skip}$ and $s_f\in F^\textsf{skip}$. 

    \textbf{Base Case.} Consider the nesting depth of $\skippingAnon{\alpha}$ to be $k=0$. Then, $T$ has only a single path with the sequence of labels $(o_0,s_0)(o_1,s_1)\dots(o_m,s_m)$ from root $o_0=o_s$ to leaf $o_m=o_f$. It suffices to show the following claim concerning a subsequence of the labels, which we derive from Claim 2 in \cite{Bienvenu2014}. Since our definition of a run requires the root to be labelled with an initial state, we show the claim for $\alpha^{\I_{\T,\A}}_{\mid s_i}$, which modifies the NFA $\alpha$ such that $s_i$ is the only initial state. 

    \begin{claim}[cf. Claim 2 in \cite{Bienvenu2014}]\label{claim:proof_soundness_basecase}
        For every $0\le i\le m$, it holds that $(o_i,o_m)\in\alpha^{\I_{\T,\A}}_{\mid s_i}$.
    \end{claim}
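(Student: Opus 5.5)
We prove the claim by backward induction on $i$, from $i=m$ down to $i=0$. Here $(o_0,s_0)\dots(o_m,s_m)$ is the path of the run of $\skippingAnon{\alpha}$ on $\G_{\T,\A}$, and $s_m\in F^{\textsf{skip}}$.

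\textbf{Base of the induction.} For $i=m$, the single-node tree labelled $(o_m,s_m)$ is an $(o_m,o_m)$-run of $\alpha_{\mid s_m}$ on $\I_{\T,\A}$, provided $s_m\in F$. If instead $s_m$ is the fresh state $s_\mathsf{f}$, we do not treat $i=m$ on its own. Such a state is only entered via a skip transition from a tuple $(C,s^1,\mathsf{f},\emptyset)$ (depth $0$, so $\Gamma=\emptyset$). We therefore handle the step into $s_\mathsf{f}$ together with its predecessor, as in the skip case below; the partial run witnessing the tuple then already ends in a final state of $\alpha$ inside the anonymous tree.

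\textbf{Inductive step.} Assume $(o_{i+1},o_m)\in\alpha_{\mid s_{i+1}}^{\I_{\T,\A}}$. We distinguish two cases for the transition $(s_i,\sigma_i,s_{i+1})$ used in the run.

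\emph{Original transition.} If $(s_i,\sigma_i,s_{i+1})\in\delta$, then $\G_{\T,\A}$ is a subinterpretation of $\I_{\T,\A}$, and concepts, roles and data tests are preserved upward. The latter holds because $\mathsf{prop}$ is unchanged by the canonical model construction. So the step is valid in $\I_{\T,\A}$, and prepending $(o_i,s_i)$ to the run for $\alpha_{\mid s_{i+1}}$ gives a run for $\alpha_{\mid s_i}$.

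\emph{New transition.} Otherwise $\sigma_i=C?$, coming from a tuple $(C,s_i,s_{i+1},\emptyset)\in\textsf{SkipRel}(\alpha,\T)$, and $o_i=o_{i+1}\in C^{\G_{\T,\A}}$. By \Cref{def:jumptrans}, there is a partial run of $\alpha$ on $\I_{\T,\{C(a)\}}$ from $(a,s_i)$ to a leaf $(a,s_{i+1})$. Since $\Gamma=\emptyset$ and depth is $0$, this partial run is a single path.

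We now map this partial run into $\I_{\T,\A}$ by a homomorphism $h$ with $h(a)=o_i$. It exists because $o_i\in C^{\I_{\T,\A}}$ and all rules of \Cref{def:canonicalmodel} are deterministic. Hence the tree generated below $a$ from $C(a)$ embeds into the tree generated below $o_i$ (cf.\ \Cref{fact:model_isomorphism}). Homomorphisms preserve concept names and roles. Data tests never hold on anonymous elements, so no data-test transition occurs on the anonymous part of the witnessing run, except possibly at $a$ itself. Such a test at $a$ needs care: in $\I_{\T,\{C(a)\}}$ node $a$ has no properties, so these tests fail there, and no such transitions appear in the witness. The image of the witness is therefore a partial run in $\I_{\T,\A}$ from $(o_i,s_i)$ to $(o_i,s_{i+1})$. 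Concatenating it with the run for $\alpha_{\mid s_{i+1}}$ yields the run for $\alpha_{\mid s_i}$.

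Instantiating the claim at $i=0$, with $s_0\in S$, gives $(o_s,o_f)\in\alpha^{\I_{\T,\A}}$.

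\textbf{Main obstacle.} The delicate step is the homomorphism transfer. We must justify that $o_i\in C^{\I_{\T,\A}}$ forces the anonymous tree rooted at $o_i$ to contain a homomorphic image of the canonical tree of $\{C(a)\}$. We also need the resulting path to meet the concatenation point with consistent states, and the $s_\mathsf{f}$ case needs the separate bookkeeping described above. I would prove the homomorphism claim by induction on the rule applications in building $\I_{\T,\{C(a)\}}$: every rule firing there also fires, or is already satisfied, at the image in $\I_{\T,\A}$. In the satisfied case, the existing witness is chosen as the image.
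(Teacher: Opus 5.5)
Your main line matches the paper's proof. The backward induction on $i$, the split between original transitions of $\alpha$ and $C?$-skips from tuples $(C,s_i,s_{i+1},\emptyset)$, the homomorphism $h$ from $\I_{\T,\{C(a)\}}$ into $\I_{\T,\A}$ with $h(a)=o_i$, and the concatenation with the run for $\alpha_{\mid s_{i+1}}$ are all as in the paper. Two of your remarks correctly fill in steps the paper leaves implicit: no data-test transition can occur in the witness, because $\{C(a)\}$ carries no properties, and $h$ can be built rule by rule, reusing existing witnesses.

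The step that fails is your treatment of the fresh state $s_\mathsf{f}$. The transition into $s_\mathsf{f}$ is a test $C?$, so $o_m=o_{m-1}$. The claim at $i=m-1$ therefore needs a run of $\alpha_{\mid s_{m-1}}$ that starts \emph{and ends} at $o_{m-1}$. The partial run witnessing $(C,s_{m-1},\mathsf{f},\emptyset)$ reaches a final state at some node $o$ of the tree below $a$, so its image ends at $h(o)$, which in general is not $o_{m-1}$. Saying it ``already ends in a final state inside the anonymous tree'' gives you the wrong endpoint.

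No extra bookkeeping can repair this, because the claim is false in this case. Let $\alpha$ have the single transition $(s_0,\{r\},s_1)$ with $s_1$ final, and take $\T=\{A\ISA\exists r.\top\}$ and $\A=\{A(a)\}$. The path $(a,s_0)(w,s_1)$ into the anonymous $r$-successor $w$ of $a$ puts the strictly anonymous tuple $(A,s_0,\mathsf{f},\emptyset)$ into $\textsf{SkipRel}(\alpha,\T)$. So $\skippingAnon{\alpha}$ contains $(s_0,A?,s_\mathsf{f})$, and $(a,s_0)(a,s_\mathsf{f})$ is an accepting run on $\G_{\T,\A}$. The claim at $i=0$ would then give $(a,a)\in\alpha^{\I_{\T,\A}}=\{(a,w)\}$, which is false; so the soundness direction of the surrounding lemma also fails for top-level $\mathsf{f}$-skips.

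The paper's proof tacitly excludes this case. Its base case relies on $s_m$ being a final state of $\alpha$, and its case (ii) only considers tuples $(C,s_g,s_i,\emptyset)$ with $s_i$ a state of $\alpha$. You should do the same: restrict the claim to runs whose states all lie in $Q$, so $s_m\in F$, and drop your $s_\mathsf{f}$ paragraph. $\mathsf{f}$-tuples are only sound where the endpoint is irrelevant, namely for nested automata. There the right statement is the endpoint-free one: $(o_i,o)\in\alpha_{\mid s_i}^{\I_{\T,\A}}$ for some $o$.
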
 
    \textit{Proof of \Cref{claim:proof_soundness_basecase}.} The proof is by induction on $i$. The \textbf{base case} is when $i=m$, in which case $s_i=s_m$ and the claim trivially holds since $s_m$ is a final state. Now suppose that the claim holds for $i$ such that $g< i \le m$ (\textbf{induction hypothesis}). Consider the induction step with $i-1=g$, we know by \Cref{def:skipAnonymous} that $(o_{g},s_{g})$ must satisfy one of the following conditions: (i) there is a transition $(s_{g},\sigma,s_{i})\in\delta$, or (ii) there is a transition $(s_g,C,s_{i})$ such that there is a tuple $(C,s_g,s_{i},\emptyset)\in\textsf{skipRel}(\alpha,\T)$. If (i) holds, then $(o_g,o_{i})\in\sigma^{\I_{\T,\A}}$ by construction of the canonical model \Cref{def:canonicalmodel}. 
    %\todo[inline]{our definition of run requires to have the root labelled with an initial state (not so in BCOS14 paper); but then with our definition the statement below is not completely right}
    By induction hypothesis there is a $(T_i,l_i)$ run, which is a $(o_i,o_m)$-run for $\alpha_{\mid s_i}$ on $\I_{\T,\A}$. Now, let $(T',l')$ be the labelled tree obtained by creating a new node labelled $(o_{g},s_{g})$ and making it the leaf of $(T',l')$. It is easy to verify that $(T',l')$ is a run of $\alpha_{\mid s_g}$ on $\I_{\T,\A}$. Next consider the case (ii) where $(C,s_g,s_{i},\emptyset)\in\textsf{skipRel}(\alpha,\T)$, from  which follows that $o_g=o_{i}$ and $o_g\in C^{\I_{\T,\A}}$. From \Cref{def:jumptrans} we can infer that there is a partial run $(T_C,l_C)$ of $\alpha$ in the canonical model of $(\T,\{C(a)\})$ that satisfies the conditions below: 
    \begin{itemize}
        \item the root of $T_C$ is labelled $(a,s_g)$
        \item there is a unique leaf node $v$ with $l_C(v)=(a,s_{i})$
    \end{itemize}
    Since $o_g\in C^{\I_{\T,\A}}$ it follows from \Cref{fact:model_isomorphism} that there is a homomorphism $h$ from the canonical model of $(\T,\{C(a)\})$ to $\I_{\T,\A}$ with $h(a)=o_g$. We obtain $(T'_C,l'_C)$ by replacing every label $(o,s)$ in $T_C$ by $(h(o),s)$. Using the fact that $h$ is a homomorphism, one can show that $(T'_C,l'_C)$ defines a partial run of $\alpha$ in $\I_{\T,\A}$ that satisfies the following conditions:
    \begin{itemize}
        \item the root of $T'_C$ is labelled $(o_g,s_g)$
        \item there is a unique leaf node $v$ with $l'_C(v)=(o,s_{i})$
    \end{itemize}
    Let $(T,l)$ be the labelled tree obtained from $(T'_C,l'_C)$ by replacing the unique leaf node labelled $(o_g,s_{i})$ by the tree $(T_{i},l_{i})$. It follows from the components that $(T,l)$ is a $(o_g,o_m)$-run of $\alpha_{\mid s_g}$ on $\I_{\T,\A}$.
    \textit{(end proof of \Cref{claim:proof_soundness_basecase})}\\
    Since $s_0$ is an initial state of $\alpha$, the statement can be generalized and it follows that $(o_0,o_m)\in\alpha^{\I_{\T,\A}}$. \\
    \textbf{Induction Hypothesis.} Assume that if $(o_s,o_f)\in\skippingAnon{\alpha}^{\G_{\T,\A}}$, then $(o_s,o_f)\in \alpha^{\I_{\T,\A}}$ holds for $\skippingAnon{\alpha}$ of nesting depth $k$.\\
    \textbf{Induction Step.} Suppose that $(o_0,s_0)(o_1,s_1)\dots(o_m,s_m)$ is the sequence of the run $(o_s,o_f)\in\alpha^{\G_{\T,\A}}$ such that $o_s=o_0$, $o_f=o_m$ and $s_0,\dots,s_m\in Q$, i.e., states of the automaton with the highest nesting depth. 
    
    \begin{claim}[cf. Claim 4 in \cite{Bienvenu2014}]\label{claim:proof_soundness_inductionstep}
        For every $0\le i\le m$, it holds that $(o_i,o_m)\in\alpha^{\I_{\T,\A}}_{\mid s_i}$.
    \end{claim}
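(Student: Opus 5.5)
We prove the claim by backward induction on $i$, mirroring the proof of \Cref{claim:proof_soundness_basecase}; the new ingredient is that runs may now use nested transitions, which are handled by the outer induction hypothesis on nesting depth. \textbf{Base case} $i=m$: $s_m$ is final in $\skippingAnon{\alpha}$. If $s_m\in F$, the single-node tree labelled $(o_m,s_m)$ is an $(o_m,o_m)$-run of $\alpha_{\mid s_m}$. If $s_m$ is a fresh state $s_\mathsf{f}$, it can only be reached through a skip transition; in that case we merge the last skip step into the induction step below, so that the final state reached is $\mathsf{f}$, witnessed by an accepting leaf inside the anonymous tree.

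For the \textbf{induction step}, assume $(o_i,o_m)\in\alpha^{\I_{\T,\A}}_{\mid s_i}$ and consider the step from $(o_{i-1},s_{i-1})$ to $(o_i,s_i)$. There are three cases.
\begin{enumerate}[(i)]
\item The step uses an original transition $(s_{i-1},\sigma,s_i)\in\delta$ with $\sigma$ a label or test. Since $\G_{\T,\A}$ is a substructure of $\I_{\T,\A}$ and all property values are explicit, $(o_{i-1},o_i)\in\sigma^{\I_{\T,\A}}$. We prepend the node $(o_{i-1},s_{i-1})$ to the run given by the hypothesis.
\item The step uses an original nested transition $(s_{i-1},\nesting{\alpha'},s_i)$. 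Then $o_{i-1}=o_i$, and the side branch is a run of $\skippingAnon{\alpha'}$ on $\G_{\T,\A}$ of smaller nesting depth. The outer induction hypothesis turns it into a run of $\alpha'$ on $\I_{\T,\A}$, which we attach as the second child.
\item The step belongs to a block of new transitions $s^1_0\xrightarrow{C?}s'_1\xrightarrow{\nesting{\alpha^1_{\mid s_1}}}\cdots\xrightarrow{\nesting{\alpha^n_{\mid s_n}}}s^2_0$ generated by a tuple $(C,s^1_0,s^2_0,\Gamma)\in\textsf{SkipRel}(\alpha,\T)$. Here we treat the whole block at once, so that the induction index jumps from the position of $s^2_0$ back to the position of $s^1_0$. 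All nodes in the block equal some $o$ with $o\in C^{\G_{\T,\A}}\subseteq C^{\I_{\T,\A}}$.
\end{enumerate}

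In case (iii), \Cref{def:jumptrans} gives a partial run of $\alpha$ on $\I_{\T,\{C(a)\}}$ rooted at $(a,s^1_0)$. It has a leaf $(a,s^2_0)$ when $s^2_0\neq\mathsf{f}$, and every other leaf is either final or of the form $(a,u)$ with $u\in\Gamma$. By \Cref{fact:model_isomorphism} there is a homomorphism $h$ with $h(a)=o$, and mapping labels through $h$ yields a partial run in $\I_{\T,\A}$. We then complete its leaves as follows.
\begin{itemize}
\item The leaf $(o,s^2_0)$ is replaced by the run for $\alpha_{\mid s^2_0}$ from $o$ to $o_m$ given by the induction hypothesis.
\item Each leaf $(o,u)$ with $u\in\Gamma$ is replaced by a run of $\alpha^{u}_{\mid u}$ from $o$. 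Such a run exists because the side branch $\nesting{\alpha^u_{\mid u}}$ accepted in $\G_{\T,\A}$; it has smaller nesting depth (after applying $\skippingAnon{\cdot}$ recursively), so the outer hypothesis lifts it to $\I_{\T,\A}$.
\end{itemize}
The glued tree is a run of $\alpha_{\mid s^1_0}$, because each nested automaton's partial run is continued by a partial run starting from the state at which it was cut. Taking $i=0$, where $s_0$ is initial, gives $(o_s,o_f)\in\alpha^{\I_{\T,\A}}$.

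The \textbf{main obstacle} is the gluing in case (iii): checking that the leaves of the homomorphic image are exactly at $o$ in the right states, and that the case $s^2_0=\mathsf{f}$ yields the unique accepting leaf. Each $\Gamma$-leaf must also be matched to its own nested automaton even when several of them belong to the same $\alpha_j$. We would first try to settle this by an explicit bookkeeping of leaves indexed by the order $s_1,\dots,s_n$ fixed in \Cref{def:skipAnonymous}.
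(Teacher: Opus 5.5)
Your proposal follows the paper's proof of this claim. Both use backward induction on $i$ along the top-level sequence, treat a whole skip block as a single step, use the homomorphism $h$ from $\I_{\T,\{C(a)\}}$ into $\I_{\T,\A}$ with $h(a)=o$, and glue the inner-hypothesis run at the leaf $(o,s^2_0)$ and outer-hypothesis runs at the $\Gamma$-leaves. You are more explicit than the paper about original nested transitions, which the paper folds silently into its case (i). Your worry about several $\Gamma$-leaves belonging to the same $\alpha_j$ is not a real obstacle. A continuation depends only on the leaf label $(o,u)$, so one run of $\alpha^u_{\mid u}$ from $o$ can be copied under every leaf with that label.

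\textbf{The step that fails is your treatment of $s^2_0=\mathsf f$.} After gluing, the accepting leaf is $(h(o'),s)$, where $o'$ is the element at which the partial run of \Cref{def:jumptrans} reaches a final state of $\alpha$. In general $o'$ is anonymous, so you only get $(o_i,h(o'))\in\alpha^{\I_{\T,\A}}_{\mid s_i}$, not $(o_i,o_m)$.

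This cannot be repaired, because the claim is false in that case. Take $\T=\{A\ISA\exists r.\top\}$ and let $\beta$ be the one-transition NFA reading $A?$. Let $\alpha$ have transitions $(q_0,\nesting{\beta},q_1)$ and $(q_1,\{r\},q_2)$, with $q_2$ final. Then:
\begin{itemize}
\item $(A,q_1,\mathsf f,\emptyset)$ is a strictly anonymous, relevant tuple of $\textsf{SkipRel}(\alpha,\T)$, so $\skippingAnon{\alpha}$ contains $(q_1,A?,s_\mathsf f)$.
\item For $\A=\{A(b)\}$, the run with top-level sequence $(b,q_0)(b,q_1)(b,s_\mathsf f)$ gives $(b,b)\in\skippingAnon{\alpha}^{\G_{\T,\A}}$.
\item But $(b,b)\notin\alpha^{\I_{\T,\A}}_{\mid q_1}$, since the only $r$-successor of $b$ is anonymous.
\end{itemize}
The paper's proof tacitly avoids this case: it only uses tuples with a leaf labelled $(a,s_i)$, i.e.\ $s^2_0\in Q_0$.

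If you want to keep $\mathsf f$-tuples, you must split the invariant into two parts.
\begin{itemize}
\item For runs whose top-level path uses no $\mathsf f$-skip, prove the pair statement exactly as you do.
\item In general, prove only that some run of $\alpha_{\mid s_i}$ from $o_i$ exists on $\I_{\T,\A}$, ending anywhere.
\end{itemize}
The weak form is all that your case (ii) and the $\Gamma$-continuations require, since a nested test only needs some accepting leaf.
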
 
    \textit{Proof of \Cref{claim:proof_soundness_inductionstep}.} The proof is by induction on $i$. The base case is when $i=m$, in which we consider a tree with a single node $i=m$ and the claim is trivially satisfies since $s_m$ is a final state. Now suppose that the claim holds for $i$ such that $g<i\le m$. Consider the case where $i-1=g$, then the pair $(o_{i},s_{i})$ satisfies one of the conditions, either: (i) there is a transition $(s_g,\sigma,s_{i})\in\delta$ or (ii) there is a tuple $(C,s_g,s_{i},\{s^\Gamma_1,\dots,s^\Gamma_n\})\in\textsf{skipRel}(\alpha,\T)$ and there are corresponding transitions $(s_g,C,s'_1),(s'_1,\nesting{\alpha^1_{\mid s^\Gamma_1}},s'_2),\dots, (s'_n,\nesting{\alpha^n_{\mid s^\Gamma_n}})$ in $\skippingAnon{\alpha}$. 
    If (i) holds we can immediately construct a $(o_g,o_m)$-run of $\alpha_{\mid s_g}$ on $\I_{\T,\A}$ in the same manner as in the base case. Next assume that case (ii) holds for $(o_{g},s_{g})$. Then, $o_g=o_{i}$ and there exists a tuple $(C,s_g,s_{i},\Gamma)$ such that $o_g\in C^{\I_{\T,\A}}$. By \Cref{def:jumptrans} the existence of $(C,s_g,s_{i},\Gamma)$ implies that there is a partial run $(T_C,l_C)$ of $\alpha$ in the canonical model of $(\T,\{C(a)\})$ that satisfies the following conditions: 
    \begin{itemize}
        \item the root of $T_C$ has label $(a,s_g)$
        \item there is a leaf node $v$ with $l(v)=(a,s_{i})$
        \item for every leaf node $v$ with $l_C(v)=(o,s)\neq (a,s_{i})$, either $s\in F_j$, or $o=a$ and $s\in\Gamma$
    \end{itemize}
    Since $o_g\in C^{\I_{\T,\A}}$, it follows from \Cref{fact:model_isomorphism} that there is a homomorphism $h$ from the canonical model of $(T,\{C(a)\})$ to $\I_{\T,\A}$ with $h(a)=o_g$. Let $(T',l')$ be obtained by replacing every label $(o,s)$ in $T_C$ by $(h(o),s)$. Since $h$ is a homomorphism, one can show that $(T'_C,l'_C)$ defines a partial run of $\alpha$ in $\I_{\T,\A}$ that satisfies the following conditions: 
    \begin{itemize}
        \item the root of $T'_C$ is labelled $(o_g,s_g)$
        \item there is a leaf node $v$ with $l'_C(v)=(o_g,s_{i})$
        \item for every leaf node $v$ with $l'_C(v)=(o,s)\neq(o_l,s_{i})$, either $s\in F_j$, or $o=o_g$ and $s\in\Gamma$.
    \end{itemize}
    Next we consider the states in $\Gamma$, by induction hypothesis we can find for each $s_j\in\Gamma$ a run $(T^\Gamma_{s_j},l^\Gamma_{s_j})$ for $\alpha^j_{\mid s_j}$ on $\I_{\T,\A}$. Finally, we can conclude the claim by applying the induction hypothesis and the fact that $c_g=c_i$. Thus, we can construct a run $(T_g,l_{g})$ such that $(o_g,o_m)\in\alpha^{\I_{\T,\A}}$ from $(T'_C,l'_C)$: 
    \begin{itemize}
        \item replace (unique) leaf node labelled $(o_g,s_{i})$ by the tree $(T_{i},l_{i})$
        \item replace each leaf node labelled $(o_g,s_j)$ for $s_j\in\Gamma$ by the tree $(T^\Gamma_{s_j},l^\Gamma_{s^\Gamma})$
    \end{itemize}
    \textit{(end proof of \Cref{claim:proof_soundness_inductionstep})} \\
    From \Cref{claim:proof_soundness_inductionstep} it follows that $(o_s,o_f)\in\alpha^{\I_{\T,\A}}_{\mid o_s}$, since $o_s$ is an initial state of $\alpha$, we can conclude for the main claim that $(o_s,o_f)\in\alpha^{\I_{\T,\A}}$.
\end{proof}

\subsection{Rewriting over the Core}

\mo{We know that
$\skippingAnon{\alpha}$ finds all query answers when evaluated over $\mathcal{G}_{\T,\A}$, but we want to execute the rewritten query over $\A$ alone, that is, over the explicit facts only.  
Again, we achieve this by adding transitions that allow us
to look for explicit facts only and skip over their consequences. 
As usual for DL-Lite variants,
in the canonical model obtained from a $\ontoLang$ TBox,
every implicit fact  $\xi_\ell$
is implied by some other fact $\xi_{\ell-1}$, which is either explicit or is also implied by another fact $\xi_{\ell-2}$ and so on, until we reach some explicit fact $\xi_{0}$ that triggered the existence of $\xi_\ell$.
We add transitions to $\alpha_0$ that allow it to use any of these $\xi_i$ when looking for some $\xi_\ell$.}

\begin{definition}[Skipping Core]\label[definition]{def:skipCore}
Let $\T$ be a $\ontoLang$ TBox, 
and let $\T^*$ be the result of closing $\T$ under the following rule:\\
$(\star)$ If $\{C\ISA \exists r.\top, \exists r'.\top \ISA D\} \subseteq \T^*$ and $\transclosure{r}{r'}$, then  $C \ISA D \in \T^*$.\\
Given a N2RPE $\alpha$ with alphabet $\mathbf{\Sigma}_{PG}$, we denote by $\skipping{\alpha}$ 
the \nnfa~obtained by exhaustively adding transitions as follows. 
% In each rule, $\delta$, $s_i$, $s_j$ and $s_k$ denote the transition function and states of one (arbitrary but fixed) automata  that occurs (possibly nested) in $\alpha$.  as a directly or indirectly nested automata in $\alpha$ -- 
% by exhaustive application of the rules below, where $r,s\in\allroles$, $C,D\in\conceptnames$ and $T=\{t_1,\dots,t_k\}\subseteq\datatests$. 
In each rule, $\delta$, $s_i$, $s_j$ and $s_k$ denote the transition function and states of one (arbitrary but fixed) automaton that occurs (possibly nested) in $\alpha$, and as usual, $r,p\in\allroles$, $C,D\in\conceptnames$ and $T=\{t_1,\dots,t_k\}\subseteq\datatests$. 
    \begin{compactenum}[(R1)]
        \item \label{def:skipping:ruleRI} If \ $r\ISA {p}\in \T$ and $(s_i,S,s_{j})\in\delta$ with $p\in S$ (or $p^-\in S$), then add $(s_i,R,s_{j})$ to $\delta$ with $R=(S\setminus\{p\})\cup\{r\}$ (or $R=(S\setminus\{p^-\})\cup\{r^-\}$).
        \item \label{def:skipping:ruleDataTestR} If \ $T\ISA r\in\T$ and $(s_i,R,s_j)\in\delta$ with $r \in R$, add $(s_i,T,s_j)$ to $\delta$. 
        \item \label{def:skipping:ruleCI} If \ $C\ISA D\in \T^*$ and $(s_i,D?,s_{j})\in\delta$, then add $(s_i,C?,s_{j})$ to $\delta$.
        \item\label{def:skipping:ruleExistLHS} 
        If \ $\exists r.\top \ISA B\in \T^*$ and $(s_i,B?,s_j)\in\delta$, then add $(s_i,\nesting{\alpha_{r}},s_j)$ to $\delta$ for a fresh two-state NFA 
        $\alpha_{r}$ with a single transition $\delta_r(s_0^1,\{r\},s_f)$ between its  initial state $s_0^1$ and its  final state $s_f$.
        \item \label{def:skipping:ruleDataTestC} If \ $T\ISA D\in \T^*$ and $(s_i,D?,s_j)\in\delta$, then add, for each $t_\ell \in T $, a transition  $(s'_{\ell},t_\ell?,s'_{\ell+1})$ to $\delta$, where 
        $s_i = s'_1$, $s'_{k+1} = s_j$, and $s'_2,\ldots, s'_k$ are fresh states.
        % OLD RULE (should be taken care of by JumpTrans) \item \label{def:skipping:ruleUnary} if $(s_i, \sigma, s_j), (s_j, \langle \alpha \rangle, s_k) \in \delta$ where $\sigma \in \mathbf{\Sigma}_{PG}$ and $\sigma^- \in L(\alpha)$, then add $(s_i,\sigma,s_k)$  to $\delta$. 
    \end{compactenum}
\end{definition}

Intuitively, each rule application adds a transition that allows 
$\alpha$ to `skip' an implicit fact $\xi_{j}$ in $\I_{\T,\A}$ and use instead a fact $\xi_{i}$, $i < j$, that participated in its creation. 
We now illustrate $\skipping{\alpha}$: 

\begin{example}
    \bl{Let us again consider the N2RPQ $\alpha_2$ from \Cref{fig:exampleN2RPE} and the TBox $\T$ from \Cref{ex:skipRel}. Then, by $\skipping{\alpha_2}$ we add the dotted transitions shown below.}
    \begin{center}
    \begin{tikzpicture}[scale=0.5,node distance = 1.3cm,shorten >=1pt,state/.style={circle,draw,inner sep=1.5pt}]
            \begin{scope}[local bounding box=automaton]
            \node[state, initial left] (s7) {$s_7$};
            \node[state, right of=s7] (s8) {$s_8$};
            \node[state, right of=s8] (s9) {$s_9$};
            \node[state, accepting, right of=s9] (s10) {$s_{10}$};
            \path[->]   (s7) edge[above] node{$\role{t}$} (s8)
                        (s8) edge[above] node{$\role{r}^-$} (s9)
                        (s8) edge[above,bend left=50, densely dotted] node{$\role{t}$} (s9)
                        (s9) edge[above] node{$\concept{C}?$} (s10)
                        (s9) edge[above,bend left=50, densely dotted] node{$\{\property{p} \ge 10?\}$} (s10)
                        (s9) edge[below,bend right=50, densely dotted] node{$\concept{D}?$} (s10);
            \end{scope}
            % 2. Place the name \alpha_1 to the left of the bounding box
            %\node[left=0.1cm of automaton.north west] {\LARGE $\alpha_2$:};
            %\node[right=0.1cm of automaton.north east] {\phantom{\LARGE $\alpha_1$:}};
        \end{tikzpicture}
    \end{center}
\end{example}

We can now show that $\skipping{\alpha}(x,y)$,  when evaluated over $\A$, correctly accounts for the implicit facts in $\G_{\T,\A}$.

\begin{lemmarep}\label{lemma:skipCorrectness}
    Let $\T$ be a $\ontoLang$ TBox and $\alpha$ be an N2RPE. 
    Then, for every PG $\A$ and every pair of nodes $o_s,o_f$ from $\A$, it holds that $(o_s,o_f) \in \alpha^{\G_{\T,\A}}$ iff $(o_s,o_f) \in \skipping{\alpha}^{\A}$.  
\end{lemmarep}

\begin{proofsketch}
\newcommand{\tboxsc}{\T^*_\mathit{core}}
Let $\tboxsc$ be the restriction of $\T^*$ to the axioms that do not have existential concepts on the right, and for readability, for a given $\A$, we denote by $\A^*$ the canonical model $\I_{\tboxsc,\A}$. 

 One can prove in the usual way (see, e.g., \cite{Eiter2012}) that $\G_{\T,\A} = \A^*$.
We can thus show that, for every $\alpha$: 
\[ \alpha^{\A^*} = (\skipping{\alpha})^\A. \]
Given the one-to-one correspondence between the rules (Ch\ref{def:canonicalmodel:RI}) to  (Ch\ref{def:canonicalmodel:conceptProps}), which add facts to $\A^*$ and the rules (R\ref{def:skipping:ruleRI}) to (R\ref{def:skipping:ruleDataTestC}), which add transitions to $\skipping{\alpha}$, 
the equality is established by a simple rule-by-rule analysis.
%  For the  \emph{$\supseteq$} direction, we show rule-by-rule  that the transitions added by \Cref{def:skipCore} mimic the rules for building the canonical model in the opposite direction, that is \ldots 
%  If an N2RPE $\alpha_{i+1}$ is obtained by applying a rule in \Cref{def:skipCore} to $\alpha_i$, then 
%  $(o,o')\in\alpha_{i+1}^{\A^*}$ implies $(o,o')\in{\alpha_{i}}^{\A^*}$, hence $(o,o')\in\alpha_{i+1}^{\A^*}$ implies 
%  \todo{the notation is too messy here: the indexes are over the skip rules or the canonical model construction? }
%%%%
% For the \emph{(only if)} direction, we rely on the fact that
% the canonical model construction naturally induces an ordering on the facts in $\I_{\T,\A}$, where an implicit fact always has a strictly higher degree than the facts that participate in its creation. We then show that if 
% $(o_s,o_f)\in\alpha_{i}^{\I_{\T,\A}}$ and this can only be witnessed using some implicit fact $f$, then it is possible to apply some rule in such a way that, with the additional transition, $(o_s,o_f)\in\alpha_{i+1}^{\I_{\T,\A}}$ can be witnessed using instead a fact $f'$ of strictly lower degree than $f$. By applying the rules exhaustively, we eventually have that $(o_s,o_f)\in \skipping{\alpha}^{\I_{\T,\A}}$ is witnessed using only facts of degree $0$, that is, facts in $\A$, and hence $(o_s,o_f)\in \skipping{\alpha}^{\A}$.
\end{proofsketch}

\begin{proof} 
\newcommand{\tboxsc}{\T^*_\mathit{core}}
Let $\tboxsc$ be the restriction of $\T^*$ to the axioms that do not have existential concepts on the right, and for readability, for a given $\A$, we denote by $\A^*$ the canonical model $\I_{\tboxsc,\A}$. 
One can prove in the usual way (see, e.g., \cite{Eiter2012}) that $\G_{\T,\A} = \A^*$.
We can thus show that, for every $\alpha$: 
\[ \alpha^{\A^*} = (\skipping{\alpha})^\A. \]
Given the one-to-one correspondence between the rules (Ch\ref{def:canonicalmodel:RI}) to  (Ch\ref{def:canonicalmodel:conceptProps}), which add facts to $\A^*$ and the rules (R\ref{def:skipping:ruleRI}) to (R\ref{def:skipping:ruleDataTestC}), which add transitions to $\skipping{\alpha}$, 
the equality is established by a simple rule-by-rule analysis in both directions.

For the  \emph{($\Leftarrow$)} direction, assume that an \nnfa~$\alpha_{i+1}$ was obtained by applying one of the rules (R\ref{def:skipping:ruleRI}) to (R\ref{def:skipping:ruleDataTestC}) in \Cref{def:skipCore} to $\alpha_i$. In the following we show for each of the rules that it holds if $(o_s,o_f)\in\alpha_{i+1}^{\G_{\T,\A}}(x,y)$, then $(o_s,o_f)\in\alpha_{i}^{\G_{\T,\A}}(x,y)$. 
\begin{compactitem}[rightmargin=0pt]
    \item[(R\ref{def:skipping:ruleRI})] Assume that $\alpha_{i+1}^{\I_{\T,\A}}(x,y)$ was obtained by applying the rule for $r\ISA p\in \T$ and $(s_j,P,s_{j+1})\in\delta_{i}$ with $p\in P$ (or $p^-\in P)$. Then, $\alpha_{i+1}^{\G_{\T,\A}}(x,y)$ contains the transition $(s_j,R,s_{j+1})$ with $R=(P\setminus\{p\})\cup\{r\}$ (or $R=(P\setminus\{p^-\})\cup\{r^-\}$). We assume that $(o_j,o_{j+1})\in r^{\G_{\T,\A}}$ (or $(o_{j+1},o_j)\in r^{\G_{\T,\A}}$) and from \Cref{def:canonicalmodel} it follows that $(o_j,o_{j+1})\in p^{\G_{\T,\A}}$ (or $(o_{j+1},o_j)\in p^{\G_{\T,\A}}$). 
    \item[(R\ref{def:skipping:ruleDataTestR}),(R\ref{def:skipping:ruleCI})] The proof for these rules works equivalent to (R\ref{def:skipping:ruleRI}).
    \item[(R\ref{def:skipping:ruleExistLHS})] Assume that $\alpha_{i+1}^{\I_{\T,\A}}(x,y)$ was obtained by the rule for $\exists r.\top \ISA B$ and $(s_j,B?,s_{j+1})\in\delta_i$. Then, the function adds a transition with a nested NFA $\nesting{\alpha_r}$ that contains a single transition with $r$. By \Cref{def:runN2RPE} of a run and $\alpha_{i+1}$ it holds that there has to be an individual $o_j=o_{j+1}$ and a pair $(o_j,o')$, such that $(o_j,o')\in r^{\G_{\T,\A}}$. From \Cref{def:canonicalmodel} it follows that $o_j\in B^{\G_{\T,\A}}$ and thus $(o_j,o_{j+1})\in \alpha_i^{\G_{\T,\A}}$. 
    \item[(R\ref{def:skipping:ruleDataTestC})] In this case $\nNFA{i+1}$ was obtained based on $T\ISA D\in\T^*$ and $(s_j,D?,s_{j+1})\in\delta_i$. For each $t_\ell\in T$ there is a transition $(s'_\ell,t_\ell?,s'_{\ell+1})\in\alpha_{i+1}$, where $s_j=s'_1,s'_{k+1}=s_j$ and $s'_2,\dots,s'_k$ are fresh states, i.e., $s'_2,\dots,s'_k\notin S_i$.
    By \Cref{def:runN2RPE} and $(s'_\ell,t_\ell?,s'_{\ell+1})\in\alpha_{i+1}$, it follows that $o_j=o_{j+1}$ and $o_j\in T^{\G_{\T,\A}}$. Further, by the construction of the canonical model item (Ch\ref{def:canonicalmodel:conceptProps}) we can infer that $o_j\in D^{\G_{\T,\A}}$.
\end{compactitem}

($\Rightarrow$) In order to show completeness we first introduce the notion of degree, that comes from the canonical model construction: We denote individuals or pair of individuals in the extension of concepts and roles as \emph{facts}, i.e., $v\in B^{\G_j}$ or $(v,w)\in r^{\G_j}$. A fact that occurs for the first time in $\G_j$ has degree $j$, and a set of facts gets the sum of the degrees of its elements. 
The degree of sequence $o_s\sigma_1o_1\dots\sigma_no_n\sigma_fo_f$ witnessing $(o_s,o_f)\in\alpha^{\G_{\T,\A}}(x,y)$
is the minimal degree of a set of facts that is sufficient to witness all a run in \Cref{def:runN2RPE}, including the nested automata. 
Note that there may be more than one way to witness not only the automata, but even for a fixed (top level) sequence, we may have choices, but we don't care. We set the degree to be the smallest, and then every added transition will allow us to witness the same sequence but with a strictly smaller degree.
Having the notion of degree for a pair in place we show by induction that in each rewriting step we are "moving up" the part of the canonical model (decreasing the degree) that is induced by the axioms in $\T$. 
% For that consider two sequences $(o_0,o_n)\in\mathbf{\nNFA{i}}^{\I_{\T,\A}}$ and $(o'_0,o'_n)\in\nNFA{i+1}^{\I_{\T,\A}}$, such that $\nNFA{i+1}$ was obtained by applying the skipping function from \Cref{def:skipping} to $\mathbf{A_{i}}$. 
In the following we show that for each of the rules in \Cref{def:canonicalmodel} there is a corresponding rule in \Cref{def:skipAnonymous} producing a \nnfa~$\nNFA{i+1}$ from $\nNFA{i}$ such that the degree of the accepted sequence is strictly smaller. 
\begin{compactitem}[rightmargin=0pt]
    \item[(Ch\ref{def:canonicalmodel:RI})] 
    Consider $r \ISA p\in\T$ and $(o_k,o_{k+1})\in r^{\G_j}$, $(o_k,o_{k+1})\notin p^{\G_{j}}$ and $(o_k,o_{k+1})\in p^{\G_{j+1}}$ by construction of the canonical model. Consider the sequence $(o_s\sigma_1 o_1\dots o_k r o_{k+1}\dots o_n\sigma_fo_f )$ such that $(o_s,o_f)\in\nNFA{i}^{\G_{\T,\A}}$. In the rewriting we apply (R\ref{def:skipping:ruleRI}) to obtain $\nNFA{i+1}$ from $\nNFA{i}$, where we add for each transition $(s,P,s')$ with $p\in P$ (or $p^-\in P)$ a new transition $(s, R,s')$ such that $R=(P\setminus \{p\})\cup \{r\}$ (or $R=(P\setminus \{p^-\})\cup \{r^-\}$), thus $(o_1\dots o_k r o_{k+1}\dots o_n)$ will be accepted by $\nNFA{i+1}^{\G_{\T,\A}}$, for which the degree is strictly smaller then for $(o_s\sigma_1o_1\dots o_k p o_k\dots o_n\sigma_fo_f)$. 
    \item[(Ch\ref{def:canonicalmodel:roleProps}),(Ch\ref{def:canonicalmodel:CI})] For $T\ISA r'\in\T$ the skipping rule (R\ref{def:skipping:ruleDataTestR}) and for $A \ISA B\in\T$  (R\ref{def:skipping:ruleCI}) applies. In both cases the reasoning is similar to (Ch\ref{def:canonicalmodel:RI}).
    \item[(Ch\ref{def:canonicalmodel:existsLHS})] Given $\exists r.\top \ISA B\in\T$ and by construction of the canonical model some individual $o_k\in(\exists r.\top)^{\G_j}$, $o_k\notin B^{\G_{j}}$ $o_k\in B^{\I_{j+1}}$. 
    Consider a sequence of the form $(o_s\sigma_1o_1\dots o_k B o_k \dots o_n\sigma_fo_f)$ such that $(o_s,o_f)\in \nNFA{i}^{\G_{\T,\A}}$. 
    In skipping we apply rule (R\ref{def:skipping:ruleExistLHS}) to obtain $\nNFA{i+1}$ and add for each $(s,B?,s')\in\delta$ the transition $(s,\nesting{\alpha_{r}},s')$ to $\delta$, where $\nNFA{r}$ is a fresh two-state NFA with a single transition $(s_0^1,\{r\},s_f^1)$. Since $o_k\in(\exists r.\top)^{\G_j}$ there has to be some sequence $(o_k r o'_k)$, such that $(o_k,o'_k)\in\nNFA{r}^{\G_{\T,\A}}$ and we can construct a run with strictly smaller degree such that $(o_s,o_f)\in\nNFA{i+1}^{\G_{\T,\A}}$.
    \item[(Ch\ref{def:canonicalmodel:conceptProps})] In this case that $T\ISA B\in\T$, $o_k\in T^\G_j$ and $o_k\notin B^\G_j$ triggered the rule. Consider again the sequence $(o_s\sigma_1o_1\dots o_k B o_k \dots o_n\sigma_fo_f)$ with $(o_s,o_f)\in \nNFA{i}^{\G_{\T,\A}}$. Based on $T\ISA B\in\T, o_k\in T^\G_j$ the skipping function applies rule (R\ref{def:skipping:ruleDataTestC}). It follows that for $(s,B?,s')\in\delta$ we add for each $t_\ell\in T$ a transition $(s_\ell,t_\ell?,s_{\ell+1})$, where $s=s_1$ and $s_{j+1}=s'$, and $s'_2,\dots,s'_k$ are fresh states. 
    Since $o_k\in T^{\G_j}$, it holds that $o_k\in t_\ell^{\G_j}$ for each of the $t_\ell\in T$ and the degree of the sequence from $o_s$ to $o_f$ is strictly smaller.
    \item[(Ch\ref{def:canonicalmodel:existsRHS}),(Ch\ref{def:canonicalmodel:existsRHS2})] Since we only consider $\tboxsc$ these rules do not apply (and are covered by \Cref{def:skipAnonymous}).
\end{compactitem}
\end{proof}

Together, \Cref{lemma:skipAnonCorrectness} and \Cref{lemma:skipCorrectness} give us the desired rewriting.    
\begin{proposition}\label{lemma:skipTotalCorrectness}
    Let $\T$ be a $\ontoLang$ TBox and $\alpha$ be an N2RPE. 
    Then, for every PG $\A$ and every pair of nodes $o_s,o_f$ from $\A$, 
    it holds that $(o_s,o_f) \in \alpha^{\I_{\T,\A}}$ iff 
    $(o_s,o_f) \in \skipping{\skippingAnon{\alpha}}^{\A}$. 
\end{proposition}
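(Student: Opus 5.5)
The plan is to chain the two preceding lemmas, with the intermediate N2RPE $\beta := \skippingAnon{\alpha}$ serving as the link. Fix $\T$, a PG $\A$, and nodes $o_s,o_f$ of $\A$. By \Cref{lemma:skipAnonCorrectness}, applied to $\alpha$, we have $(o_s,o_f)\in\alpha^{\I_{\T,\A}}$ iff $(o_s,o_f)\in\beta^{\G_{\T,\A}}$. Next we apply \Cref{lemma:skipCorrectness} to the N2RPE $\beta$ and the same pair, which gives $(o_s,o_f)\in\beta^{\G_{\T,\A}}$ iff $(o_s,o_f)\in\skipping{\beta}^{\A}$. Composing the two equivalences yields the claim.

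The substantive check is that $\beta$ is a legitimate input for \Cref{lemma:skipCorrectness}, i.e.\ an N2RPE over $\mathbf{\Sigma}_{PG}$. I would verify this by inspecting \Cref{def:skipAnonymous}. The only new transitions are concept tests $C?$ with $C\in\conceptnames\cup\{\top\}$, and nested automata $\nesting{\alpha^j_{\mid s_j}}$. Each $\alpha^j_{\mid s_j}$ is one of the $\alpha_i$ already nested in $\alpha$, with a different initial state, so its nesting depth is at most that of $\alpha$.

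One small point is that $\top?$ is not literally among the symbols $C?$ with $C\in\conceptnames$. I would handle it either by reading $\top?$ as a trivial test, or by dropping the transition and merging its endpoints via an $\varepsilon$-style identification. The semantics of $\top$ in \Cref{tab:dlpg} (all of $N$) makes either reading harmless. I would also note that \textsf{SkipRel} is applied recursively to nested automata as well, or at least that the lemma already covers nested runs. Either way, the resulting object is again an \nnfa\ over $\mathbf{\Sigma}_{PG}$, and \Cref{lemma:skipCorrectness} is stated for arbitrary N2RPEs, so it applies.

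The other point to confirm is that the middle structure is the same in both lemmas. \Cref{lemma:skipAnonCorrectness} produces $\G_{\T,\A}$ on its right-hand side, and \Cref{lemma:skipCorrectness} takes $\G_{\T,\A}$ as its left-hand side. Both are the restriction of $\I_{\T,\A}$ to the nodes of $\A$, and both lemmas quantify over the same pairs of nodes of $\A$, so the two equivalences compose directly. I expect the only real obstacle to be the well-formedness bookkeeping above, in particular nested automata that were modified by $\mathsf{skipAnon}$ and are then processed again by $\mathsf{skipCore}$. That is harmless, because the skipping rules of \Cref{def:skipCore} act on every automaton occurring (possibly nested) in the input.
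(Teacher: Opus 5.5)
Your proposal is correct and matches the paper's argument: the paper obtains the proposition by composing \Cref{lemma:skipAnonCorrectness}, which passes from $\I_{\T,\A}$ to the core $\G_{\T,\A}$, with \Cref{lemma:skipCorrectness} applied to the N2RPE $\skippingAnon{\alpha}$, which passes from $\G_{\T,\A}$ to $\A$. Your extra well-formedness checks, such as the $\top?$ test, are left implicit in the paper but do no harm.
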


\section{Rewriting join-on-free CN2RPQs}
\label{sec:rewrite_cof_CN2RPQ}

\co{Utilising the two skip procedures from \Cref{sec:rewrite_cof_N2RPQ}, we can now present} an algorithm for \co{the rewriting of} join-on-free CN2RPQs \co{under  $\ontoLang$ ontologies. 
We dub this procedure ``\algoNameLong'', or \algoName for short. We proceed to describe the key parts of the procedure.}

We use a function that inverts nested automata, that is, it transforms the automata to accept the same words, but inverted.

\newcommand{\nfainv}[1]{\bar{#1}}

\begin{definition}\label[definition]{def:inverse}
We define an \emph{inverse} function over the alphabet $\mathbf{\Sigma}_{PG} \cup \bigcup_{k \in \mathbb{N}}\mathbf{A}^k$: 
if  $\sigma \not\in 2^{\allroles \cup \datatests}$, 
then  $\sigma^- = \sigma$, 
and if $\sigma \in 2^{\allroles \cup \datatests}$, then 
\[ \sigma^- = \{ t \mid t\in  \sigma \cap \datatests  \} \cup \{ r^- \mid r \in \sigma \cap  \rolenames \} \cup \{ r \mid r^- \in \sigma, 
            r \in \rolenames\}. \] 
        
% \[ \sigma^- = \begin{cases}

%             \begin{aligned}
%             &\{ t \mid t\in  \sigma \cap \datatests  \} \cup \{ r^- \mid r \in \sigma \cap  \rolenames \} \\ 
%             &\cup \{ r \mid r^- \in \sigma
%             r \in \rolenames\},
%             \end{aligned}   & \text{ if } \sigma \in 2^{\allroles \cup \datatests}, \\
%             \sigma & \text{otherwise.}
% 		 \end{cases} \]
% The inverse $w^-$ of a word $w = \sigma_1 \cdots \sigma_n$ is 
% $\sigma^-_n \cdots \sigma^-_1$, and for a language $L$, we let $L^- = \{ w^- \mid w \in L\}$. 

Let $\alpha = \langle Q, \mathbf{\Sigma}_{PG}, S, \delta, F \rangle$ be an N2RPE with $S \subseteq Q$ the initial states and $F \subseteq Q$ are the final states. 
Then $\alpha$ can be \emph{inverted} to obtain 
$\nfainv{\alpha} = \langle Q, \mathbf{\Sigma}_{PG}, F, \delta^-,S \rangle$, where  $\delta^- = \{ (s_j, \sigma^-, s_i) \mid 
(s_i,\sigma,s_j) \in \delta \}$. 
\end{definition}

\begin{lemma}
    \label{prop:inverseWorks}
    For every N2RPE $\alpha$, PG $\I$, and pairs of nodes $n_1,n_2$, 
    $ (n_1,n_2) \in \alpha^\I \text{~if and only if~} (n_1,n_2) \in (\nfainv{\alpha})^\I$
   %% For every N2RPE $\alpha$, %%\( L(\nfainv{\alpha}) = L(\alpha)^- \). 
%%    For every N2RPE $\alpha$, \( L(\nfainv{\alpha}) = L(\alpha)^- \). 
\end{lemma}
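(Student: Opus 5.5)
We read the statement with the endpoints swapped on the right-hand side, i.e.\ $(n_1,n_2)\in\alpha^\I$ iff $(n_2,n_1)\in(\nfainv{\alpha})^\I$. This is what ``accepting the inverted words'' means, and it is how the lemma will be used for atoms $\alpha(x,y)$ versus $\nfainv{\alpha}(y,x)$. By construction $\nfainv{\nfainv{\alpha}}=\alpha$, since $(\sigma^-)^-=\sigma$ and swapping $S$ and $F$ twice restores both. So it suffices to prove one direction, say: every $(n_1,n_2)$-run of $\alpha$ on $\I$ yields an $(n_2,n_1)$-run of $\nfainv{\alpha}$. Applying this to $\nfainv{\alpha}$ then gives the converse.

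No induction on nesting depth is needed. The inverse function fixes every symbol outside $2^{\allroles\cup\datatests}$, so nested symbols $\nesting{\beta}$ are left untouched, and so are the inner automata $\beta$. Only the top-level automaton is reversed. Fix an $(n_1,n_2)$-run $(T,l)$ of $\alpha$. First I would isolate its \emph{spine}: the unique path $v_0,\dots,v_m$ from the root to the accepting leaf. Every node on it carries a state of $Q$ (top level). For each $i$, the step from $v_{i-1}$ to $v_i$ uses a transition $(s_{i-1},\sigma_i,s_i)\in\delta$. When $\sigma_i=\nesting{\beta}$, the node $v_{i-1}$ additionally has a second child rooting a subtree $R_i$ of $T$, and $R_i$ is a run of $\beta$ starting at the same graph node $o_{i-1}=o_i$. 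All remaining nodes of $T$ lie in such subtrees $R_i$.

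Next I would build the new tree $(T',l')$. Its spine is $w_0,\dots,w_m$ with $l'(w_j)=(o_{m-j},s_{m-j})$. Then $w_0$ is labelled $(n_2,s_m)$ with $s_m\in F$, which is an initial state of $\nfainv{\alpha}$. The last spine node $w_m$ is labelled $(n_1,s_0)$ with $s_0\in S$, which is final in $\nfainv{\alpha}$. For each step $i$ there is the transition $(s_i,\sigma_i^-,s_{i-1})\in\delta^-$, and I check the run conditions of \Cref{def:runN2RPE} case by case:
\begin{compactitem}
\item if $\sigma_i$ is a concept test or a unary data test, both nodes coincide and the test is unchanged;
\item if $\sigma_i=\nesting{\beta}$, I attach the unchanged subtree $R_i$ as the second child of $w_{m-i}$; this is legitimate because $o_i=o_{i-1}$ and $\sigma_i^-=\sigma_i$;
\item if $\sigma_i\in 2^{\allroles\cup\datatests}$, then $(o_{i-1},o_i)\in r^\I$ iff $(o_i,o_{i-1})\in(r^-)^\I$, directly from \Cref{tab:dlpg}, which handles the role part.
\end{compactitem}
The leaf conditions carry over: the leaves of the $R_i$ are untouched and still end in final states of nested automata, and the unique top-level accepting leaf is now $w_m$.

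I expect the main obstacle to be the binary data tests in $\sigma\cap\datatests$, which $\sigma^-$ keeps unchanged while the pair is reversed. In \Cref{tab:dlpg}, $(n,m)\in(p\odot v?)^\A$ refers to the property of an edge $r(n,m)\in E$, which is directional. I would handle this by reading an edge data test inside a set $\sigma$ as a test on the edge traversed by that step, whichever direction it is traversed in. Under that reading, traversing $r(o_{i-1},o_i)$ forwards or $r^-$ backwards inspects the same edge with the same $\mathsf{prop}$ value, so the test transfers. If instead the literal, direction-sensitive semantics is kept, $\sigma^-$ must be adjusted accordingly. Fixing this convention explicitly is the one non-routine point of the proof; everything else is the bookkeeping above.
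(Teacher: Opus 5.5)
The paper states this lemma without proof (proofs are deferred to the extended version), so there is no alternative route to compare against. Your argument is the natural one and is correct:
\begin{itemize}
\item reverse the top-level spine of the run;
\item reattach each nested subtree $R_i$ unchanged at the source of the reversed step, which is legitimate because $\langle\beta\rangle^-=\langle\beta\rangle$ and inner automata are never inverted, so no induction on nesting depth is needed;
\item obtain the converse from $\bar{\bar\alpha}=\alpha$.
\end{itemize}
Swapping the endpoints is a necessary correction, not just a reading. As printed, the lemma already fails for the single transition $(s_0,\{r\},s_1)$ on the PG with one edge $r(a,b)$, $a\neq b$: there $(a,b)\in\alpha^{\mathcal{I}}$ but $(a,b)\notin\bar\alpha^{\mathcal{I}}$. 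The swapped form is exactly what replacing $\alpha(x,y)$ by $\langle\bar\alpha\rangle(y)$ in $q^{\langle\exists\rangle}$ needs.

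Your concern about edge data tests is also justified, and it is a defect of the paper's definitions, not of your argument. Under \Cref{tab:dlpg} taken literally, even the swapped statement is false. Take the single transition $(s_0,\{r,p\ge 10?\},s_1)$ and the single edge $r(a,b)$ with $\mathsf{prop}(r(a,b),p)=10$. Then $(a,b)\in\alpha^{\mathcal{A}}$, but $(b,a)\notin\bar\alpha^{\mathcal{A}}$, because $\{r^-,p\ge 10?\}$ demands an edge from $b$ to $a$ carrying $p$. So say plainly that a definition must change, rather than leaving two options open.

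Then isolate the one fact your spine reversal uses: $(n,m)\in\sigma^{\mathcal{I}}$ iff $(m,n)\in(\sigma^-)^{\mathcal{I}}$ for every $\sigma\in 2^{\allroles\cup\datatests}$. Your ``test the traversed edge'' convention matches the intended meaning. The paper says these sets perform data tests ``while traversing an edge that has all the role labels'', and $\{\mathit{employs}^-,\mathit{since}\le 2023?\}$ in $q_2$ is clearly meant to test the $\mathit{employs}$ edge traversed backwards. It yields the needed fact for sets containing one role.

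It does not, however, cover role-free sets such as $\{p\ge 10?\}$, which the alphabet allows. For these only a direction-agnostic reading works, and that would make rule (R2) unsound, since (R2) reuses the set $T$ of an axiom $T\sqsubseteq r$, whose meaning is directional, as a query symbol.

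Your other option, adjusting $\sigma^-$, is impossible inside $\mathbf{\Sigma}_{PG}$, which has no backward edge test. The cleanest repair is to add symbols $t^-$ with $(n,m)\in(t^-)^{\mathcal{I}}$ iff $(m,n)\in t^{\mathcal{I}}$, and let $\sigma^-$ swap $t$ and $t^-$; this keeps $\bar{\bar\alpha}=\alpha$. With that repair, or with edge data tests excluded, your proof goes through verbatim.
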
 

% \todo{definition of inverse automata missing}
The first step of our rewriting algorithm removes all non-answer variables from the query by making use of nested expressions.

\begin{definition}
For a join-on-free CN2RPQ  $q(\vec{x})$, we denote by 
$q^{\nexist}(\vec{x})$ the result of replacing each atom 
$\alpha(x,y)$ with $y\not\in \vec{x}$ by $\nesting{\alpha}(x)$, and  each $\alpha(x,y)$ with $x \not\in \vec{x}$
by $\nesting{\nfainv{\alpha}}(y)$.
\end{definition}

\noindent 
A simple inspection of the CN2RPQ semantics shows that the transformation preserves query answers.  
\begin{lemma} \label{lemma:removeExistentialVariables}
%%Let $\T$ be a TBox and
Let $q(\vec{x})$ be a join-on-free CN2RPQ. 
For every interpretation $\I$ and tuple of nodes $\vec{a}$, we have that $\vec{a}$ is an answer to $q(\vec{x})$ in $\I$ iff 
$\vec{a}$ is an answer to $q^{\nexist}(\vec{x})$ in $\I$.
\end{lemma}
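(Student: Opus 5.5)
The plan is to compare matches of $q(\vec{x})$ and of $q^{\nexist}(\vec{x})$ atom by atom, using the join-on-free property to treat each existential variable locally. Because $q(\vec{x})$ is connected and join-on-free, every atom $\alpha(x,y)$ has at least one answer variable. Moreover, any non-answer variable occurs in exactly one atom, since join variables must be answer variables. So the atoms split into three kinds: (i) both endpoints in $\vec{x}$, which are kept unchanged; (ii) $x\in\vec{x}$ and $y\notin\vec{x}$, replaced by $\nesting{\alpha}(x)$; (iii) $x\notin\vec{x}$ and $y\in\vec{x}$, replaced by $\nesting{\nfainv{\alpha}}(y)$. If both $x$ and $y$ are answer variables the atom is untouched; an atom $\alpha(z,z)$ with $z\notin\vec{x}$ cannot occur, since it would have no answer variable.

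The first ingredient is a semantic fact about a nested atom. For any N2RPE $\alpha$ and node $n$, we have $(n,n)\in\nesting{\alpha}^{\I}$ iff there is a node $m$ with $(n,m)\in\alpha^{\I}$. I would prove this by unfolding \Cref{def:runN2RPE} for the one-transition automaton $s\xrightarrow{\nesting{\alpha}}s'$. A run of it consists of a root $(n,s)$ with two children: $(n,s')$, which is a final-state leaf, and $(n,s_0)$, which roots a run of $\alpha$ starting at $n$. That subtree has an accepting leaf at some node $m$, which gives $(n,m)\in\alpha^{\I}$. The converse direction is obtained by grafting a run of $\alpha$ under such a root. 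The accepting-leaf bookkeeping (exactly one leaf in $F_0$) needs a little care here but is routine. Combined with \Cref{prop:inverseWorks}, read as saying that $\nfainv{\alpha}$ matches the pair $(n_2,n_1)$ exactly when $\alpha$ matches $(n_1,n_2)$, we get $(n,n)\in\nesting{\nfainv{\alpha}}^{\I}$ iff some $m$ satisfies $(m,n)\in\alpha^{\I}$.

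For the forward direction, take a match $\mu$ of $q$ with $\mu(\vec{x})=\vec{a}$ and restrict it to $\vec{x}$. Kind (i) atoms are satisfied trivially. For kind (ii), $m=\mu(y)$ witnesses $(\mu(x),\mu(x))\in\nesting{\alpha}^{\I}$, and kind (iii) is symmetric.

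For the backward direction, take a match $\mu'$ of $q^{\nexist}$. For each kind (ii) or (iii) atom, the fact above supplies a witness $m$, and I set $\mu(y)=m$ (respectively $\mu(x)=m$); on $\vec{x}$, $\mu$ agrees with $\mu'$. This is well defined precisely because each existential variable occurs in a single atom, so no two atoms impose conflicting choices. This is the only place where join-on-freeness is used.

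The main obstacle I expect is the careful run-tree argument for the nested-atom fact, together with fixing the orientation convention of \Cref{prop:inverseWorks}. The variable bookkeeping itself is straightforward.
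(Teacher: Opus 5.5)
Your proposal is correct and follows the paper's route: the paper justifies the lemma only by ``a simple inspection of the CN2RPQ semantics''. Your nested-atom fact, your correctly reoriented use of \Cref{prop:inverseWorks}, and your use of join-on-freeness to make the choice of witnesses for existential variables well defined are exactly what that inspection consists of.

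One caveat, which you inherit from the paper rather than introduce, concerns \Cref{prop:inverseWorks} and hence the atoms $\alpha(x,y)$ with $x \notin \vec{x}$. They only go through if a data test inside a label $\sigma \in 2^{\allroles \cup \datatests}$ is read as a test on the edge actually traversed. The literal pair semantics of $p \odot v?$ in \Cref{tab:dlpg} requires an edge directed from the first node to the second, and the inversion flips roles but keeps data tests unchanged, so it does not preserve that semantics.
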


\co{We emphasise again that} the transformation to $q^{\nexist}$ \co{only} removes  non-answer variables and leaves all other variables untouched. 
% \todo{drop this line if it seems superfluous. }
% The rewriting of queries can now be achieved by applying $\skipping{\alpha}$ \co{ and  $\skippingAnon{\alpha}$ } to each atom independently. 
We denote by 
\co{$\algoMath{q^{\nexist}(\vec{x})}$} the result of executing 
Algorithm \ref{algo:rewriteCN2RPQ} on $q(\vec{x})$ and $\T$. As we can see, it \co{ begins with the removal of non-answer variables using $q^{\nexist}$,} and then applies $\skipping{\alpha}$ \co{ and $\skippingAnon{\alpha}$} to each atom in  $q^{\nexist}(\vec{x})$. 

\co{Next we show that \Cref{algo:rewriteCN2RPQ} is indeed %%correct, meaning 
sound and complete.}

\begin{theorem}
%%[Correctness Soundness and Completeness]
    \label{thm:sound_and_correcetness}
    Let $\T$ be a $\ontoLang$ TBox and 
    $q(\vec{x})$ be a join-on-free CN2RPQ.  
    Then, for every ABox $\A$ and every tuple $\vec{a}$ of individuals  from $\A$, it holds that $\vec{a}$ is a certain answer to $q(\vec{x})$ over $(\T,\A)$ iff $\vec{a}$ is an answer to $\co{\algoMath{q^{\nexist}(\vec{x})}}$ over $\A$.
\end{theorem}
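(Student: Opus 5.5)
The proof chains three equivalences, working atom by atom.

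\textbf{Step 1: reduce to the canonical model.} Assume $\A$ is consistent with $\T$; inconsistent ABoxes fall outside the rewriting setting, as stated at the beginning of \Cref{sec:rewrite_cof_N2RPQ}. By the standard property of the canonical model $\I_{\T,\A}$ of \Cref{def:canonicalmodel}, $\vec{a}$ is a certain answer to $q(\vec{x})$ over $(\T,\A)$ iff $\vec{a}$ is an answer to $q(\vec{x})$ in $\I_{\T,\A}$. The direction ($\Rightarrow$) is immediate because $\I_{\T,\A}$ is a model of $\T$ and $\A$. For ($\Leftarrow$), one uses that $\I_{\T,\A}$ maps homomorphically into every model of $(\T,\A)$ while fixing the individuals. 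One then checks that N2RPE runs are preserved under such homomorphisms. Concept, role and data tests are all positive, and property values are never altered by the chase, so this check is routine.

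\textbf{Step 2: remove the non-answer variables.} By \Cref{lemma:removeExistentialVariables} applied to $\I=\I_{\T,\A}$, $\vec{a}$ is an answer to $q$ in $\I_{\T,\A}$ iff it is an answer to $q^{\nexist}$ in $\I_{\T,\A}$. Because $q$ is join-on-free and connected, every atom of $q^{\nexist}$ has only answer variables, either $\beta(x,y)$ or $\nesting{\beta}(x)$. A match of $q^{\nexist}$ is therefore fully determined by $\vec{a}$, and all images are individuals of $\A$. Hence $\vec{a}$ is an answer iff, for every atom $\beta(x,y)$, the pair $(\mu(x),\mu(y))\in\beta^{\I_{\T,\A}}$, where $\mu$ maps $\vec{x}$ to $\vec{a}$. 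This is the key point where join-on-freeness is used. Nested atoms $\nesting{\beta}(x)$ are treated as one-transition N2RPEs evaluated at the pair $(\mu(x),\mu(x))$. Without this restriction, a shared existential variable could be mapped to an anonymous node, and the atoms could not be rewritten separately.

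\textbf{Step 3: rewrite each atom.} For each atom, apply \Cref{lemma:skipTotalCorrectness} with $o_s=\mu(x)$ and $o_f=\mu(y)$, both nodes of $\A$. This gives $(o_s,o_f)\in\beta^{\I_{\T,\A}}$ iff $(o_s,o_f)\in\skipping{\skippingAnon{\beta}}^{\A}$. Algorithm~\ref{algo:rewriteCN2RPQ} produces exactly this rewritten atom, so the conjunction over all atoms gives: $\vec{a}$ is an answer to $\algoMath{q^{\nexist}(\vec{x})}$ over $\A$ iff it is an answer to $q^{\nexist}$ over $\I_{\T,\A}$. Chaining Steps 1 to 3 proves the theorem.

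\textbf{Main obstacle.} I expect the main obstacle to be the detail in Step 2: a nested atom $\nesting{\beta}(x)$ has its inner run ending at an arbitrary node, possibly an anonymous one. This is harmless only because \Cref{lemma:skipTotalCorrectness} is applied to the outer one-transition automaton containing $\nesting{\beta}$, whose endpoints are individuals. The anonymous end of the nested run is then absorbed by \textsf{SkipRel} tuples with final marker $\mathsf{f}$. I would make this explicit by wrapping each unary atom as an N2RPE with a single $\nesting{\beta}$ transition before invoking the proposition.
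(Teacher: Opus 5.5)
Your proposal is correct and follows the same route as the paper's proof. Both remove the non-answer variables with \Cref{lemma:removeExistentialVariables} and then treat each atom separately between individuals of $\A$, using the skip-correctness results over $\I_{\T,\A}$; you are more careful than the paper, which leaves implicit the reduction from certain answers to $\I_{\T,\A}$ (and the consistency assumption it needs) and cites only \Cref{lemma:skipCorrectness}, whereas \Cref{lemma:skipTotalCorrectness}, which you use, is the result that actually links $\I_{\T,\A}$ to $\A$.
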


\begin{proof} Given a $\ontoLang$ TBox $\T$ and an ABox $\A$ and a join-on-free CN2RPQ $q(\vec{x})$.  
For ($\Rightarrow$) assume that $\vec{a}$ of individuals from $\A$ is a certain answer to $q(\vec{x})$ over $(\T,\A)$. From \Cref{lemma:removeExistentialVariables} it holds that $\vec{a}$ is a certain answer to $q^{\nexist}(\vec{x})$ over $(\T,\A)$, i.e., $\vec{a}=\mu(\vec{x})$ and $(\mu(x),\mu(y))\in\alpha^{\I_{\T,\A}}$ (or $(\mu(x)\in\alpha^{\I_{\T,\A}}$) for each atom $\alpha(x,y)$ (or $\alpha(x)$) in $q^{\nexist}(\vec{x})$. By \Cref{lemma:skipCorrectness} we can imply that \co{$(\mu(x),\mu(y))\in\algoMath{\alpha}^\A$} and \co{$\mu(x)\in\algoMath{\alpha}^\A$} respectively. Thus, the claim holds that $\vec{a}$ is an answer to \co{$\algoMath{q^{\nexist}(\vec{x})}$} over $\A$.
The proof for ($\Leftarrow$) is analogous, since \Cref{lemma:skipCorrectness} and \Cref{lemma:removeExistentialVariables} are bidirectional.
\end{proof}

% Cem: commenting this out since we moved to introduce inverse automata formally 
% To invert the nNFA $\mathbf{A}$ we take the inverse of each transition, and make initial states to be final and vice versa. 

% Note that we assume that the alphabet is closed under inverses, i.e., $\textit{inv}(r)=r^-$, $\textit{inv}(r^-) = r$ and $\textit{inv}(A)=A$, for $r\in\rolenames$ and $A\in \conceptnames\cup \datatests$.

\begin{algorithm}[t]
% \small
    \caption{
    % $\mathtt{rewrite\_jof\_CNRPQ}$ -- 
    \algoNameLong (\algoName)} 
    % \mbox{Rewriting join-on-free CN2RPQs with data tests}}
    \label{algo:rewriteCN2RPQ}
    \SetKwInOut{Input}{Input}
    \SetKwInOut{Output}{Output}
    \SetKwFunction{FskipAnonymous}{$\skippingAnon{\alpha}$}
    \SetKwFunction{FskipCore}{$\skipping{\alpha}$}
    \SetKwFunction{FremoveExistVar}{$\mathtt{remove\exists Var}$}    
    \SetKwFunction{rewriteAtomic}{$\mathtt{\MakeLowercase{\algoName}}$}
    \SetKwProg{Fn}{function}{:}{}

    \Input{\ join-on-free CN2RPQ $q(\vec{x}), \text{ TBox } \T$}
    \Output{\ CN2RPQ $q'$}
    \Fn{\rewriteAtomic{$q$}}{
        $q$=\FremoveExistVar{$q$} \\
    \While{$q\neq q'$}{
        $q'=q$ \\
        \ForEach{$\alpha(x,y)\in q$}{
            $q'=q'[\alpha\setminus$\FskipAnonymous$]$ \\
            $q'=q'[\alpha\setminus$\FskipCore$]$
        }
    }
    \Return $q$
    }
    \Fn{\FremoveExistVar{$q$}}{
        \label{algo:rewriteCN2RPQ:line:removeExistVar}  
        \ForEach{$\alpha(x,y)\in q$}{
            \If{$y\not\in\vec{x}$}{
                $q=q[\alpha(x,y)\setminus \nesting{\alpha}(x)]$
            }\uElseIf{$x\not\in\vec{x}$}{
                $q=q[\alpha(x,y)\setminus \nesting{\alpha^-}(y)]$
            } 
        }
        \Return $q$
    }
\end{algorithm}

\begin{example}\label[example]{ex:fullalgo} \bl{For the algorithm \algoName the only new step in this section is the removal of existential variables. For demonstration purposes consider the query $q(x,y):=\alpha_1(x,y),\alpha_2(x_1,y),\alpha_3(x,y_1)$. Then, the returned query of $\FremoveExistVar{q}$ is $q(x,y):=\alpha_1(x,y),\nesting{\alpha_2^-}(y),\nesting{\alpha_3}(x)$. Since for $\alpha_2(x_1,y)$ variable $y$ on the right side is bound, we have to invert $\alpha_2$. If we consider $\alpha_2$ to be from \Cref{subfig:alpha2}, then the following is the inverse $\alpha_2^-$.}
\begin{center}
    \begin{tikzpicture}[scale=0.5,node distance = 1.3cm,shorten >=1pt,state/.style={circle,draw,inner sep=1.5pt}]
            \begin{scope}[local bounding box=automaton]
            \node[state, accepting] (s7) {$s_7$};
            \node[state, right of=s7] (s8) {$s_8$};
            \node[state, right of=s8] (s9) {$s_9$};
            \node[state, initial right, right of=s9] (s10) {$s_{10}$};
            \path[->]   (s8) edge[above] node{$\role{t}^-$} (s7)
                        (s9) edge[above] node{$\role{r}$} (s8)
                        (s10) edge[above] node{$\concept{C}?$} (s9);
            \end{scope}
            % 2. Place the name \alpha_1 to the left of the bounding box
            %\node[left=0.1cm of automaton.north west] {\LARGE $\alpha_2$:};
            %\node[right=0.1cm of automaton.north east] {\phantom{\LARGE $\alpha_1$:}};
        \end{tikzpicture}
\end{center}
\end{example}
\section{Implementation \& Experiments}\label{sec:implementation}
% \todo[inline]{R1.1. The high timeout rate during query execution on Neo4j is a critical issue. Please expand the discussion on why these timeouts occur (e.g., Cartesian products, large result sets) and outline potential optimizations to make the evaluation step more tractable in practice. What should the user do or what are your next steps to make it a feasable solution? \\
% R1.3. In Section 5, please clarify the methodology behind the 5 handcrafted queries for the MMM dataset. Why were these queries chosen, and what specific features do they test? \\
% R1.4. The visual representation of the timeouts in Figure 5 is somewhat unclear. Adjusting the charts to distinctly separate or highlight successful executions versus failed/timeout executions would improve readability. \\
% R3.2. In addition to the complexity analysis, I would be interested in a discussion on scalability with respect to TBox size/complexity. The DBPedia and MMM comparison is presented somewhat anecdotally rather than as a systematic complexity-driven analysis. \\
% R3.4. The experimental design mixes two different things: the rewriting time (which looks practically encouraging, under 600ms even for MMM) and evaluation time (which is poor, with frequent timeouts). The paper's own framing ("the generated queries with Neo4j still seem to be a challenge") undersells how central this weakness is to the paper's practicality claims. The synthetic query generation procedure (random concept/role selection with bounded path length) is also not validated against realistic query distributions.
% }

\na{In this section, our aim is to show  the practical utility of our rewriting algorithm \algoName(cf.~\Cref{sec:rewrite_cof_CN2RPQ}). To this end, we implemented a prototype of our rewriting algorithm in Java. This prototype takes as input a navigational query (join-on-free CN2PRQ), using a custom format, and rewrites it with respect to a $\ontoLang$ ontology (our DL-Lite extension supporting tests over property values) expressed in the OWL format. As part of the experiments, we then run the produced Cypher queries over a Neo4j database and report on the results.}
\CR{The source code of our prototype\footnote{\url{https://gitlab.com/austrian-neurocloud/software/owl2cypher/-/releases/CIKM2026}} is publicly available}.

\paragraph{Setup and Implementation}

\na{All experiments were executed on a virtual machine with 400\,GB RAM and 16 cores running Debian 12. The host machine is equipped with an AMD EPYC 7313 
%16-core 
processor clocked at 2.00\,GHz, running Neo4j 5.25.0.}
% Biancas text:
%All experiments were executed on a virtual machine running Debian GNU/Linux 12 (bookworm) hosted on a cluster node. The machine is equipped with an AMD EPYC 7313 16-Core Processor running at 2.00 GHz and 400 GB RAM, with Neo4j 5.25.0 also running on the same machine.
\na{We implemented \co{a} prototype \co{of} our rewriting algorithm \algoName in Java. We base our implementation on several external libraries: the OWL API {\cite{owlapi}} to parse and extend OWL ontologies, the Hermit reasoner~\cite{DBLP:journals/jar/GlimmHMSW14} to verify the JumpTrans candidates, ANTLR {\cite{antlr}} for parsing the input CN2RPQs, the Java library JgraphT {\cite{jgrapht}} for the internal representation of \nnfa{s}, and finally Cypher DSL \cite{cypherdsl} to convert the output into Cypher syntax.
}

%In this section, we present the results of an experimental evaluation of the algorithm from \Cref{sec:rewrite_cof_CN2RPQ}, that we implemented in the publicly available \emph{owl2cypher} prototype \cite{owl2cypher}.
%\emph{owl2cypher} is a Java implementation and utilises several external libraries: the OWL API {\cite{owlapi}} to parse and extend OWL ontology, the Hermit reasoner~\cite{DBLP:journals/jar/GlimmHMSW14} to verify the JumpTrans candidates, ANTLR {\cite{antlr}} for parsing the input CN2RPQ, the Java library JgraphT {\cite{jgrapht}} for the internal representation of \nnfa{s}, and finally Cypher DSL \cite{cypherdsl} to convert the output into Cypher syntax.

\paragraph{Ontology and Data.} 
To evaluate our rewriting algorithm,
we considered two real-world ontologies from different domains with divergent characteristics. In \Cref{tab:dataset_stats} we provide the number of concepts and roles for each ontology. DBPedia \cite{DBpediaDataset} is a dataset containing extracted data from Wikipedia articles, which is of interest to our rewriting algorithm due to its size. The second dataset is the open data of Montpellier Méditerranée Métropole~\cite{MMMDataset} (MMM), \co{ which covers the sewer system of the city}. MMM contains existentials on the right, triggering $\skippingAnon{\alpha_0}$ from \Cref{sec:rewrite_cof_N2RPQ}.

%\todo[inline]{NA: Say why we chose these ontologies. We should argue that our ontology/data covers relevant practical scenarios. Example: DBPedia has a large number of concepts and roles and is associated to a large dataset. MMM: existential roles on the right, which are challenging for rewrite algorithms?}
\begin{table}[t]%arxn
    \caption{Statistics of ontologies and Neo4j data}
    \label{tab:dataset_stats}
    \centering
    \begin{tabular}{ccccc}
    \toprule
        \multirow{2}{*}{\textbf{ontology}}   & \multirow{2}{*}{\textbf{\# concepts}} & \multirow{2}{*}{\textbf{\# roles}} & \multicolumn{2}{c}{\textbf{Neo4j data}}\\
        & & & \textbf{\#nodes} & \textbf{\#relations} \\
        \midrule
        DBPedia             & $\mathtt{1230}$ & $\mathtt{1212}$ & {$\mathtt{10\ 075\ 182}$} & $\mathtt{22\ 194\ 296}$  \\
        MMM\phantom{aa}                 & \0\0$\mathtt{53}$ & \0\0 $\mathtt{18}$ & \0\0\ $\mathtt{107\ 624}$ & \0\0\ $\mathtt{379\ 002}$ \\
        \bottomrule
    \end{tabular}
\end{table}

To the best of our knowledge, the current version of the OWL2 standard {\cite{owl2}} does not support data tests in role inclusions, i.e., axioms of the form $\{p\odot v \ ?\}\ISA r$; therefore, we do not consider this kind of axioms in the evaluation. 

% (OLD) As TBox we use the Cognitive Task Ontology (CogiTO) \cite{COGITO} and extended it by the axioms below (see \cite{owl2cypher} for this version of the ontology). There are multiple options to indicate that a participant is female; we have added axioms to cover all such cases, although we present only one representative example here.
% This ontology includes about \num{4686} concepts and \num{10002} axioms, whereas 720 axioms are of the form $\concept{Reading Task} \ISA \exists \role{has}.\concept{Read}\AND\exists \role{has}.\concept{Language-item}$.
% Note that CogiTO contains conjunction and qualified existentials on the left-hand side, which the prototype ignores. 
%     $\T=\big \{ \ \{\property{License}=\text{``CC0''} \ ? \} \ISA \concept{Reusable}$, $\{\property{BIDSVersion}=1.2.0 \ ? \}\ISA\concept{LatestBIDSVersion}$, $
%     \{\property{gender}= \text{``f''} \ ?\}\ISA\concept{Female}$, $\{\property{Manufacturer}=\text{``Siemens''} \}\ISA\concept{HighQuality}\big\}$

%\paragraph{Data.} 
\na{Each ontology in \Cref{tab:dataset_stats} is associated with a corresponding dataset stored in a separate Neo4j database.}
%For each of the ontologies in \Cref{tab:dataset_stats}, a corresponding dataset is available, \na{each of which} is stored in separate Neo4j databases. 
\na{We choose Neo4j because it provides a level of GQL compliance sufficient for our experimental evaluation.}
%We chose the Neo4j system as as it is currently, to the best of our knowledge, the closest available implementation to a GQL-compliant database.
\na{We relate concepts in the ontology to node labels,
% in the database
while roles are related to relationship types, i.e., edge labels.}
%We assume that concepts in the ontology correspond to node labels in the database, while roles correspond to relationship types (i.e., edge labels).

%(OLD) The dataset for the experiments (see \cite{owl2cypher}) are from the domain of cognitive neuroscience \cite{Ravenschlag2023a} and is stored in a Neo4j database, consisting of \num[round-precision=0]{396741} nodes and \num[round-precision=0]{2870405} relationships.

\paragraph{Queries.} 
Since no benchmark queries are available for rewriting join-on-free CN2RPQs, we adopt two approaches for constructing evaluation queries: (1) to evaluate the rewriting algorithm we synthetically generate 100 N2RPQs by randomly selecting concepts and roles from a given ontology, and by randomly applying operators such as concatenation, union, and the Kleene star, while enforcing structural constraints such as bounded path lengths and nesting levels; (2) for the evaluation over Neo4j
% of the rewritten queries
\CR{we wrote a set of 5 Cypher queries based on SPARQL queries present in the MMM dataset.}

\paragraph{Results.}

\begin{figure*}
    \centering

    \begin{subfigure}{0.3\textwidth}
        \centering
        \resizebox{!}{3.5cm}{   
\begin{tikzpicture}
    \begin{axis}[
        xlabel={Number of Transitions},
        ylabel={Rewriting Time [ms]},
        grid=major,
    ]

    \addplot[
        only marks,
        mark=*,
        solarized-magenta
    ] table[
        col sep=semicolon,
        header=true,
        x={Number of Transitions (Input)},
        y={Rewriting Time[ms]},
    ] {data/ResultsRewritingQueries_DBPedia.csv};
    \addlegendentry{input}
    \addplot[
        only marks,
        mark=*,
        solarized-blue
    ] table[
        col sep=semicolon,
        header=true,
        x={Number of Transitions (Output)},
        y={Rewriting Time[ms]},
    ] {data/ResultsRewritingQueries_DBPedia.csv};
    \addlegendentry{output}

    \end{axis}
\end{tikzpicture}
}
        \caption{DBPedia}
        \label{fig:results-dbpedia}
    \end{subfigure}
    \hfill
    \begin{subfigure}{0.3\textwidth}
        \centering
        \resizebox{!}{3.5cm}{   
\begin{tikzpicture}
    \begin{axis}[
        xlabel={Number of Transitions},
        ylabel={Rewriting Time [ms]},
        grid=major,
    ]
    \addplot[
        only marks,
        mark=*,
        solarized-magenta
    ] table[
        col sep=semicolon,
        header=true,
        x={Number of Transitions (Input)},
        y={Rewriting Time[ms]},
    ] {data/ResultsRewritingQueries_MMM.csv};
     \addlegendentry{input}
     \addplot[
        only marks,
        mark=*,
        solarized-blue
    ] table[
        col sep=semicolon,
        header=true,
        x={Number of Transitions (Output)},
        y={Rewriting Time[ms]},
    ] {data/ResultsRewritingQueries_MMM.csv};
    \addlegendentry{output}
    \end{axis}
\end{tikzpicture}
}
        \caption{MMM}
        \label{fig:results-mmm-transitions}
    \end{subfigure}
    \hfill
    \begin{subfigure}{0.3\textwidth}
        \centering
        \resizebox{!}{3.5cm}{
\begin{tikzpicture}
    \begin{axis}[
        xlabel={Number of relevant \textsf{skipRel} tuples},
        ylabel={Rewriting Time [ms]},
        grid=major,
    ]

    \addplot[
        only marks,
        mark=*,
        solarized-green
    ] table[
        col sep=semicolon,
        header=true,
        x={Number of JumpTransCandidates},
        y={Rewriting Time[ms]},
    ] {data/ResultsRewritingQueries_MMM.csv};
    %\addlegendentry{MMM}

    \end{axis}
\end{tikzpicture}
}
        \caption{MMM}
        \label{fig:results-mmm-jumptrans}
    \end{subfigure}
    \caption{Runtime of the rewriting algorithm.}
    \label{fig:resultsExperiments}
    
    \Description[A chart of our experiments.]
    {Two subfigures showing the rewriting time for DBPedia and MMM datasets.}
\end{figure*}

\begin{figure}
    \centering
    \resizebox{!}{3.5cm}{   
\begin{tikzpicture}
    \begin{axis}[
        xlabel={Number of Transitions (Output)},
        ylabel={Evaluation Time [s]},
        ymin=0,
        ymax=15000,
        scaled y ticks = false,
        yticklabel={\pgfmathparse{\tick/1000}\pgfmathprintnumber{\pgfmathresult}},
        grid=major,
    ]
    \addplot[
        only marks,
        mark=*,
        solarized-blue
    ] table[
        col sep=semicolon,
        header=true,
        x={Number of Transitions (Output)},
        y={Evaluation Time[ms]},
    ] {data/ResultsRewritingQueries_DBPedia.csv};
    %\addlegendentry{output}
    \end{axis}
\end{tikzpicture}
}
    \caption{Evaluation of DBPedia dataset. 56 timeouts at 20s.}
    \label{fig:DBPedia_eval_times}
    \vspace{-2mm}
\end{figure}

% \begin{figure*}
%      \centering
%     \begin{minipage}
%         {0.48\textwidth}
%         \centering
%         \input{data/Plot_TransitionsOutput-RewritingTime_DBPedia}
%     \end{minipage}
%      \begin{minipage}
%         {0.48\textwidth}
%         \centering
%         \input{data/Plot_TransitionsOutput-RewritingTime_MMM}
%     \end{minipage}

%     \caption{Results of Experimental Evaluation}
%     \label{fig:resultsExperiments}
%     \Description[A chart of our experiments.]{A scatter plot with colour-coded dots indicating how for a given dataset the experiments performed in our experiments.}
% \end{figure*}

\Cref{fig:resultsExperiments} shows the results of rewriting the generated queries. Notice in these figures we report the results as pairs: for each query, we show its transitions (seen as an N2RPE) before and after rewriting, on the x axis. Both data points share the same y-axis, indicating the rewriting time. Although the DBPedia ontology contains significantly more concepts and roles, rewriting is generally faster (under 250 ms) than with the MMM ontology (around 600 ms). This is because DBPedia does not contain any axioms with existentials on the right-hand side ($A\ISA\exists r.\top$). Therefore, there are no relevant \textsf{SkipRel} tuples. As can be seen from \Cref{fig:results-dbpedia}, the rewriting time depends on the number of transitions in the rewritten query for DBPedia. 
From the results with the MMM dataset, we can conclude that, if there are relevant \textsf{SkipRel} tuples, they seem to impact the rewriting time more than a high number of concepts (see \Cref{fig:results-mmm-transitions} and \Cref{fig:results-mmm-jumptrans}). It should be noted that for 62 of the 100 generated queries computing the power set was not feasible given the number of states of the query ($\ge 30)$. 

While the evaluation of rewriting times already shows practical feasibility,  the generated queries with Neo4j still seems to be a challenge, as suggested by the results in \Cref{tab:neo4jEvaluationMMM} and \Cref{fig:DBPedia_eval_times}.

In the DBPedia dataset, we see queries 
% falling into two groups: 
either running within 2 to 8 seconds or quickly timing out (with the timeout at 20s). We see only a weak correlation with number of transitions in the output.  Similarly, 3 of 5 queries over the MMM dataset time out. 

The reasons for the high evaluation times may be, on the one hand, query constructions that initiate the computation of the Cartesian product or, on the other hand, a high number of query answers. Note, that the generated queries do not return any answers, as could be anticipated. However, optimising the evaluation of the generated queries
% in a given database system 
was out of scope. \CR{From these results, it is clear that there is a need to understand exactly why rewritten queries time-out and how to achieve optimal rewritten queries. }

% The number of concepts, roughly indicating size and complexity of the rewriting, increases significantly from \num{3}–\num{5} concepts in the input to \num{67}–\num{124} in the rewritten queries. 
% The rewriting times for all queries are within a reasonable range of a few seconds (\num{3.3}s-\num{7.2}s) suggesting that the rewriting is practically feasible. The evaluation times for the rewritten queries, with the exception of $q_2$ which timed out at the 600s threshold, remain under \num{20}s. 
% Given the size of the ontology, these numbers demonstrate an acceptable performance for the majority of our queries, although it also reveals that certain queries can result in significantly longer evaluation times.
% A possible reason for the outlier $q_2$ is a high number of $\concept{Female}$ instances in the dataset. 
% We emphasise that the prototype is a very simple proof of concept and does not yet implement any optimisations.
% It has often been documented that queries obtained from rewriting algorithms tend to perform poorly: they introduce redundancy, nested disjunctions, and other complex subqueries that the engines are not optimised for. Thus, they require dedicated optimisations, which are often feasible given their somewhat predictable structure. We are confident that there is ample opportunity to optimise and improve the runtimes, and we plan to do so in the future.

\section{Conclusion} \label{sec:conclusion}
In this paper, we present the first practical algorithm for rewriting N2RPQs and a significant subset of CN2RPQs, \co{making use of the ability of modern graph databases to express complex navigational queries with nesting.  }
% thereby capturing a substantial portion of Cypher and GQL. 
\co{ We particularly note the ability of our approach to access key values within path expressions in the presence of an ontology with data tests.} 
% In our $\ontoLang$ ontology language, property value tests can be used to define concepts and roles in the ontology. 
The result is a highly flexible approach to ontology-mediated querying of property graphs that still allows for query rewriting into native graph database technologies.   We have demonstrated the formal correctness of our approach and provided a proof-of-concept implementation that can rewrite Cypher queries to incorporate ontological knowledge and evaluate the rewritten queries using Cypher.

To achieve a simple and practicable solution we focused on join-on-free CN2RPQs. This does not seem too limiting: even plain N2RPEs can already express many realistic examples that involve complex bidirectional navigation on the anonymous implicit facts; arbitrary conjunctions over the free variables make the query language even more powerful. We expect that real-world queries that require different regular paths to travel separately and join somewhere in the unnamed part of the canonical model will very rarely emerge in practice,  if at all.
We stress that algorithms for full extended CN2RPQs in expressive DLs have been available for over a decade \cite{Bienvenu2014}. It is not difficult to adapt these algorithms and extend our technique to be complete for all CN2RPQs and $\ontoLang$ ontologies; in fact, we do that in \co{the {extended version of this paper}}. 
However, this seems to dramatically increase the run time of the rewriting algorithm, and the added value seems limited, as we have not yet found a good use case for queries beyond our join-on-free fragment. We may explore this further.

\begin{table}[t]
    \caption{Evaluation of MMM dataset. Timeout(\VarClock) at 1000s.}
    \label{tab:neo4jEvaluationMMM}
    \centering
    \begin{tabular}{c c c c c c}
    \toprule
    \multirow{2}{*}{\textbf{query}} &  \multicolumn{2}{c}{\textbf{transitions}}&  \multirow{2}{*}{\textsf{\# skipRel}} & \multicolumn{2}{c}{\textbf{time [ms]}} \\
        & \textbf{input} & \textbf{output} & & \textbf{rew.} & \textbf{eval.} \\
    \midrule
    \textbf{Q1} & \num{6} & {6} & {52}& {153} & \num{52114} \\
    \textbf{Q2} & {6} & {9} & {468} & {137} & \VarClock \\
    \textbf{Q3} & {3} & {6} & {52} & {42} & \num{377845} \\
    \textbf{Q4} & {5} & {20} & {624} & {69} & \VarClock\\
    \textbf{Q5} & {4} & {10} & {312} & {41} & \VarClock \\
    \bottomrule
\end{tabular}
\end{table}

Other promising extensions that we plan to explore include improved support for property values and more expressive ontology languages. For the former, we will consider existing works on query rewriting with data values \cite{DBLP:conf/ecai/SavkovicC12,DBLP:conf/ijcai/BaaderBL17}. For the latter, we will build on recent efforts to identify richer ontology languages with NL data complexity \cite{arpasi2025horn,DBLP:conf/esws/LöhnertAOO25,DBLP:journals/jair/DimartinoWCP25}. 
\co{This work brings us significantly closer to
making ontology-based data-access over graph databases a reality, and as such, we aim 
% We plan 
to explore the potential of our technique for real-life OBDA systems with navigational capabilities, and to conduct experiments on systems supporting GQL as soon as they become available.} \CR{This includes the work of extending our data model to align more closely with PGs and also to explore practical methods to go beyond the join-on-free fragment.}
\co{Another interesting line of future work would be the creation of useful benchmarks that combine both complex navigational queries with rich ontologies expressed over them. }

% \todo[inline]{R3.5. Not a comment on this specific work, but an idea for future work the authors might be interested in pursuing: adding a discussion about ontology evolution, updates, or incremental rewriting, and how the approach would behave under concurrent updates to the underlying property graph.}

%%We also plan to keep moving towards full CN2RPQs. 

% }, and we look forward to exploring its advantages in \co{further} real-world use cases. 

%%
%% The acknowledgments section is defined using the "acks" environment
%% (and NOT an unnumbered section). This ensures the proper
%% identification of the section in the article metadata, and the
%% consistent spelling of the heading.
\begin{acks}
This research was funded in whole or in part by the Austrian Science Fund (FWF) 10.55776/COE12, 10.55776/PIN8884924 and 10.55776/P34962. For open access purposes, the author has applied a CC BY public copyright license to any author accepted manuscript version arising from this submission. This work was partially supported by the State of Salzburg under grant number 20102-F2101143-FPR (DNI). 
\CR{We thank the reviewers for their insightful comments and Jean Christoph Jung for pointing out an issue with the completeness of the algorithm in a preliminary version.}
\end{acks}

\section*{GenAI Usage Disclosure}
GitHub Copilot was used to help improve the prototype's README documentation and CLI. The authors remain fully responsible for all content, its implementation and interpretation, and the conclusions presented in this work.

%%
%% The next two lines define the bibliography style to be used, and
%% the bibliography file.
\bibliographystyle{ACM-Reference-Format}
\bibliography{main-cikm}

\newcommand{\etalchar}[1]{$^{#1}$}
\begin{thebibliography}{EOS{\etalchar{+}}12}

\bibitem[BCOS14]{Bienvenu2014}
Meghyn Bienvenu, Diego Calvanese, Magdalena Ortiz, and Mantas Simkus.
\newblock Nested regular path queries in description logics.
\newblock In Chitta Baral, Giuseppe~De Giacomo, and Thomas Eiter, editors, {\em Principles of Knowledge Representation and Reasoning: Proceedings of the Fourteenth International Conference, {KR} 2014, Vienna, Austria, July 20-24, 2014}. {AAAI} Press, 2014.

\bibitem[EOS{\etalchar{+}}12]{Eiter2012}
Thomas Eiter, Magdalena Ortiz, Mantas Simkus, Trung{-}Kien Tran, and Guohui Xiao.
\newblock Query rewriting for {Horn}-{SHIQ} plus rules.
\newblock In J{\"{o}}rg Hoffmann and Bart Selman, editors, {\em Proceedings of the Twenty-Sixth {AAAI} Conference on Artificial Intelligence, July 22-26, 2012, Toronto, Ontario, Canada}, pages 726--733. {AAAI} Press, 2012.

\end{thebibliography}


%%% -*-BibTeX-*-
%%% Do NOT edit. File created by BibTeX with style
%%% ACM-Reference-Format-Journals [18-Jan-2012].

\begin{thebibliography}{41}

%%% ====================================================================
%%% NOTE TO THE USER: you can override these defaults by providing
%%% customized versions of any of these macros before the \bibliography
%%% command.  Each of them MUST provide its own final punctuation,
%%% except for \shownote{} and \showURL{}.  The latter two
%%% do not use final punctuation, in order to avoid confusing it with
%%% the Web address.
%%%
%%% To suppress output of a particular field, define its macro to expand
%%% to an empty string, or better, \unskip, like this:
%%%
%%% \newcommand{\showURL}[1]{\unskip}   % LaTeX syntax
%%%
%%% \def \showURL #1{\unskip}           % plain TeX syntax
%%%
%%% ====================================================================

\ifx \showCODEN    \undefined \def \showCODEN     #1{\unskip}     \fi
\ifx \showISBNx    \undefined \def \showISBNx     #1{\unskip}     \fi
\ifx \showISBNxiii \undefined \def \showISBNxiii  #1{\unskip}     \fi
\ifx \showISSN     \undefined \def \showISSN      #1{\unskip}     \fi
\ifx \showLCCN     \undefined \def \showLCCN      #1{\unskip}     \fi
\ifx \shownote     \undefined \def \shownote      #1{#1}          \fi
\ifx \showarticletitle \undefined \def \showarticletitle #1{#1}   \fi
\ifx \showURL      \undefined \def \showURL       {\relax}        \fi
% The following commands are used for tagged output and should be
% invisible to TeX
\providecommand\bibfield[2]{#2}
\providecommand\bibinfo[2]{#2}
\providecommand\natexlab[1]{#1}
\providecommand\showeprint[2][]{arXiv:#2}

\bibitem[Angles(2018)]%
        {DBLP:conf/amw/Angles18}
\bibfield{author}{\bibinfo{person}{Renzo Angles}.} \bibinfo{year}{2018}\natexlab{}.
\newblock \showarticletitle{The Property Graph Database Model}. In \bibinfo{booktitle}{\emph{Proceedings of the 12th Alberto Mendelzon International Workshop on Foundations of Data Management, Cali, Colombia, May 21-25, 2018}} \emph{(\bibinfo{series}{{CEUR} Workshop Proceedings}, Vol.~\bibinfo{volume}{2100})}, \bibfield{editor}{\bibinfo{person}{Dan Olteanu} {and} \bibinfo{person}{Barbara Poblete}} (Eds.). \bibinfo{publisher}{CEUR-WS.org}.
\newblock
\urldef\tempurl%
\url{https://ceur-ws.org/Vol-2100/paper26.pdf}
\showURL{%
\tempurl}


\bibitem[Angles et~al\mbox{.}(2017)]%
        {Angles2017}
\bibfield{author}{\bibinfo{person}{Renzo Angles}, \bibinfo{person}{Marcelo Arenas}, \bibinfo{person}{Pablo Barcel{\'{o}}}, \bibinfo{person}{Aidan Hogan}, \bibinfo{person}{Juan~L. Reutter}, {and} \bibinfo{person}{Domagoj Vrgoc}.} \bibinfo{year}{2017}\natexlab{}.
\newblock \showarticletitle{Foundations of Modern Query Languages for Graph Databases}.
\newblock \bibinfo{journal}{\emph{{ACM} Comput. Surv.}} \bibinfo{volume}{50}, \bibinfo{number}{5} (\bibinfo{year}{2017}), \bibinfo{pages}{68:1--68:40}.
\newblock
\href{https://doi.org/10.1145/3104031}{doi:\nolinkurl{10.1145/3104031}}


\bibitem[Arp{\'a}si et~al\mbox{.}(2025)]%
        {arpasi2025horn}
\bibfield{author}{\bibinfo{person}{J{\'a}nos Arp{\'a}si}, \bibinfo{person}{Bartosz Bednarczyk}, {and} \bibinfo{person}{Magdalena Ortiz}.} \bibinfo{year}{2025}\natexlab{}.
\newblock \showarticletitle{A Horn Extension of DL-Lite with NL Data Complexity}. In \bibinfo{booktitle}{\emph{Proceedings of the International Workshop on Description Logics (DL 2025)}}.
\newblock
\newblock
\shownote{To appear}.


\bibitem[Artale et~al\mbox{.}(2012)]%
        {DBLP:conf/ecai/ArtaleRK12}
\bibfield{author}{\bibinfo{person}{Alessandro Artale}, \bibinfo{person}{Vladislav Ryzhikov}, {and} \bibinfo{person}{Roman Kontchakov}.} \bibinfo{year}{2012}\natexlab{}.
\newblock \showarticletitle{DL-Lite with Attributes and Datatypes}. In \bibinfo{booktitle}{\emph{{ECAI} 2012 - 20th European Conference on Artificial Intelligence. Including Prestigious Applications of Artificial Intelligence {(PAIS-2012)} System Demonstrations Track, Montpellier, France, August 27-31 , 2012}} \emph{(\bibinfo{series}{Frontiers in Artificial Intelligence and Applications}, Vol.~\bibinfo{volume}{242})}, \bibfield{editor}{\bibinfo{person}{Luc~De Raedt}, \bibinfo{person}{Christian Bessiere}, \bibinfo{person}{Didier Dubois}, \bibinfo{person}{Patrick Doherty}, \bibinfo{person}{Paolo Frasconi}, \bibinfo{person}{Fredrik Heintz}, {and} \bibinfo{person}{Peter J.~F. Lucas}} (Eds.). \bibinfo{publisher}{{IOS} Press}, \bibinfo{pages}{61--66}.
\newblock
\href{https://doi.org/10.3233/978-1-61499-098-7-61}{doi:\nolinkurl{10.3233/978-1-61499-098-7-61}}


\bibitem[Baader et~al\mbox{.}(2017)]%
        {DBLP:conf/ijcai/BaaderBL17}
\bibfield{author}{\bibinfo{person}{Franz Baader}, \bibinfo{person}{Stefan Borgwardt}, {and} \bibinfo{person}{Marcel Lippmann}.} \bibinfo{year}{2017}\natexlab{}.
\newblock \showarticletitle{Query Rewriting for DL-Lite with n-ary Concrete Domains}. In \bibinfo{booktitle}{\emph{Proceedings of the Twenty-Sixth International Joint Conference on Artificial Intelligence, {IJCAI} 2017, Melbourne, Australia, August 19-25, 2017}}, \bibfield{editor}{\bibinfo{person}{Carles Sierra}} (Ed.). \bibinfo{publisher}{ijcai.org}, \bibinfo{pages}{786--792}.
\newblock
\href{https://doi.org/10.24963/IJCAI.2017/109}{doi:\nolinkurl{10.24963/IJCAI.2017/109}}


\bibitem[Bate et~al\mbox{.}(2016)]%
        {DBLP:conf/kr/BateMGSH16}
\bibfield{author}{\bibinfo{person}{Andrew Bate}, \bibinfo{person}{Boris Motik}, \bibinfo{person}{Bernardo~Cuenca Grau}, \bibinfo{person}{Frantisek Simanc{\'{\i}}k}, {and} \bibinfo{person}{Ian Horrocks}.} \bibinfo{year}{2016}\natexlab{}.
\newblock \showarticletitle{Extending Consequence-Based Reasoning to {SRIQ}}. In \bibinfo{booktitle}{\emph{Principles of Knowledge Representation and Reasoning: Proceedings of the Fifteenth International Conference, {KR} 2016, Cape Town, South Africa, April 25-29, 2016}}, \bibfield{editor}{\bibinfo{person}{Chitta Baral}, \bibinfo{person}{James~P. Delgrande}, {and} \bibinfo{person}{Frank Wolter}} (Eds.). \bibinfo{publisher}{{AAAI} Press}, \bibinfo{pages}{187--196}.
\newblock
\urldef\tempurl%
\url{https://aaai.org/papers/20-12882-extending-consequence-based-reasoning-to-sriq/}
\showURL{%
\tempurl}


\bibitem[Batoul et~al\mbox{.}(2024)]%
        {MMMDataset}
\bibfield{author}{\bibinfo{person}{Haydar Batoul}, \bibinfo{person}{Nan{\'e}e Chahinian}, {and} \bibinfo{person}{Claude Pasquier}.} \bibinfo{year}{2024}\natexlab{}.
\newblock \showarticletitle{From Standards to an Ontology-Based Data Access system for Sewer Networks}. In \bibinfo{booktitle}{\emph{10{\`e}me Journ{\'e}es Doctorales en Hydrologie Urbaine}}.
\newblock


\bibitem[Bienvenu et~al\mbox{.}(2014)]%
        {Bienvenu2014}
\bibfield{author}{\bibinfo{person}{Meghyn Bienvenu}, \bibinfo{person}{Diego Calvanese}, \bibinfo{person}{Magdalena Ortiz}, {and} \bibinfo{person}{Mantas Simkus}.} \bibinfo{year}{2014}\natexlab{}.
\newblock \showarticletitle{Nested Regular Path Queries in Description Logics}. In \bibinfo{booktitle}{\emph{Principles of Knowledge Representation and Reasoning: Proceedings of the Fourteenth International Conference, {KR} 2014, Vienna, Austria, July 20-24, 2014}}, \bibfield{editor}{\bibinfo{person}{Chitta Baral}, \bibinfo{person}{Giuseppe~De Giacomo}, {and} \bibinfo{person}{Thomas Eiter}} (Eds.). \bibinfo{publisher}{{AAAI} Press}.
\newblock
\urldef\tempurl%
\url{http://www.aaai.org/ocs/index.php/KR/KR14/paper/view/8000}
\showURL{%
\tempurl}


\bibitem[Bienvenu and Ortiz(2015)]%
        {Bienvenu2015}
\bibfield{author}{\bibinfo{person}{Meghyn Bienvenu} {and} \bibinfo{person}{Magdalena Ortiz}.} \bibinfo{year}{2015}\natexlab{}.
\newblock \showarticletitle{Ontology-Mediated Query Answering with Data-Tractable Description Logics}. In \bibinfo{booktitle}{\emph{Reasoning Web. Web Logic Rules - 11th International Summer School 2015, Berlin, Germany, July 31 - August 4, 2015, Tutorial Lectures}} \emph{(\bibinfo{series}{Lecture Notes in Computer Science}, Vol.~\bibinfo{volume}{9203})}, \bibfield{editor}{\bibinfo{person}{Wolfgang Faber} {and} \bibinfo{person}{Adrian Paschke}} (Eds.). \bibinfo{publisher}{Springer}, \bibinfo{pages}{218--307}.
\newblock
\urldef\tempurl%
\url{https://doi.org/10.1007/978-3-319-21768-0_9}
\showURL{%
\tempurl}


\bibitem[Bienvenu et~al\mbox{.}(2015)]%
        {Bienvenu2015a}
\bibfield{author}{\bibinfo{person}{Meghyn Bienvenu}, \bibinfo{person}{Magdalena Ortiz}, {and} \bibinfo{person}{Mantas Simkus}.} \bibinfo{year}{2015}\natexlab{}.
\newblock \showarticletitle{Regular Path Queries in Lightweight Description Logics: Complexity and Algorithms}.
\newblock \bibinfo{journal}{\emph{J. Artif. Intell. Res.}}  \bibinfo{volume}{53} (\bibinfo{year}{2015}), \bibinfo{pages}{315--374}.
\newblock
\href{https://doi.org/10.1613/jair.4577}{doi:\nolinkurl{10.1613/jair.4577}}


\bibitem[Bonifati et~al\mbox{.}(2018)]%
        {DBLP:series/synthesis/2018Bonifati}
\bibfield{author}{\bibinfo{person}{Angela Bonifati}, \bibinfo{person}{George H.~L. Fletcher}, \bibinfo{person}{Hannes Voigt}, {and} \bibinfo{person}{Nikolay Yakovets}.} \bibinfo{year}{2018}\natexlab{}.
\newblock \bibinfo{booktitle}{\emph{Querying Graphs}}.
\newblock \bibinfo{publisher}{Morgan {\&} Claypool Publishers}.
\newblock
\showISBNx{978-3-031-00736-1}
\href{https://doi.org/10.2200/S00873ED1V01Y201808DTM051}{doi:\nolinkurl{10.2200/S00873ED1V01Y201808DTM051}}


\bibitem[Calvanese et~al\mbox{.}(2009)]%
        {DBLP:conf/ijcai/CalvaneseEO09}
\bibfield{author}{\bibinfo{person}{Diego Calvanese}, \bibinfo{person}{Thomas Eiter}, {and} \bibinfo{person}{Magdalena Ortiz}.} \bibinfo{year}{2009}\natexlab{}.
\newblock \showarticletitle{Regular Path Queries in Expressive Description Logics with Nominals}. In \bibinfo{booktitle}{\emph{{IJCAI} 2009, Proceedings of the 21st International Joint Conference on Artificial Intelligence, Pasadena, California, USA, July 11-17, 2009}}, \bibfield{editor}{\bibinfo{person}{Craig Boutilier}} (Ed.). \bibinfo{pages}{714--720}.
\newblock
\urldef\tempurl%
\url{http://ijcai.org/Proceedings/09/Papers/124.pdf}
\showURL{%
\tempurl}


\bibitem[Calvanese et~al\mbox{.}(2007)]%
        {Calvanese2007}
\bibfield{author}{\bibinfo{person}{Diego Calvanese}, \bibinfo{person}{Giuseppe~De Giacomo}, \bibinfo{person}{Domenico Lembo}, \bibinfo{person}{Maurizio Lenzerini}, {and} \bibinfo{person}{Riccardo Rosati}.} \bibinfo{year}{2007}\natexlab{}.
\newblock \showarticletitle{Tractable Reasoning and Efficient Query Answering in Description Logics: The {DL}-Lite Family}.
\newblock \bibinfo{journal}{\emph{Journal of Automated Reasoning}} \bibinfo{volume}{39}, \bibinfo{number}{3} (\bibinfo{date}{20 jul} \bibinfo{year}{2007}), \bibinfo{pages}{385--429}.
\newblock
\href{https://doi.org/10.1007/s10817-007-9078-x}{doi:\nolinkurl{10.1007/s10817-007-9078-x}}


\bibitem[Calvanese et~al\mbox{.}(1999)]%
        {Calvanese1999}
\bibfield{author}{\bibinfo{person}{Diego Calvanese}, \bibinfo{person}{Giuseppe~De Giacomo}, \bibinfo{person}{Maurizio Lenzerini}, {and} \bibinfo{person}{Moshe~Y. Vardi}.} \bibinfo{year}{1999}\natexlab{}.
\newblock \showarticletitle{Rewriting of Regular Expressions and Regular Path Queries}. In \bibinfo{booktitle}{\emph{Proceedings of the Eighteenth {ACM} {SIGACT-SIGMOD-SIGART} Symposium on Principles of Database Systems, May 31 - June 2, 1999, Philadelphia, Pennsylvania, {USA}}}, \bibfield{editor}{\bibinfo{person}{Victor Vianu} {and} \bibinfo{person}{Christos~H. Papadimitriou}} (Eds.). \bibinfo{publisher}{{ACM} Press}, \bibinfo{pages}{194--204}.
\newblock
\href{https://doi.org/10.1145/303976.303996}{doi:\nolinkurl{10.1145/303976.303996}}


\bibitem[Deutsch et~al\mbox{.}(2022)]%
        {Deutsch2022}
\bibfield{author}{\bibinfo{person}{Alin Deutsch}, \bibinfo{person}{Nadime Francis}, \bibinfo{person}{Alastair Green}, \bibinfo{person}{Keith Hare}, \bibinfo{person}{Bei Li}, \bibinfo{person}{Leonid Libkin}, \bibinfo{person}{Tobias Lindaaker}, \bibinfo{person}{Victor Marsault}, \bibinfo{person}{Wim Martens}, \bibinfo{person}{Jan Michels}, \bibinfo{person}{Filip Murlak}, \bibinfo{person}{Stefan Plantikow}, \bibinfo{person}{Petra Selmer}, \bibinfo{person}{Oskar van Rest}, \bibinfo{person}{Hannes Voigt}, \bibinfo{person}{Domagoj Vrgoc}, \bibinfo{person}{Mingxi Wu}, {and} \bibinfo{person}{Fred Zemke}.} \bibinfo{year}{2022}\natexlab{}.
\newblock \showarticletitle{Graph Pattern Matching in {GQL} and {SQL/PGQ}}. In \bibinfo{booktitle}{\emph{{SIGMOD} '22: International Conference on Management of Data, Philadelphia, PA, USA, June 12 - 17, 2022}}, \bibfield{editor}{\bibinfo{person}{Zachary~G. Ives}, \bibinfo{person}{Angela Bonifati}, {and} \bibinfo{person}{Amr~El Abbadi}} (Eds.). \bibinfo{publisher}{{ACM}}, \bibinfo{pages}{2246--2258}.
\newblock
\href{https://doi.org/10.1145/3514221.3526057}{doi:\nolinkurl{10.1145/3514221.3526057}}


\bibitem[Dimartino et~al\mbox{.}(2016)]%
        {Dimartino2016}
\bibfield{author}{\bibinfo{person}{Mirko~Michele Dimartino}, \bibinfo{person}{Andrea Cal{\`{\i}}}, \bibinfo{person}{Alexandra Poulovassilis}, {and} \bibinfo{person}{Peter~T. Wood}.} \bibinfo{year}{2016}\natexlab{}.
\newblock \showarticletitle{Query Rewriting under Linear \emph{EL} Knowledge Bases}. In \bibinfo{booktitle}{\emph{Web Reasoning and Rule Systems - 10th International Conference, {RR} 2016, Aberdeen, UK, September 9-11, 2016, Proceedings}} \emph{(\bibinfo{series}{Lecture Notes in Computer Science}, Vol.~\bibinfo{volume}{9898})}, \bibfield{editor}{\bibinfo{person}{Magdalena Ortiz} {and} \bibinfo{person}{Stefan Schlobach}} (Eds.). \bibinfo{publisher}{Springer}, \bibinfo{pages}{61--76}.
\newblock
\urldef\tempurl%
\url{https://doi.org/10.1007/978-3-319-45276-0_6}
\showURL{%
\tempurl}


\bibitem[Dimartino et~al\mbox{.}(2025)]%
        {DBLP:journals/jair/DimartinoWCP25}
\bibfield{author}{\bibinfo{person}{Mirko~Michele Dimartino}, \bibinfo{person}{Peter~T. Wood}, \bibinfo{person}{Andrea Cal{\`{\i}}}, {and} \bibinfo{person}{Alexandra Poulovassilis}.} \bibinfo{year}{2025}\natexlab{}.
\newblock \showarticletitle{Efficient Ontology-Mediated Query Answering: Extending DL-liteR and Linear {ELH}}.
\newblock \bibinfo{journal}{\emph{J. Artif. Intell. Res.}}  \bibinfo{volume}{82} (\bibinfo{year}{2025}), \bibinfo{pages}{851--899}.
\newblock
\href{https://doi.org/10.1613/JAIR.1.16401}{doi:\nolinkurl{10.1613/JAIR.1.16401}}


\bibitem[Dragovic et~al\mbox{.}(2023)]%
        {DBLP:conf/dlog/DragovicO023}
\bibfield{author}{\bibinfo{person}{Nikola Dragovic}, \bibinfo{person}{Cem Okulmus}, {and} \bibinfo{person}{Magdalena Ortiz}.} \bibinfo{year}{2023}\natexlab{}.
\newblock \showarticletitle{Rewriting Ontology-Mediated Navigational Queries into Cypher}. In \bibinfo{booktitle}{\emph{Proceedings of the 36th International Workshop on Description Logics {(DL} 2023) co-located with the 20th International Conference on Principles of Knowledge Representation and Reasoning and the 21st International Workshop on Non-Monotonic Reasoning {(KR} 2023 and {NMR} 2023)., Rhodes, Greece, September 2-4, 2023}} \emph{(\bibinfo{series}{{CEUR} Workshop Proceedings}, Vol.~\bibinfo{volume}{3515})}, \bibfield{editor}{\bibinfo{person}{Oliver Kutz}, \bibinfo{person}{Carsten Lutz}, {and} \bibinfo{person}{Ana Ozaki}} (Eds.). \bibinfo{publisher}{CEUR-WS.org}.
\newblock
\urldef\tempurl%
\url{https://ceur-ws.org/Vol-3515/paper-9.pdf}
\showURL{%
\tempurl}


\bibitem[Eiter et~al\mbox{.}(2012)]%
        {Eiter2012}
\bibfield{author}{\bibinfo{person}{Thomas Eiter}, \bibinfo{person}{Magdalena Ortiz}, \bibinfo{person}{Mantas Simkus}, \bibinfo{person}{Trung{-}Kien Tran}, {and} \bibinfo{person}{Guohui Xiao}.} \bibinfo{year}{2012}\natexlab{}.
\newblock \showarticletitle{Query Rewriting for {Horn}-{SHIQ} Plus Rules}. In \bibinfo{booktitle}{\emph{Proceedings of the Twenty-Sixth {AAAI} Conference on Artificial Intelligence, July 22-26, 2012, Toronto, Ontario, Canada}}, \bibfield{editor}{\bibinfo{person}{J{\"{o}}rg Hoffmann} {and} \bibinfo{person}{Bart Selman}} (Eds.). \bibinfo{publisher}{{AAAI} Press}, \bibinfo{pages}{726--733}.
\newblock
\urldef\tempurl%
\url{http://www.aaai.org/ocs/index.php/AAAI/AAAI12/paper/view/4931}
\showURL{%
\tempurl}


\bibitem[Francis et~al\mbox{.}(2022)]%
        {Francis2022}
\bibfield{author}{\bibinfo{person}{Nadime Francis}, \bibinfo{person}{Am{\'{e}}lie Gheerbrant}, \bibinfo{person}{Paolo Guagliardo}, \bibinfo{person}{Leonid Libkin}, \bibinfo{person}{Victor Marsault}, \bibinfo{person}{Wim Martens}, \bibinfo{person}{Filip Murlak}, \bibinfo{person}{Liat Peterfreund}, \bibinfo{person}{Alexandra Rogova}, {and} \bibinfo{person}{Domagoj Vrgoc}.} \bibinfo{year}{2022}\natexlab{}.
\newblock \showarticletitle{{GPC:} {A} Pattern Calculus for Property Graphs}.
\newblock \bibinfo{journal}{\emph{CoRR}}  \bibinfo{volume}{abs/2210.16580} (\bibinfo{year}{2022}).
\newblock
\showeprint[arxiv]{2210.16580}
\href{https://doi.org/10.48550/arXiv.2210.16580}{doi:\nolinkurl{10.48550/arXiv.2210.16580}}


\bibitem[Francis et~al\mbox{.}(2023)]%
        {Francis2023}
\bibfield{author}{\bibinfo{person}{Nadime Francis}, \bibinfo{person}{Am{\'{e}}lie Gheerbrant}, \bibinfo{person}{Paolo Guagliardo}, \bibinfo{person}{Leonid Libkin}, \bibinfo{person}{Victor Marsault}, \bibinfo{person}{Wim Martens}, \bibinfo{person}{Filip Murlak}, \bibinfo{person}{Liat Peterfreund}, \bibinfo{person}{Alexandra Rogova}, {and} \bibinfo{person}{Domagoj Vrgoc}.} \bibinfo{year}{2023}\natexlab{}.
\newblock \showarticletitle{A Researcher's Digest of {GQL} (Invited Talk)}. In \bibinfo{booktitle}{\emph{26th International Conference on Database Theory, {ICDT} 2023, March 28-31, 2023, Ioannina, Greece}} \emph{(\bibinfo{series}{LIPIcs}, Vol.~\bibinfo{volume}{255})}, \bibfield{editor}{\bibinfo{person}{Floris Geerts} {and} \bibinfo{person}{Brecht Vandevoort}} (Eds.). \bibinfo{publisher}{Schloss Dagstuhl - Leibniz-Zentrum f{\"{u}}r Informatik}, \bibinfo{pages}{1:1--1:22}.
\newblock
\href{https://doi.org/10.4230/LIPIcs.ICDT.2023.1}{doi:\nolinkurl{10.4230/LIPIcs.ICDT.2023.1}}


\bibitem[Francis et~al\mbox{.}(2018)]%
        {DBLP:conf/sigmod/FrancisGGLLMPRS18}
\bibfield{author}{\bibinfo{person}{Nadime Francis}, \bibinfo{person}{Alastair Green}, \bibinfo{person}{Paolo Guagliardo}, \bibinfo{person}{Leonid Libkin}, \bibinfo{person}{Tobias Lindaaker}, \bibinfo{person}{Victor Marsault}, \bibinfo{person}{Stefan Plantikow}, \bibinfo{person}{Mats Rydberg}, \bibinfo{person}{Petra Selmer}, {and} \bibinfo{person}{Andr{\'{e}}s Taylor}.} \bibinfo{year}{2018}\natexlab{}.
\newblock \showarticletitle{Cypher: An Evolving Query Language for Property Graphs}. In \bibinfo{booktitle}{\emph{Proceedings of the 2018 International Conference on Management of Data, {SIGMOD} Conference 2018, Houston, TX, USA, June 10-15, 2018}}, \bibfield{editor}{\bibinfo{person}{Gautam Das}, \bibinfo{person}{Christopher~M. Jermaine}, {and} \bibinfo{person}{Philip~A. Bernstein}} (Eds.). \bibinfo{publisher}{{ACM}}, \bibinfo{pages}{1433--1445}.
\newblock
\href{https://doi.org/10.1145/3183713.3190657}{doi:\nolinkurl{10.1145/3183713.3190657}}


\bibitem[Glimm et~al\mbox{.}(2014)]%
        {DBLP:journals/jar/GlimmHMSW14}
\bibfield{author}{\bibinfo{person}{Birte Glimm}, \bibinfo{person}{Ian Horrocks}, \bibinfo{person}{Boris Motik}, \bibinfo{person}{Giorgos Stoilos}, {and} \bibinfo{person}{Zhe Wang}.} \bibinfo{year}{2014}\natexlab{}.
\newblock \showarticletitle{HermiT: An {OWL} 2 Reasoner}.
\newblock \bibinfo{journal}{\emph{J. Autom. Reason.}} \bibinfo{volume}{53}, \bibinfo{number}{3} (\bibinfo{year}{2014}), \bibinfo{pages}{245--269}.
\newblock
\href{https://doi.org/10.1007/S10817-014-9305-1}{doi:\nolinkurl{10.1007/S10817-014-9305-1}}


\bibitem[Horridge et~al\mbox{.}(2007)]%
        {owlapi}
\bibfield{author}{\bibinfo{person}{Matthew Horridge}, \bibinfo{person}{Sean Bechhofer}, {and} \bibinfo{person}{Olaf Noppens}.} \bibinfo{year}{2007}\natexlab{}.
\newblock \bibinfo{title}{{The OWL API}}.
\newblock
\urldef\tempurl%
\url{https://owlcs.github.io/owlapi/}
\showURL{%
\tempurl}
\newblock
\shownote{Accessed: 2026-05-21}.


\bibitem[Kazakov(2009)]%
        {DBLP:conf/ijcai/Kazakov09}
\bibfield{author}{\bibinfo{person}{Yevgeny Kazakov}.} \bibinfo{year}{2009}\natexlab{}.
\newblock \showarticletitle{Consequence-Driven Reasoning for Horn {SHIQ} Ontologies}. In \bibinfo{booktitle}{\emph{{IJCAI} 2009, Proceedings of the 21st International Joint Conference on Artificial Intelligence, Pasadena, California, USA, July 11-17, 2009}}, \bibfield{editor}{\bibinfo{person}{Craig Boutilier}} (Ed.). \bibinfo{pages}{2040--2045}.
\newblock
\urldef\tempurl%
\url{http://ijcai.org/Proceedings/09/Papers/336.pdf}
\showURL{%
\tempurl}


\bibitem[Kr{\"{o}}tzsch et~al\mbox{.}(2013)]%
        {KrotzschRH13}
\bibfield{author}{\bibinfo{person}{Markus Kr{\"{o}}tzsch}, \bibinfo{person}{Sebastian Rudolph}, {and} \bibinfo{person}{Pascal Hitzler}.} \bibinfo{year}{2013}\natexlab{}.
\newblock \showarticletitle{Complexities of Horn Description Logics}.
\newblock \bibinfo{journal}{\emph{{ACM} Trans. Comput. Log.}} \bibinfo{volume}{14}, \bibinfo{number}{1} (\bibinfo{year}{2013}), \bibinfo{pages}{2:1--2:36}.
\newblock
\href{https://doi.org/10.1145/2422085.2422087}{doi:\nolinkurl{10.1145/2422085.2422087}}


\bibitem[L{\"{o}}hnert et~al\mbox{.}(2025)]%
        {DBLP:conf/dlog/LohnertAOO25}
\bibfield{author}{\bibinfo{person}{Bianca L{\"{o}}hnert}, \bibinfo{person}{Nikolaus Augsten}, \bibinfo{person}{Cem Okulmus}, {and} \bibinfo{person}{Magdalena Ortiz}.} \bibinfo{year}{2025}\natexlab{}.
\newblock \showarticletitle{Query Rewriting for Nested Navigational Queries over Property Graphs}. In \bibinfo{booktitle}{\emph{Proceedings of the 38th International Workshop on Description Logics - {DL} 2025, Opole, Poland, September 3-6, 2025}} \emph{(\bibinfo{series}{{CEUR} Workshop Proceedings})}, \bibfield{editor}{\bibinfo{person}{Lidia Tendera}, \bibinfo{person}{Yazm{\'{\i}}n Ib{\'{a}}{\~{n}}ez{-}Garc{\'{\i}}a}, {and} \bibinfo{person}{Patrick Koopmann}} (Eds.). \bibinfo{publisher}{CEUR-WS.org}.
\newblock
\urldef\tempurl%
\url{https://ceur-ws.org/Vol-4091/paper40.pdf}
\showURL{%
\tempurl}


\bibitem[Löhnert et~al\mbox{.}(2025)]%
        {DBLP:conf/esws/LöhnertAOO25}
\bibfield{author}{\bibinfo{person}{Bianca Löhnert}, \bibinfo{person}{Nikolaus Augsten}, \bibinfo{person}{Cem Okulmus}, {and} \bibinfo{person}{Magdalena Ortiz}.} \bibinfo{year}{2025}\natexlab{}.
\newblock \showarticletitle{Towards Practicable Algorithms for Rewriting Graph Queries beyond {DL-Lite}}. In \bibinfo{booktitle}{\emph{The Semantic Web - 22nt International Conference, {ESWC} 2025, Portorož, Slovenia, June 1-5, 2025 (accepted for publication)}} \emph{(\bibinfo{series}{Lecture Notes in Computer Science})}. \bibinfo{publisher}{Springer}.
\newblock


\bibitem[Löhnert et~al\mbox{.}(2026)]%
        {arxivVersion}
\bibfield{author}{\bibinfo{person}{Bianca Löhnert}, \bibinfo{person}{Nikolaus Augsten}, \bibinfo{person}{Cem Okulmus}, {and} \bibinfo{person}{Magdalena Ortiz}.} \bibinfo{year}{2026}\natexlab{}.
\newblock \showarticletitle{Rewriting Ontology-Mediated Property Graph Queries into GQL (extended version)}.
\newblock \bibinfo{journal}{\emph{CoRR}}  \bibinfo{volume}{abs/2608.20092} (\bibinfo{year}{2026}).
\newblock
\showeprint[arXiv]{2608.20092}
\href{https://doi.org/10.48550/arXiv.2608.20092}{doi:\nolinkurl{10.48550/arXiv.2608.20092}}


\bibitem[Mendes et~al\mbox{.}(2012)]%
        {DBpediaDataset}
\bibfield{author}{\bibinfo{person}{Pablo~N Mendes}, \bibinfo{person}{Max Jakob}, {and} \bibinfo{person}{Christian Bizer}.} \bibinfo{year}{2012}\natexlab{}.
\newblock \bibinfo{booktitle}{\emph{DBpedia: A multilingual cross-domain knowledge base}}.
\newblock \bibinfo{publisher}{European Language Resources Association (ELRA)}.
\newblock


\bibitem[Michail et~al\mbox{.}(2020)]%
        {jgrapht}
\bibfield{author}{\bibinfo{person}{Dimitrios Michail}, \bibinfo{person}{Joris Kinable}, \bibinfo{person}{Barak Naveh}, {and} \bibinfo{person}{John~V. Sichi}.} \bibinfo{year}{2020}\natexlab{}.
\newblock \bibinfo{title}{{JGraphT -- A Java Library for Graph Data Structures and Algorithms}}.
\newblock
\urldef\tempurl%
\url{https://jgrapht.org/}
\showURL{%
\tempurl}
\newblock
\shownote{Accessed: 2026-05-21}.


\bibitem[{Neo4j Contributors}({[n.\,d.]})]%
        {neo4j-gql-conformance}
\bibfield{author}{\bibinfo{person}{{Neo4j Contributors}}.} \bibinfo{year}{[n.\,d.]}\natexlab{}.
\newblock \bibinfo{title}{GQL Conformance - Cypher Manual}.
\newblock
\urldef\tempurl%
\url{https://neo4j.com/docs/cypher-manual/current/appendix/gql-conformance/}
\showURL{%
\tempurl}
\newblock
\shownote{Accessed: {2026-05-23}}.


\bibitem[Ortiz et~al\mbox{.}(2011)]%
        {Ortiz2011}
\bibfield{author}{\bibinfo{person}{Magdalena Ortiz}, \bibinfo{person}{Sebastian Rudolph}, {and} \bibinfo{person}{Mantas Simkus}.} \bibinfo{year}{2011}\natexlab{}.
\newblock \showarticletitle{Query Answering in the Horn Fragments of the Description Logics {SHOIQ} and {SROIQ}}. In \bibinfo{booktitle}{\emph{{IJCAI} 2011, Proceedings of the 22nd International Joint Conference on Artificial Intelligence, Barcelona, Catalonia, Spain, July 16-22, 2011}}, \bibfield{editor}{\bibinfo{person}{Toby Walsh}} (Ed.). \bibinfo{publisher}{{IJCAI/AAAI}}, \bibinfo{pages}{1039--1044}.
\newblock
\href{https://doi.org/10.5591/978-1-57735-516-8/IJCAI11-178}{doi:\nolinkurl{10.5591/978-1-57735-516-8/IJCAI11-178}}


\bibitem[Ostropolski{-}Nalewaja and Rudolph(2024)]%
        {DBLP:conf/kr/Ostropolski-Nalewaja24}
\bibfield{author}{\bibinfo{person}{Piotr Ostropolski{-}Nalewaja} {and} \bibinfo{person}{Sebastian Rudolph}.} \bibinfo{year}{2024}\natexlab{}.
\newblock \showarticletitle{The Sticky Path to Expressive Querying: Decidability of Navigational Queries under Existential Rules}. In \bibinfo{booktitle}{\emph{Proceedings of the 21st International Conference on Principles of Knowledge Representation and Reasoning, {KR} 2024, Hanoi, Vietnam. November 2-8, 2024}}, \bibfield{editor}{\bibinfo{person}{Pierre Marquis}, \bibinfo{person}{Magdalena Ortiz}, {and} \bibinfo{person}{Maurice Pagnucco}} (Eds.).
\newblock
\href{https://doi.org/10.24963/KR.2024/54}{doi:\nolinkurl{10.24963/KR.2024/54}}


\bibitem[Parr(2012)]%
        {antlr}
\bibfield{author}{\bibinfo{person}{Terence Parr}.} \bibinfo{year}{2012}\natexlab{}.
\newblock \bibinfo{title}{{ANTLR (ANother Tool for Language Recognition)}}.
\newblock
\urldef\tempurl%
\url{https://www.antlr.org/}
\showURL{%
\tempurl}
\newblock
\shownote{Accessed: 2026-05-21}.


\bibitem[Savkovic and Calvanese(2012)]%
        {DBLP:conf/ecai/SavkovicC12}
\bibfield{author}{\bibinfo{person}{Ognjen Savkovic} {and} \bibinfo{person}{Diego Calvanese}.} \bibinfo{year}{2012}\natexlab{}.
\newblock \showarticletitle{Introducing Datatypes in DL-Lite}. In \bibinfo{booktitle}{\emph{{ECAI} 2012 - 20th European Conference on Artificial Intelligence. Including Prestigious Applications of Artificial Intelligence {(PAIS-2012)} System Demonstrations Track, Montpellier, France, August 27-31 , 2012}} \emph{(\bibinfo{series}{Frontiers in Artificial Intelligence and Applications})}. \bibinfo{publisher}{{IOS} Press}, \bibinfo{pages}{720--725}.
\newblock
\href{https://doi.org/10.3233/978-1-61499-098-7-720}{doi:\nolinkurl{10.3233/978-1-61499-098-7-720}}


\bibitem[Simons and {Neo4j Contributors}(2020)]%
        {cypherdsl}
\bibfield{author}{\bibinfo{person}{Michael~J. Simons} {and} \bibinfo{person}{{Neo4j Contributors}}.} \bibinfo{year}{2020}\natexlab{}.
\newblock \bibinfo{title}{{The Neo4j Cypher-DSL}}.
\newblock
\urldef\tempurl%
\url{https://github.com/neo4j/cypher-dsl}
\showURL{%
\tempurl}
\newblock
\shownote{Accessed: 2026-05-21}.


\bibitem[Stefanoni et~al\mbox{.}(2014)]%
        {DBLP:journals/jair/StefanoniMKR14}
\bibfield{author}{\bibinfo{person}{Giorgio Stefanoni}, \bibinfo{person}{Boris Motik}, \bibinfo{person}{Markus Kr{\"{o}}tzsch}, {and} \bibinfo{person}{Sebastian Rudolph}.} \bibinfo{year}{2014}\natexlab{}.
\newblock \showarticletitle{The Complexity of Answering Conjunctive and Navigational Queries over {OWL} 2 {EL} Knowledge Bases}.
\newblock \bibinfo{journal}{\emph{J. Artif. Intell. Res.}}  \bibinfo{volume}{51} (\bibinfo{year}{2014}), \bibinfo{pages}{645--705}.
\newblock
\href{https://doi.org/10.1613/JAIR.4457}{doi:\nolinkurl{10.1613/JAIR.4457}}


\bibitem[Steigmiller et~al\mbox{.}(2014)]%
        {DBLP:journals/ws/SteigmillerLG14}
\bibfield{author}{\bibinfo{person}{Andreas Steigmiller}, \bibinfo{person}{Thorsten Liebig}, {and} \bibinfo{person}{Birte Glimm}.} \bibinfo{year}{2014}\natexlab{}.
\newblock \showarticletitle{Konclude: System description}.
\newblock \bibinfo{journal}{\emph{J. Web Semant.}}  \bibinfo{volume}{27-28} (\bibinfo{year}{2014}), \bibinfo{pages}{78--85}.
\newblock
\href{https://doi.org/10.1016/J.WEBSEM.2014.06.003}{doi:\nolinkurl{10.1016/J.WEBSEM.2014.06.003}}


\bibitem[{W3C OWL Working Group}(2012)]%
        {owl2}
\bibfield{author}{\bibinfo{person}{{W3C OWL Working Group}}.} \bibinfo{year}{2012}\natexlab{}.
\newblock \bibinfo{title}{{OWL 2 Web Ontology Language Document Overview (Second Edition)}}.
\newblock
\urldef\tempurl%
\url{https://www.w3.org/TR/owl2-overview/}
\showURL{%
\tempurl}
\newblock
\shownote{W3C Recommendation. Accessed: 2026-05-21}.


\bibitem[Xiao et~al\mbox{.}(2018)]%
        {DBLP:conf/ijcai/XiaoCKLPRZ18}
\bibfield{author}{\bibinfo{person}{Guohui Xiao}, \bibinfo{person}{Diego Calvanese}, \bibinfo{person}{Roman Kontchakov}, \bibinfo{person}{Domenico Lembo}, \bibinfo{person}{Antonella Poggi}, \bibinfo{person}{Riccardo Rosati}, {and} \bibinfo{person}{Michael Zakharyaschev}.} \bibinfo{year}{2018}\natexlab{}.
\newblock \showarticletitle{Ontology-Based Data Access: {A} Survey}. In \bibinfo{booktitle}{\emph{Proceedings of the Twenty-Seventh International Joint Conference on Artificial Intelligence, {IJCAI} 2018, July 13-19, 2018, Stockholm, Sweden}}, \bibfield{editor}{\bibinfo{person}{J{\'{e}}r{\^{o}}me Lang}} (Ed.). \bibinfo{publisher}{ijcai.org}, \bibinfo{pages}{5511--5519}.
\newblock
\href{https://doi.org/10.24963/IJCAI.2018/777}{doi:\nolinkurl{10.24963/IJCAI.2018/777}}


\end{thebibliography}

%%
%% If your work has an appendix, this is the place to put it.
% \appendix

\end{document}